\crefname{algorithm}{mechanism}{mechanisms}
\Crefname{algorithm}{Mechanism}{Mechanisms}
\title{Hallucinating Flows for Optimal Mechanisms}
\author{Marios Mertzanidis\thanks{Purdue University. Emails: mmertzan@purdue.edu, aterzogl@purdue.edu.}
\and Athina Terzoglou\footnotemark[1]}
\date{August 2024}
\begin{document}

\newcommand{\mm}[1]{{\color{magenta} Marios: #1}}
\newcommand{\at}[1]{{\color{purple} Athina: #1}}

\maketitle

\begin{abstract}
    Myerson’s seminal characterization of the revenue-optimal auction for a single item \cite{myerson1981optimal} remains a cornerstone of mechanism design. However, generalizing this framework to multi-item settings has proven exceptionally challenging. Even under restrictive assumptions, closed-form characterizations of optimal mechanisms are rare and are largely confined to the single-agent case \cite{pavlov2011optimal,hart2017approximate, daskalakis2018transport, GIANNAKOPOULOS2018432}, departing from the two-item setting only when prior distributions are uniformly distributed \cite{manelli2006bundling, daskalakis2017strong,giannakopoulos2018sjm}. In this work, we build upon the bi-valued setting introduced by Yao \cite{YAO_BIC_DSIC}, where each item's value has support 2 and lies in $\{a, b\}$. Yao's result provides the only known closed-form optimal mechanism for multiple agents. We extend this line of work along three natural axes, establishing the first closed-form optimal mechanisms in each of the following settings: (i) $n$ i.i.d. agents and $m$ i.i.d. items (ii) $n$ non-i.i.d. agents and two i.i.d. items and (iii) $n$ i.i.d. agents and two non-i.i.d. items. Our results lie at the limit of what is considered possible, since even with a single agent and m bi-valued non-i.i.d. items, finding the optimal mechanism is $\#P$-Hard \cite{daskalakis2014complexity, xi2018soda}.  We finally generalize the discrete analog of a result from~\cite{daskalakis2017strong}, showing that for a single agent with $m$ items drawn from arbitrary (non-identical) discrete distributions, grand bundling is optimal when all item values are sufficiently large. We further show that for any continuous product distribution, grand bundling achieves $\mathrm{OPT} - \epsilon$ revenue for large enough values.
\end{abstract}

\section{Introduction}

Designing truthful mechanisms that maximize revenue is a central objective in the field of mechanism design. In this work, we study the setting where an auctioneer seeks to sell $m$ items to $n$ agents. Each agent has a private valuation for each item, representing the maximum amount they are willing to pay. Agents submit bids for the items and, based on these bids, the auctioneer must determine an allocation and corresponding payments. Since agents are strategic, they will bid in a way that maximizes their expected utility over the randomness of the mechanism and the potential bids of other agents. The auctioneer’s goal is to design a mechanism that maximizes the expected revenue while accounting for the strategic nature of the buyers.

In the single-item setting, Myerson’s seminal work~\cite{myerson1981optimal} provides a complete characterization of the revenue-optimal auction, establishing a foundational result in mechanism design. However, despite the significance of this breakthrough and decades of subsequent research, surprisingly few results exist on optimal mechanisms for settings involving multiple items. A notable line of work by Cai et al.~\cite{CDW_1, CDW_3, CDW_2, CDW_4, CDW_5} introduced a general framework for computing exact or approximately optimal mechanisms in a variety of multidimensional settings. Their approach leverages the design of separation oracles combined with cutting-plane methods to compute revenue-maximizing mechanisms. While they provide some characterizations of the optimal mechanisms, these results are algorithmic in nature and do not yield closed-form solutions.

The difficulty of extending Myerson’s result to multi-item settings is evident from the scarcity of closed-form optimal mechanisms, even in highly constrained single-buyer settings. Manelli and Vincent~\cite{manelli2006bundling} identified conditions under which deterministic menu-type mechanisms are optimal for a single buyer, and applied these results to derive optimal mechanisms for two and three items under specific distributional assumptions. A subsequent line of work gradually expanded these results, characterizing optimal mechanisms for one buyer with two items under increasingly general settings~\cite{pavlov2011optimal, hart2017approximate, GIANNAKOPOULOS2018432, daskalakis2018transport}. These works, beyond their theoretical value, revealed the inherent complexity of optimal mechanisms by demonstrating that, even in simple settings, a continuum of lotteries may be required to extract full revenue.

Giannakopoulos and Koutsoupias \cite{giannakopoulos2018sjm} eventually broke the long-standing three item barrier (originally set by Manelli and Vincent) by employing a duality-based approach to characterize the revenue-maximizing mechanism for up to six items, assuming a single buyer whose valuations are i.i.d. uniformly distributed over $[0,1]$. Later, Daskalakis et al.\cite{daskalakis2017strong} developed a general duality framework that not only aids in deriving optimal mechanisms but also certifies their optimality. This framework unified and extended several prior results, including the case of two items under various distributional assumptions, and was used to show that grand bundling is the optimal mechanism for $m$ independent uniformly distributed items over $[c,c+1]$, when $c$ is sufficiently large. For a comprehensive overview of this line of work and the often counterintuitive structure of optimal multi-item mechanisms, we refer the reader to the following survey~\cite{daskalakis2015multi}.

It is important to note that the aforementioned results pertain to settings with continuous valuation distributions, which introduce technical challenges not present in discrete domains. Nevertheless, these results underscore the inherent complexity and significance of designing revenue-maximizing mechanisms in multi-item environments. The first (and, before our work, only) closed-form characterization of an optimal mechanism for a non-trivial multi-agent, multi-item setting is Yao's~\cite{YAO_BIC_DSIC}. Yao considers the case of $n$ agents and two items, where each agent independently values each item at either a low value~$a$ (with probability~$p$) or a high value~$b$ (with probability~$1 - p$). In this bi-valued setting, all values are drawn i.i.d.\ across agents and items. Yao provides a closed-form optimal mechanism under two distinct incentive compatibility notions, Bayesian Incentive Compatibility (BIC) and Dominant Strategy Incentive Compatibility (DSIC).

\paragraph{Further related work.} Due to the complexity of optimal mechanism design, a major line of research has focused on developing approximately optimal mechanisms~\cite{ChawlaHK07, Shuchi2010Copies, ChawlaMS15, RubinsteinW15, kleinberg2019matroid, BabaioffGN17, babaioff2020simple, KothariMSSW19, yao2014n, cai2019duality, chawla2016mechanism, alaei2014bayesian, bhawalkar2024mechanism, cai2017subadditive}. These mechanisms often extend beyond the additive setting, incorporate diverse allocation constraints, support online agent arrivals, and frequently admit simple forms. Another line of work, in order to mitigate the challenges of multi-dimensional mechanism design, has studied the so-called ``1.5-dimensional'' settings~\cite{devanur2020singleMinded, devanur2017budget, fiat2016fedex}, where an agent’s valuation for multiple items is determined by a common randomness. Optimal mechanisms have also been identified in non-additive settings, such as for unit-demand agents~\cite{haghpanah2015reverse, THIRUMULANATHAN201931}. Our techniques rely heavily on the flow interpretation of the dual linear program. This powerful connection has been previously observed in the literature: Cai et al.~\cite{cai2019duality} introduced the flow-based dual framework to prove approximate optimality of simple mechanisms, while Fu et al.~\cite{fu2018value} used it to show an infinite separation between the revenue achievable by BIC and DSIC mechanisms, even in settings with a single item and two correlated agents. Like many prior works, our results assume independently drawn valuations. This assumption is well-justified, as Hart and Nisan~\cite{HartNisan2013} showed that even for a single agent and two correlated items, the optimal mechanism may require an infinite menu. Subsequent research has only achieved success in correlated settings under assumptions, such as weak correlations~\cite{Cai21, makur2023robustness}. Finally, a recent line of work has explored computing optimal mechanisms using deep learning techniques~\cite{wang2024gemnet, dutting2024optimal}. In this context, analytically derived optimal mechanisms are particularly valuable as benchmarks for evaluating data-driven solutions. We are optimistic that our results will contribute in this direction as well.

\paragraph{Our Contribution.} We are interested in developing a general methodology for obtaining optimal mechanisms in discrete settings via strong duality. In \Cref{section: general Methodology}, we first recall the revenue-maximization primal and its dual interpretation as a flow network, where the nodes represent bidder-type profiles, and the edges carrying flow correspond to incentive constraints. That is a positive flow from $v_i$ to $v_i'$ means that the BIC constraint between $v_i$ and $v_i'$ is tight, and flow control enforces consistency. Our key insight is that any feasible flow on this graph induces a hierarchy-allocation mechanism—one that assigns each type a ``score" equal to its virtual value, and awards the item to the highest non-negative score—whose expected revenue under truthful bidding equals the flow’s objective value. By duality, this value serves as an upper bound on the revenue of any feasible mechanism. At the core of our results is the following (informal) theorem, 
\begin{informal}
    If a flow-induced mechanism is BIC and BIR, then it is revenue-optimal.   
\end{informal}
Our approach thus reduces the task of mechanism design to that of ``hallucinating'' feasible flows that induce truthful, individually rational mechanisms.

We use our developed methodology to extend Yao’s results along three distinct axes in \Cref{sec: Applications}. Each axis addresses a key limitation in the current literature, overcoming specific barriers to closed-form characterizations of revenue-maximizing mechanisms in more general settings. \emph{First axis (\Cref{sec: firstAxis}):} We extend Yao’s setting to $n$ agents and $m$ items, providing the \emph{first closed-form optimal mechanism} for arbitrarily many agents and items in a non-trivial family of instances. \emph{Second axis (\Cref{sec: secondAxis}):} We consider the case of $n$ agents and two items, but relax the assumption of identical distributions across agents. Specifically, each agent $i \in [n]$ has their own probability $p_i$, valuing each item at $a$ with probability $p_i$ and at $b$ with probability $1 - p_i$. This constitutes the \emph{first closed-form optimal mechanism} for a setting with \emph{non-identically distributed} bidders. \emph{Third axis (\Cref{sec: thirdAxis}):} We address the setting with $n$ agents and two \emph{non-i.i.d.} items. This is the \emph{first} mechanism to handle arbitrary numbers of agents and non-identically distributed items, and is the most intricate of the three axes due to the presence of multiple structural subcases. Our results approach the boundary of what is computationally feasible. Daskalakis et al.~\cite{daskalakis2014complexity}, and later Xi et al.~\cite{xi2018soda}, showed that computing the revenue-maximizing mechanism for a \emph{single} agent with $m$ items, where item values are interdependently distributed over just two rational values, is \#P-hard. Consequently, deriving a closed-form, optimal mechanism that generalizes Axis~1 and Axis~3 simultaneously appears highly unlikely.            

To derive our results, we develop several intermediate technical lemmas that may be of independent interest. The most significant and technically challenging component in each of the three settings is the derivation of the optimal dual flow. While we present the correct flows directly to the reader, many of them exhibit surprising structural properties. In fact, there exist numerous candidate flows, and identifying the optimal ones requires narrowing the search space by leveraging symmetries inherent in each setting. Through this process, we iteratively develop intuition through trial and error about how different flow choices influence the truthfulness constraints of the resulting mechanism.

In our analysis of the first axis (\Cref{sec: firstAxis}), the optimal flow arises naturally due to the high degree of symmetry across agents and items. This flow is identical for all agents and remains invariant with respect to the number of participants. In the second axis (\Cref{sec: secondAxis}), where agent distributions differ, such symmetry no longer holds; nonetheless, the optimal flow again turns out to be the same across all agents and independent of their number, a result that was both unexpected and elegant.

Even more surprising is the fact that the third axis (\Cref{sec: thirdAxis}) defies this pattern. Although the agents are identically distributed, the structure of the optimal dual flow varies significantly for different inputs. This setting lies closer to the boundary of computational intractability, as generalizations to $m$ items are known to lead to \#P-hardness, thus partially explaining the intricacies of this setting when compared to the previous ones. Letting $p$ and $q$ denote the probabilities that an agent values the first and second item at $a$, respectively, and $n$ be the number of agents, each instance can be represented as a point in a three-dimensional $(p, q, n)$ space. We identify seven distinct, well-defined regions within this space, each corresponding to a different optimal dual flow. This stands in sharp contrast to the first two axes, where the flow admits a unique, closed-form characterization regardless of the agent count or value distributions.

Although we cannot derive a general characterization for the single-agent, $m$-item setting, our methodology still yields valuable insights. In \Cref{sec: extensions}, we show that for any product discrete distribution $\Dcal$ supported on $\bigtimes_{j \in [m]} [c + v_{j}^{\mathrm{low}},\, c + v_{j}^{\mathrm{high}}]$, the optimal mechanism is grand bundling whenever $c$ is sufficiently large. Moreover, we provide explicit lower bounds on $c$ above which this result holds. This generalizes the discrete analog of a result from Daskalakis et al.~\cite{daskalakis2017strong}, who showed that for a single agent with $m$ independent items uniformly and continuously distributed over $[c, c+1]$, grand bundling is optimal when $c$ is large enough. While our result is in the discrete setting, as opposed to their continuous one, it holds for \emph{arbitrary product distributions}, rather than just the uniform case. To further bridge the gap between discrete and continuous settings, we employ standard discretization techniques. Specifically, we show that for any continuous product distribution supported on $[c, c+1]^m$ and any $\epsilon > 0$, there exists a threshold $c^*$ such that for all $c > c^*$, grand bundling achieves revenue at least $\mathrm{OPT} - \epsilon$.

\section{Preliminaries and Notation}

We consider a seller with $m$ indivisible, heterogeneous items to allocate among $n$ strategic agents. For any positive integer $m$, let $[m]=\{1,2,\ldots,m\}.$ Each agent $i \in [n]$ has a private valuation vector $v_i \in \Rm$, drawn independently from a known distribution $\Dcal_i$ over $\mathbb{R}^m$. We denote the support of $\Dcal_i$ by $\Vcal_i = \mathrm{supp}(\Dcal_i)$ and define the joint valuation space as $\Vcal = \bigtimes_{i \in [n]} \Vcal_i$, and the product distribution as $\Dcal = \bigtimes_{i \in [n]} \Dcal_i$.

We assume agents have \emph{additive preferences}: for any subset of items $S \subseteq [m]$, the value of agent~$i$ with type $v_i$ for the bundle $S$ is $\sum_{j \in S} v_{i,j}$. Agents are also \emph{quasi-linear}, meaning their utility is given by their total value for the allocated bundle minus the payment: $u_i(v_i) = \sum_{j \in S} v_{i,j} - p_i$, where $p_i$ is agent $i$'s payment.

For notational convenience, we write $\Pr[v_i]$ to denote the probability that agent $i$ has valuation $v_i$ under $\Dcal_i$, and $\Pr[v_{i,j}]$ to denote the marginal probability that agent $i$ has value $v_{i,j}$ for item $j$ when $v_i$ is sampled from $\Dcal_i$. Similarly, for a valuation profile $v = (v_1, \dots, v_n) \in \Vcal$, we write $\Pr[v]$ for the joint probability under the product distribution $\Dcal$.

A mechanism $\mathcal{M}$ is fully characterized by its allocation and payment rules, i.e., $\mathcal{M} = (x(\cdot), p(\cdot))$. The allocation rule $x: \Rmn \rightarrow [0,1]^{mn}$ maps reported valuation profiles to fractional allocations, where $x_{i,j}(v)$ denotes the probability that agent~$i$ receives item~$j$ given reported valuations $v = (v_1, \dots, v_n)$. The payment rule $p: \Rmn \rightarrow \mathbb{R}^n$ specifies the payment $p_i(v)$ made by each agent $i$ under report profile $v$. Agents are utility-maximizing: the expected utility of agent~$i$ with true valuation $v_i$ who reports $v_i'$ is defined as
\[
\mathbb{E}[u_i(v_i \rightarrow v_i')] := \mathbb{E}_{v_{-i} \sim \Dcal_{-i}}\left[ \sum_{j \in [m]} v_{i,j} \cdot x_{i,j}(v_i', v_{-i}) - p_i(v_i', v_{-i}) \right],
\]
where the expectation is over the other agents’ types drawn from $\Dcal_{-i}$. In particular, $\mathbb{E}[u_i(v_i \rightarrow v_i)]$ denotes the expected utility of agent~$i$ when reporting truthfully.
A mechanism is \emph{Bayesian Incentive Compatible (BIC)} if truthful reporting is a Bayesian Nash equilibrium—that is, for all $i \in [n]$ and all $v_i, v_i' \in \Vcal_i$,
\[
\mathbb{E}[u_i(v_i \rightarrow v_i)]\ge \mathbb{E}[u_i(v_i \rightarrow v_i')].
\]
A mechanism is \emph{Bayesian Individually Rational (BIR)} if each agent has non-negative expected utility when reporting truthfully: for all $i \in [n]$ and all $v_i \in \Vcal_i$,
\[
\mathbb{E}[u_i(v_i \rightarrow v_i)] \ge 0.
\]
The (expected) \emph{revenue} of a mechanism is defined as the expected sum of payments collected when agents draw their valuations from $\Dcal$ and report them truthfully:
\[
\mathrm{Rev}(\mathcal{M}) = \mathbb{E}_{v \sim \Dcal} \left[ \sum_{i=1}^n p_i(v) \right].
\]
We say that a mechanism is \emph{BIC-IR} if it satisfies both BIC and BIR.

The reduced form of a mechanism $\Mcal$ consists of two functions. The \emph{interim allocation} function, denoted by $\pi$, that specifies for each agent $i \in [n]$ and item $j \in [m]$ the probability $\pi_{i,j}(r_i)$ that agent $i$ receives item $j$ when reporting type $r_i$. This probability is taken over the randomness in the mechanism and over the other agents’ types $v_{-i}$, drawn from the prior $\Dcal_{-i}$. Similarly, the \emph{interim payment} $q_i(r_i)$ is the expected payment agent~$i$ makes when reporting $r_i$, again with the expectation over the randomness of $\mathcal{M}$ and $v_{-i} \sim \Dcal_{-i}$. Formally, the interim allocation is $\pi_{i,j}(r_i) = \EX{v_{-i} \sim \Dcal_{-i}}{x_{i,j}(r_i,v_{-i})}$ and the interim payment $q_i(r_i) = \EX{v_{-i} \sim \Dcal_{-i}}{p_{i}(r_i,v_{-i})}.$  It follows that the expected utility of agent~$i$ with true type $v_i$ who reports $r_i$ is given by:
\[
\mathbb{E}[u_i(v_i \rightarrow r_i)] = \sum_{j \in [m]} v_{i,j} \cdot \pi_{i,j}(r_i) - q_i(r_i).
\]

In the Bayesian setting, it is without loss of generality to assume that a mechanism charges agents according to their interim payments. That is, we can transform any BIC-IR mechanism into another mechanism with the same interim allocation and interim payments by directly charging agent~$i$ an amount $q_i(r_i)$ whenever she reports $r_i$. Thus, from this point forward, we let $p_i(r_i)$ denote both the actual and interim payment made by agent~$i$ upon reporting $r_i$.

A similar simplification does \emph{not} apply to the interim allocation. An interim allocation rule does not automatically correspond to a feasible ex-post allocation—i.e., one that never allocates the same item to multiple agents. Ensuring feasibility at the interim level requires satisfying an exponential number of constraints, known as \emph{Border’s conditions}, introduced by Border~\cite{Border1991, Border2007}. These conditions are both necessary and sufficient for an interim allocation to be implementable.

In this paper, we focus on a special class of mechanisms called \emph{hierarchy allocation mechanisms}. The concept of hierarchy allocation was originally introduced in the context of single-item auctions by Border~\cite{Border1991,Border2007}, and later extended to multi-item settings by Cai, Daskalakis, and Weinberg~\cite{CaiDaskalakisWeinberg2012}. In a hierarchy allocation mechanism, each agent-item pair is assigned a ``score", and the item is allocated to the agents with the highest non-negative score. Formally, we define a version of hierarchy allocation tailored to our setting as follows:

\smallskip
\noindent\fbox{%
    \parbox{0.99\textwidth}{%
\begin{definition}[Multi‑Item Hierarchy Allocation]\label{def:multi‑hierarchy}

In a Hierarchy Allocations mechanism, each agent $i \in [n]$ and item $j \in [m]$ has a function $H_{i,j}: \Vcal_i \rightarrow \R$. Assume that agents reported $v = (v_1, v_2, \dots, v_n)$ and let $H_{j}^{max} = \max_{i \in [n]} H_{i,j}(v_i)$:
\begin{enumerate}
    \item If $H_{j}^{max}<0$, item $j$ remains unallocated.
    \item If $H_{j}^{max}>0$, item $j$ is given uniformly at random to the set of agents where $H_{i,j}(v_i) = H_{j}^{max}$.
    \item If $H_{j}^{max}=0$, with probability $\delta$ item $j$ is given uniformly at random to the set of agents where $H_{i,j}(v_i) = 0$ and with probability $1-\delta$ it remains unallocated.
\end{enumerate}
\end{definition}
    }%
}
\smallskip

By definition, a hierarchy allocation mechanism is \emph{always feasible} in terms of item distribution, as each item is allocated to at most one agent. As a result, our analysis can focus solely on ensuring that the mechanism satisfies BIC and IR constraints.

Finally, we use standard notation throughout. For integer $m \ge 0$ and probability $0 \le p \le 1$, we denote by $B(m, p)$ the binomial distribution with $m$ independent Bernoulli trials, each succeeding with probability $p$. For any real number $x \in \mathbb{R}$, we define $[x]^+ := \max\{x, 0\}$ to denote the positive part of $x$.

\section{The General Methodology} \label{section: general Methodology}

In this section, we present our general methodology for designing revenue-optimal mechanisms in discrete, additive settings. Our approach builds on the duality framework introduced by Cai et al.~\cite{cai2019duality}, which connects the revenue maximization problem to a linear program and its dual, interpreted as a flow problem over the space of types. The central idea is that any feasible dual solution  (i.e., a flow) yields a mechanism with expected revenue equal to the flow’s objective value if the induced mechanism is BIC and IR. Therefore, to design an optimal mechanism, it suffices to identify a flow that (i) satisfies the dual constraints, and (ii) induces a BIC-IR mechanism. This reduction allows us to focus on constructing and analyzing flows rather than directly reasoning about the space of mechanisms. We start by presenting the LP that produces the revenue-maximizing mechanism.

\begin{equation}\label{LP:Primal}
\begin{aligned}
\max_{x, p} \quad &  \sum_{i \in [n]} \sum_{v_i \in \mathcal{V}}  \Pr[v_i] \cdot p_i(v_i) \\
\text{s.t.} \quad 
& \mathbb{E}\left[ u_i(v_i \rightarrow v_i) \right]  \ge  \mathbb{E}\left[ u_i(v_i \rightarrow v_i') \right], 
&& \forall i \in [n],\, (v_i, v'_i) \in \mathcal{V}_i^2 \\
& \mathbb{E}\left[ u_i(v_i \rightarrow v_i) \right] \ge 0, 
&& \forall i \in [n],\, v_i \in \mathcal{V}_i \\
& \sum_{i \in [n]} x_{i,j}(v) \le 1, 
&& \forall v \in \mathcal{V},\, j \in [m] \\
& x \ge 0
\end{aligned}
\end{equation}
By expanding the notation, we can verify that this is indeed a linear program, where the first set of constraints ensures that the mechanism is BIC, the second set of constraints ensures that it is BIR, while the last constraint ensures that each item is allocated only once. Unfortunately, solving the above LP requires exponential time since the allocation rule $x(\cdot)$ introduces exponentially many variables. Cai et al.~\cite{CDW_1} provide an efficiently computable separation oracle for the interim version of the above LP. These results can be used to calculate in polynomial time an $\epsilon$-approximation of the optimal mechanism. However, in this paper, we are not simply interested in calculating the optimal mechanism; our goal is to derive an interpretable, closed-form solution. It is easy to verify that the dual of the above LP is:
\begin{equation}\label{LP:Dual}
\begin{aligned}
\min_{\lambda, \mu, \kappa} \quad & \sum_{v \in \mathcal{V}} \sum_{j \in [m]} \kappa_j(v) \\
\text{s.t.} \quad 
& \Pr[v_i] + \sum_{v'_i \in \mathcal{V}_i} \lambda_i(v'_i, v_i) = \sum_{v'_i \in \mathcal{V}_i} \lambda_i(v_i, v'_i) + \mu_i(v_i), 
&& \forall i \in [n],\, v_i \in \mathcal{V}_i \\
& \kappa_j(v_i, v_{-i}) \ge \Pr[v_i, v_{-i}] \left( 
v_{i,j} - \frac{1}{\Pr[v_i]} \sum_{v'_i \in \mathcal{V}_i} \lambda_i(v'_i, v_i)(v'_{i,j} - v_{i,j}) 
\right), 
&& \forall i \in [n],\, j \in [m],\, (v_i, v_{-i}) \in \mathcal{V} \\
& \lambda, \mu, \kappa \ge 0
\end{aligned}
\end{equation}

We immediately observe that the first set of constraints corresponds to \emph{flow conservation} at each node, enforcing that the net flow into each type equals the net flow out. The objective value, given a feasible flow, is determined by the variables $\kappa$, which are constrained by the second set of inequalities. The right-hand side of these inequalities can be interpreted as \emph{virtual values}, capturing the revenue contribution of each type-flow pair. We formalize this intuition as follows:

\vspace{5pt}
\noindent\fbox{%
    \parbox{\textwidth}{%
        {\bf The Dual Flow.}\\ 
         Consider the dual LP. We can interpret the first constraint as a flow conservation constraint. Through this lens, for each agent $i$ we construct a graph where each type $v_i \in \Vcal_i$ is a node. For each $v_i \in \Vcal_i$ and $v_i' \in \Vcal_i$ if $\lambda_i(v_i', v_i)>0$ then there exists a directed edge from node $v_i'$ to node $v_i$ that carries $\lambda_i(v_i', v_i)$ amount of flow. We also have a source node $s$ that sends $\Pr[v_i]$ flow to each node $v_i \in \Vcal_i$, and a sink node $\bot$ to which each node $v_i \in \Vcal_i$ send $\mu_i(v_i)$ flow. For every feasible solution of the dual we will refer to $\lambda$ and $\mu$ as the Dual Flow.
    }%
}
\begin{definition}[Flow Decomposition]
    Let $\Pcal$ be the set of all simple paths from $s$ to $\bot$. Let $\Pcal_{v_i}$ be the set of all simple paths from $s$ where the first step is $v_i$, so $\Pcal = \bigcup_{v_i \in \Vcal_i}\Pcal_{v_i} $. For any flow from $s$ to $\bot$ represented by our dual variables $\lambda, \mu$, that has no cycles, we can decompose the flow over the simple paths from $s$ to $\bot$ such that for each path $\ell \in \Pcal$ there exists a flow $\xi_{\ell}$ such that (i) $\sum_{\ell_{v_i} \in \Pcal_{v_i}} \xi_{\ell_{v_i}} = \Pr[v_i]$, (ii) $\lambda_i(v'_i, v_i) =  \sum_{\ell \in \Pcal: (v'_i, v_i) \in \ell}  \xi_{\ell}$, and (iii) $\mu_i(v_i) =  \sum_{\ell \in \Pcal: (v_i, \bot) \in \ell}  \xi_{\ell}$.
\end{definition}
The existence of such a flow decomposition for acyclic flows follows from the well-known \emph{Flow Decomposition Theorem}. We are now ready to formally define the mechanism induced by a given feasible flow.

\vspace{5pt}
\noindent\fbox{%
    \parbox{\textwidth}{%
        {\bf The Flow Induced Mechanism.}\\ 
         Consider some feasible Dual Flow $\lambda, \mu$ that can be represented by a DAG. We will define the following mechanism which we will say is induced by the flow $\lambda, \mu$. The allocation rule will be a Hierarchy Allocation rule where for each $i \in [n]$, $j \in [m]$, and $v_i \in \Vcal_i$, $H_{i,j}(v_i) = v_{i,j} - \frac{1}{\Pr[v_i]}\sum_{v'_i \in \mathcal{V}_i} \lambda_i(v'_i, v_i)(v'_{i,j} - v_{i,j})$.  Let $\xi$ be a decomposition of the given flow. For any path $\ell \in \Pcal_{v_i}$ let $\ell = (s, v^1_i, v_{i}^2, \dots, v_{i}^{|\ell|-2},\bot)$, where $v^1_i = v_i$. The payment of agent $i \in [n]$ when reporting $v_i \in \Vcal_i$ will be:
         
         \[p_i(v_i) = \frac{1}{\Pr[v_i]}\sum_{\ell \in \Pcal_{v_i}} \xi_{\ell} \left(\sum_{j \in [m]} v_{i,j} \pi_{i,j}(v_i) - \sum_{z \in [|\ell|-3]}  \sum_{j \in [m]} \left( v_{i,j}^{z}-v_{i,j}^{z+1} \right)\pi_{i,j}(v_{i}^{z+1})\right)\]
         where the interim allocation rules $\pi_{i,j}(v_{i}^{z})$ are given from the hierarchy allocation rule and the distribution over agents' values (assuming truthful bidding).
    }%
}
\smallskip
 
 In analogy with Myerson's~\cite{myerson1981optimal} seminal work, $H_{i,j}(v_i)$ can be viewed as the virtual value of agent $i$ for item $j$ with real value $v_i$. The mechanism allocates the item to the highest non-negative virtual value and then charges an amount that ensures that the expected extracted revenue is equal to the objective of the flow. We crystallize this claim through the following theorem.
\begin{theorem} \label{thm: main Thm}
    For any feasible Dual Flow, if the Flow Induced Mechanism is BIC-IR then it is optimal.
\end{theorem}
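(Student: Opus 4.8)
The plan is to prove optimality by weak LP duality between the primal \eqref{LP:Primal} and its dual \eqref{LP:Dual}. The first observation is that the Flow Induced Mechanism is \emph{feasible} for \eqref{LP:Primal}: its allocation is a Hierarchy Allocation, so it never allocates an item more than once (the constraints $\sum_{i} x_{i,j}(v)\le 1$ and $x\ge 0$ hold automatically), and it is BIC and BIR by hypothesis, which are exactly the remaining primal constraints. Hence $\mathrm{Rev}(\Mcal)\le\opt$, where $\opt$ is the optimal value of \eqref{LP:Primal}. It then remains to exhibit a \emph{dual}-feasible triple $(\lambda,\mu,\kappa)$ whose objective equals $\mathrm{Rev}(\Mcal)$; weak duality gives $\opt\le\mathrm{Rev}(\Mcal)$, and combining the two inequalities yields $\mathrm{Rev}(\Mcal)=\opt$, i.e.\ $\Mcal$ is optimal.

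For the dual solution I would keep the given Dual Flow $(\lambda,\mu)$ — which already satisfies the flow-conservation constraints — and set $\kappa_j(v):=\Pr[v]\,[\max_{i\in[n]}H_{i,j}(v_i)]^+$ for every profile $v$ and item $j$. Since $\Pr[v]\,H_{i,j}(v_i)$ is precisely the right-hand side of the second family of dual constraints (rewritten using $\Pr[v]=\Pr[v_i,v_{-i}]$ and $\Pr[v_i]$), this $\kappa$ is the pointwise-smallest choice satisfying those constraints together with $\kappa\ge 0$, so $(\lambda,\mu,\kappa)$ is dual-feasible, with objective $\sum_v\sum_{j\in[m]}\kappa_j(v)=\sum_v\Pr[v]\sum_{j}[\max_i H_{i,j}(v_i)]^+$.

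The crux is therefore the revenue identity $\mathrm{Rev}(\Mcal)=\sum_v\Pr[v]\sum_{j}[\max_i H_{i,j}(v_i)]^+$. To establish it I would start from $\mathrm{Rev}(\Mcal)=\sum_{i}\sum_{v_i}\Pr[v_i]\,p_i(v_i)$, substitute the path-based payment formula, and reorganize the resulting sum over simple $s$–$\bot$ paths using the Flow Decomposition identities: property~(i) converts $\tfrac{1}{\Pr[v_i]}\sum_{\ell\in\Pcal_{v_i}}\xi_\ell(\,\cdot\,)$ evaluated at the head type into an ordinary average; property~(ii) identifies the decomposed flow along a type-to-type edge $(v_i',v_i)$ with $\lambda_i(v_i',v_i)$; and flow conservation identifies the total flow visiting a node at a non-initial position with $\sum_{v_i'}\lambda_i(v_i',v_i)$. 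Under this regrouping the telescoping term $\sum_z\sum_j(v_{i,j}^z-v_{i,j}^{z+1})\pi_{i,j}(v_i^{z+1})$ collapses, and after collecting terms one is left with
\[
\mathrm{Rev}(\Mcal)=\sum_{i\in[n]}\sum_{v_i\in\Vcal_i}\Pr[v_i]\sum_{j\in[m]}\pi_{i,j}(v_i)\,H_{i,j}(v_i).
\]
Writing $\pi_{i,j}(v_i)=\EX{v_{-i}\sim\Dcal_{-i}}{x_{i,j}(v_i,v_{-i})}$ and exchanging the order of summation gives $\mathrm{Rev}(\Mcal)=\sum_v\Pr[v]\sum_j\bigl(\sum_i x_{i,j}(v)\,H_{i,j}(v_i)\bigr)$, and because the Hierarchy Allocation puts all the allocation probability of item $j$ on agents attaining $\max_i H_{i,j}(v_i)$, and only when this maximum is strictly positive, the inner sum equals $[\max_i H_{i,j}(v_i)]^+$ pointwise (the boundary case $\max_i H_{i,j}(v_i)=0$ contributes zero regardless of the tie-breaking probability $\delta$). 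This is exactly the claimed identity, so $\mathrm{Rev}(\Mcal)$ equals the dual objective computed above, and optimality follows from the weak-duality sandwich.

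I expect the main obstacle to be the bookkeeping in this telescoping/regrouping step: in each simple path $\ell=(s,v_i^1,\dots,v_i^{|\ell|-2},\bot)$ one must carefully separate the edges internal to $\Vcal_i$ from those incident to $s$ or $\bot$, justify interchanging the sum over paths with the sums over path positions $z$, and use the hypothesis that the Dual Flow is representable by a DAG to guarantee that the Flow Decomposition exists and that every path is finite. Once the identity $\mathrm{Rev}(\Mcal)=\sum_{i}\sum_{v_i}\Pr[v_i]\sum_j\pi_{i,j}(v_i)H_{i,j}(v_i)$ is in hand, the remaining pieces — primal feasibility, the choice of $\kappa$, and invoking weak duality — are routine.
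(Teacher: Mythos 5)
Your proof is correct and follows essentially the same approach as the paper's: both establish the revenue identity $\mathrm{Rev}(\Mcal)=\sum_i\sum_{v_i}\Pr[v_i]\sum_j\pi_{i,j}(v_i)H_{i,j}(v_i)$ by substituting the path-based payment formula and regrouping via the flow-decomposition identities, then observe that the hierarchy allocation makes this equal the dual objective. Your version is slightly more explicit about the weak-duality scaffolding (naming the pointwise-minimal $\kappa$), but the substance is the same.
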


\begin{proof}
    
Let $\lambda, \mu$ be the Dual Flow and $\xi$ its path decomposition. Assuming truthful bidding the revenue is:
\begin{align*}
    \mathrm{Rev} &= \sum_{i \in [n]} \sum_{v_i \in \Vcal_i} \Pr[v_i] p_i(v_i) \\
    &= \sum_{i \in [n]} \sum_{v_i \in \Vcal_i} \Pr[v_i]\frac{1}{\Pr[v_i]}\sum_{\ell \in \Pcal_{v_i}} \xi_{\ell} \left(\sum_{j \in [m]} v_{i,j} \pi_{i,j}(v_i) - \sum_{z \in [|\ell|-3]}  \sum_{j \in [m]} \left( v_{i,j}^{z}-v_{i,j}^{z+1} \right)\pi_{i,j}(v_{i}^{z+1})\right)\\
    &= \sum_{i \in [n]} \sum_{v_i \in \Vcal_i} \sum_{\ell \in \Pcal_{v_i}} \xi_{\ell} \left( \sum_{j \in [m]} v_{i,j} \pi_{i,j}(v_i) - \sum_{z \in [|\ell|-3]}  \sum_{j \in [m]} (v_{i,j}^{z} - v_{i,j}^{z+1}) \pi_{i,j}(v_{i}^{z+1})\right)\\
    &= \sum_{i \in [n]} \sum_{v_i \in \Vcal_i} \sum_{\ell \in \Pcal_{v_i}} \xi_{\ell} \left( \sum_{j \in [m]} v_{i,j} \sum_{v_{-i} \in \Vcal_{-i}} \Pr[v_{-i}]x_{i,j}(v_i, v_{-i}) \right.\\
    &\phantom{\sum_{i \in [n]} \sum_{v_i \in \Vcal_i} \sum_{\ell \in \Pcal_{v_i}} \xi_{\ell}(\sum_{j \in [m]} v_{i,j}} \left.- \sum_{z \in [|\ell|-3]}  \sum_{j \in [m]} (v_{i,j}^{z}-v_{i,j}^{z+1}) \sum_{v_{-i} \in \Vcal_{-i}} \Pr[v_{-i}]x_{i,j}(v_{i}^{z+1}, v_{-i})\right)\\
    &= \sum_{i \in [n]} \sum_{v_i \in \Vcal_i} \sum_{\ell \in \Pcal_{v_i}} \sum_{v_{-i} \in \Vcal_{-i}} \xi_{\ell} \left( \sum_{j \in [m]} v_{i,j} \frac{\Pr[v_i, v_{-i}]}{\Pr[v_i]}x_{i,j}(v_i, v_{-i}) \right.\\
    & \phantom{\sum_{i \in [n]} \sum_{v_i \in \Vcal_i} \sum_{\ell \in \Pcal_{v_i}} \sum_{v_{-i} \in \Vcal_{-i}} \xi_{\ell} (\sum_{j \in [m]} v_{i,j}} \left.- \sum_{z \in [|\ell|-3]}  \sum_{j \in [m]} (v_{i,j}^{z}-v_{i,j}^{z+1})  \frac{\Pr[v_{i}^{z+1}, v_{-i}]}{\Pr[v_{i}^{z+1}]} x_{i,j}(v_{i}^{z+1}, v_{-i})\right)\\
     &= \sum_{i \in [n]} \sum_{v_i \in \Vcal_i} \sum_{j \in [m]} \sum_{v_{-i} \in \Vcal_{-i}}  \left(  \sum_{\ell \in \Pcal_{v_i}} \xi_{\ell} \cdot v_{i,j} \frac{\Pr[v_i, v_{-i}]}{\Pr[v_i]}x_{i,j}(v_i, v_{-i}) \right. \\
     &\phantom{\sum_{i \in [n]} \sum_{v_i \in \Vcal_i} \sum_{j \in [m]} \sum_{v_{-i} \in \Vcal_{-i}} (  \sum_{\ell \in \Pcal_{v_i}} \xi_{\ell} \cdot v_{i,j} } \left.- \sum_{\ell \in \Pcal_{v_i}} \xi_{\ell} \sum_{z \in [|\ell|-3]}   (v_{i,j}^{z}-v_{i,j}^{z+1})  \frac{\Pr[v_{i}^{z+1}, v_{-i}]}{\Pr[v_i]} x_{i,j}(v_{i}^{z+1}, v_{-i})\right)\\
    &= \sum_{i \in [n]} \sum_{v_i \in \Vcal_i} \sum_{j \in [m]} \sum_{v_{-i} \in \Vcal_{-i}}  \left(  \sum_{\ell \in \Pcal_{v_i}} \xi_{\ell} \cdot v_{i,j} \frac{\Pr[v_i, v_{-i}]}{\Pr[v_i]}x_{i,j}(v_i, v_{-i}) \right.  \tag{Rearranging}\\
    & \phantom{\sum_{i \in [n]} \sum_{v_i \in \Vcal_i} \sum_{j \in [m]} \sum_{v_{-i} \in \Vcal_{-i}} (  \sum_{\ell \in \Pcal_{v_i}} \xi_{\ell} \cdot v_{i,j} } \left.- \frac{\Pr[v_i, v_{-i}]}{\Pr[v_i]}\sum_{v_i' \in \Vcal_i} \sum_{\ell \in \Pcal: (v_i',v_i) \in \ell}   x_{i,j}(v_i, v_{-i}) \xi_{\ell} (v_{i,j}'-v_{i,j})   \right) \\
    &= \sum_{i \in [n]} \sum_{v_i \in \Vcal_i} \sum_{j \in [m]} \sum_{v_{-i} \in \Vcal_{-i}} \Pr[v_i, v_{-i}]x_{i,j}(v_i, v_{-i}) \left(  \sum_{\ell \in \Pcal_{v_i}} \xi_{\ell} \cdot v_{i,j} \frac{1}{\Pr[v_i]} - \frac{1}{\Pr[v_i]}\sum_{v_i' \in \Vcal_i} \sum_{\ell \in \Pcal: (v_i',v_i) \in \ell} \xi_{\ell} (v_{i,j}'-v_{i,j})   \right) \\
    &= \sum_{i \in [n]} \sum_{v_i \in \Vcal_i} \sum_{j \in [m]} \sum_{v_{-i} \in \Vcal_{-i}} \Pr[v_i, v_{-i}]x_{i,j}(v_i, v_{-i}) \left( v_{i,j}  - \frac{1}{\Pr[v_i]}\sum_{v_i' \in \Vcal_i} \sum_{\ell \in \Pcal: (v_i',v_i) \in \ell} \xi_{\ell} (v_{i,j}'-v_{i,j})   \right) \tag{$ \sum_{\ell \in \Pcal_{v_i}} \xi_{\ell} = \Pr[v_i]$} \\
    &= \sum_{i \in [n]} \sum_{v_i \in \Vcal_i} \sum_{j \in [m]} \sum_{v_{-i} \in \Vcal_{-i}} \Pr[v_i, v_{-i}]x_{i,j}(v_i, v_{-i}) \left( v_{i,j}  - \frac{1}{\Pr[v_i]}\sum_{v_i' \in \Vcal_i} \lambda_i(v_i', v_i) (v_{i,j}'-v_{i,j})   \right) \tag{$\sum_{\ell \in \Pcal: (v_i',v_i) \in \ell} \xi_{\ell} = \lambda_i(v_i', v_i) $} \\\\
\end{align*}

However notice that given the Hierarchy allocation mechanism we have chosen, for every profile of bids $v \in \Vcal$ the item $j \in [m]$ will be allocated to the agents with the maximum virtual value for that item, $H_{i,j}(v_i) = v_{i,j} - \frac{1}{\Pr[v_i]}\sum_{v'_i \in \mathcal{V}_i} \lambda_i(v'_i, v_i)(v'_{i,j} - v_{i,j})$. If this maximum is negative, then the item will remain unallocated. Thus the expected revenue of the flow induced mechanism will be equal to the objective of our Dual LP. Since it is BIC-IR and each item is allocated to at most one agent the mechanism is feasible and thus it is optimal with the dual flow acting as a certificate of optimality.
\end{proof}

\section{Applications} \label{sec: Applications}

In what follows, we instantiate our general methodology in four distinct settings, deriving optimal mechanisms and revealing key structural properties in each.

\subsection{First Axis}\label{sec: firstAxis}

In this section, we present our analysis for the first axis. Recall that in this setting, there are $n$ agents and $m$ items. Each agent values each item independently at $a$ with probability $p$, and at $b$ with probability $1 - p$, with $a<b$. We begin by describing the proposed mechanism. We will then prove its optimality by showing that it is induced by a particular dual flow and satisfies truthfulness. Before stating the mechanism, we introduce a quantity that will be used extensively throughout the analysis: for any valuation vector $v_i$, we define $k_{v_i}$ as the number of items agent~$i$ values at the high value~$b$. Formally,
\[
k_{v_i} := \left| \left\{ j \in [m] \;\middle|\; v_{i,j} = b \right\} \right|,
\]

\begin{algorithm}[ht]
\caption{First Axis}\label{mech: firstAxis}
\begin{algorithmic}
\State For any agent $i \in [n]$, item $j \in [m]$, and valuation vector $v_i \in \Vcal_i$, if $v_{i,j} = b$ then $H_{i,j}(v_i) = b$. Else, if $v_{i,j} = a$ then:
\[H_{i,j}(v_i) = a - \frac{1}{(1-p)^{k_{v_i}} p^{m-k_{v_i}} } \cdot\frac{1}{(m-k_{v_i})\binom{m}{k_{v_i}}}\cdot \sum_{z=k_{v_i}+1}^m\binom{m}{z}(1-p)^zp^{m-z} \cdot (b-a)\]
\State Let $k^* = \arg \min_{k \in [0,m]}\left\{a > \frac{1}{(1-p)^k p^{m-k} } \cdot\frac{1}{(m-k)\binom{m}{k}}\cdot \sum_{z=k+1}^m\binom{m}{z}(1-p)^zp^{m-z} \cdot (b-a)\right\}$. Then the payment of agent $i$, for reporting $v_i \in \Vcal_i$ where $k_{v_i} \ge k^*$ is:

\begin{align*}
    p_{i}(v_i) &= k_{v_i}b\sum_{z=0}^{n-1}\left(\frac{1}{z+1}\Pr[B(n-1,1-p)=z]\right)\\
    &+(m-k_{v_i})ap^{n-1} \frac{\left(\Pr[B(m-1,1-p) \le k_{v_i}] \right)^n- \left( \Pr[B(m-1,1-p)<k_{v_i}] \right)^n}{n\Pr[B(m-1,1-p)=k_{v_i}]}\\
    &-(b-a)p^{n-1}\sum_{\tau = k^*}^{k_{v_i}-1}\frac{\left(\Pr[B(m-1,1-p)\le \tau] \right)^n- \left( \Pr[B(m-1,1-p)<\tau] \right)^n}{n\Pr[B(m-1,1-p)=\tau]}
\end{align*}
else if $k_{v_i} < k^*$,  $p_i(v_i) = k_{v_i}b\sum_{z=0}^{n-1}\left(\frac{1}{z+1}\Pr[B(n-1,1-p)=z]\right)$.
\State
\end{algorithmic}
\end{algorithm}

It is important to note that precomputing the full allocation and payment rules for all possible valuation profiles requires exponential time in the number of items. However, given a specific bid profile, the allocation and payment can be computed efficiently in polynomial time, making the mechanism practical for implementation. We proceed to describe the dual flow that induces the above mechanism.

\begin{definition}(\Cref{mech: firstAxis} Flow) \label{def: firstAxis flow}
Let for every $i \in [n]$ and every $v_i \in \Vcal_i$, $S(v_i) = \{v'_i \in \Vcal_i: \exists j \text{ such that } v_{i,j}=b, v'_{i,j}=a \text{ and } \forall j'\neq j, v_{i,j'}=v'_{i,j'}\} $ (i.e. the set of valuations obtained from $v_i$ by changing the value of exactly one item from $b$ to $a$). Then we define the \Cref{mech: firstAxis} Flow through the following recursive relationship:
\begin{itemize}
    \item For all  $v_i \in \Vcal_i$, $\mu_i(v_i) = \mathbf{1}[v_i = [a,a,\dots,a]]$, where $\mathbf{1}[\cdot]$ is the indicator function.
    \item For every $v_i \in \Vcal_i$ let. Then for any $v_i' \in \Vcal_i$, 
    
    \[\lambda_i(v_i,v_i')= \mathbf{1}[v_i' \in S(v_i) \;\&\; v_i \neq v_i'] \cdot \frac{1}{k_{v_i}}\left((1-p)^{k_{v_i}}p^{m-k_{v_i}}+ \sum\limits_{\tilde{v}_i \in \Vcal_i: v_i \in S(\tilde{v}_i)} \lambda_i(\tilde{v}_i, v_i) \right).\]
\end{itemize}
\end{definition}

To better understand the structure of the flow, it is helpful to visualize it as being organized into \emph{layers}, where layer~$k$ contains all valuation vectors in $\Vcal_i$ with exactly $k$ items valued at $b$ (and the remaining $m - k$ items valued at $a$). For any $v_i \in \Vcal_i$ in layer $k$, we define $S(v_i)$ to be the set of valuation vectors in layer $k-1$ that differ from $v_i$ in exactly one coordinate. We refer to $S(v_i)$ as the \emph{children} of $v_i$, and to $\{\tilde{v}_i \in \Vcal_i : v_i \in S(\tilde{v}_i)\}$ as its \emph{parents}. It is easy to see that $|S(v_i)| = k$, since there are $k$ items in $v_i$ valued at $b$ that can be changed to $a$ to form a child. Each node $v_i$ receives $(1 - p)^k p^{m - k}$ units of flow from the source (corresponding to its probability under the product distribution), as well as flow from its parents in layer $k + 1$, given by $\sum_{\tilde{v}_i \in \Vcal_i : v_i \in S(\tilde{v}_i)} \lambda_i(\tilde{v}_i, v_i)$. It then distributes its total incoming flow equally among all $k$ children in $S(v_i)$. \Cref{fig:4item_flow} illustrates these flow relationships for the case of $m = 4$ items.

\begin{figure}[ht]
    \centering
    \includegraphics[width=0.75\textwidth]{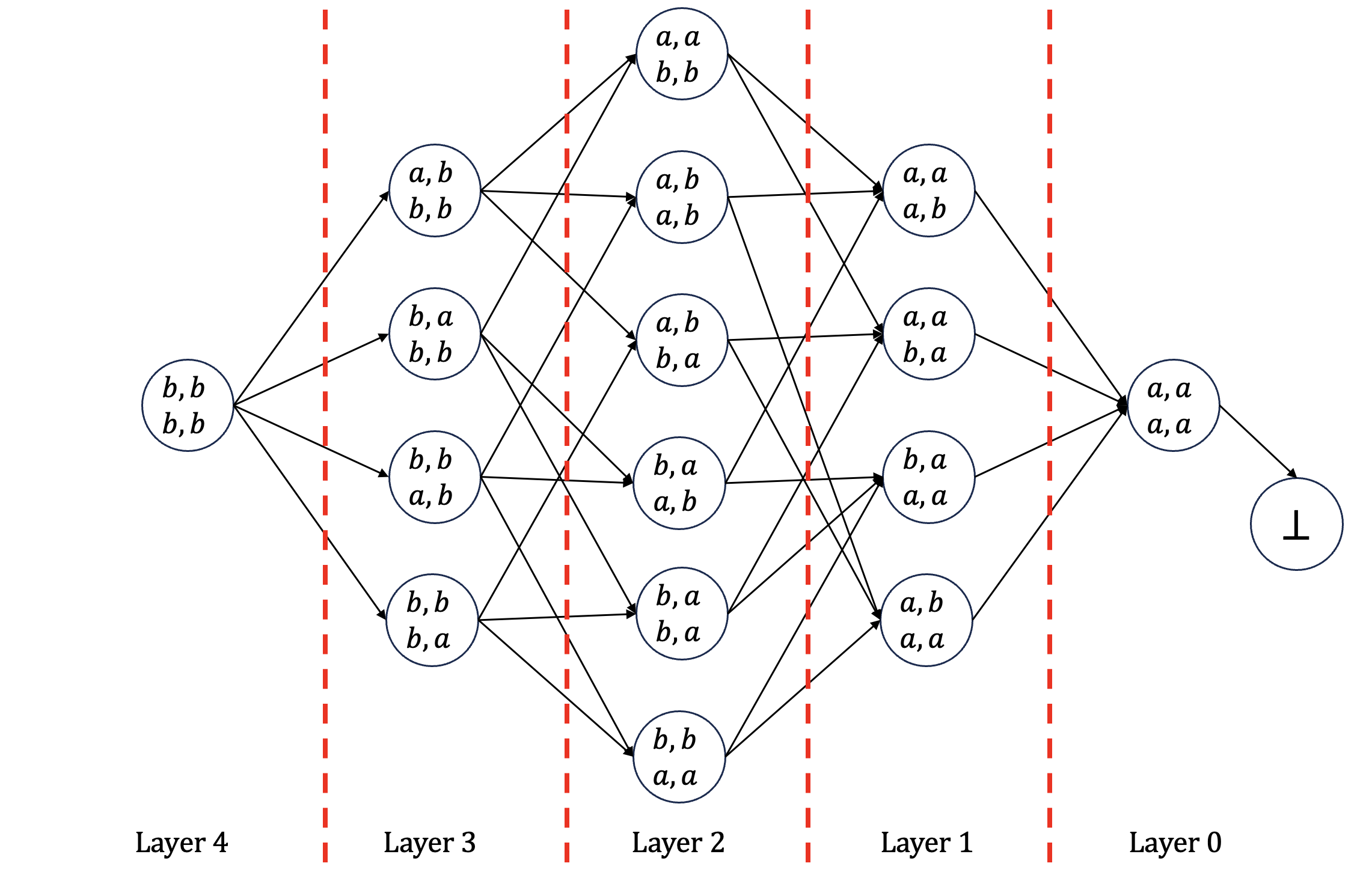}
    \caption{Flow relationships and layers when $m=4$.}
    \label{fig:4item_flow}
\end{figure}
Before we proceed, we need to present a few technical lemmas. The proofs are deferred to the~\Cref{apx: proofs-Axis}.

\begin{lemma} \label{lemma: technical sum binom}
$\sum_{j=1}^{n} \frac{1}{j}\binom{n-1}{j-1} q^{j-1}p^{n-j} = \frac{(p+q)^n- p^n}{nq}$
\end{lemma}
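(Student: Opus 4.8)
The plan is to prove $\sum_{j=1}^{n} \frac{1}{j}\binom{n-1}{j-1} q^{j-1}p^{n-j} = \frac{(p+q)^n- p^n}{nq}$ by an integration (or, equivalently, a telescoping/generating-function) argument. First I would rewrite the left-hand side by pulling out a factor of $\frac{1}{n}$: using the elementary identity $\frac{1}{j}\binom{n-1}{j-1} = \frac{1}{n}\binom{n}{j}$, the sum becomes $\frac{1}{n}\sum_{j=1}^n \binom{n}{j} q^{j-1} p^{n-j}$. Multiplying and dividing by $q$, this is $\frac{1}{nq}\sum_{j=1}^n \binom{n}{j} q^{j} p^{n-j}$. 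Now the sum $\sum_{j=1}^n \binom{n}{j} q^j p^{n-j}$ is just the binomial expansion of $(p+q)^n$ with the $j=0$ term (which equals $p^n$) removed, so it equals $(p+q)^n - p^n$. Substituting back gives exactly $\frac{(p+q)^n - p^n}{nq}$, as desired.

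The only genuine content is the identity $\frac{1}{j}\binom{n-1}{j-1} = \frac{1}{n}\binom{n}{j}$, which I would either cite as standard or verify in one line from the factorial definitions: $\frac{1}{j}\binom{n-1}{j-1} = \frac{(n-1)!}{j\,(j-1)!\,(n-j)!} = \frac{(n-1)!}{j!\,(n-j)!} = \frac{1}{n}\cdot\frac{n!}{j!\,(n-j)!} = \frac{1}{n}\binom{n}{j}$. Everything else is bookkeeping with the binomial theorem, so there is essentially no obstacle here; the statement should be presented with the short chain of equalities above. One minor care point is the implicit assumption $q \neq 0$ so that dividing by $q$ is legitimate — if $q = 0$ both sides are interpreted by continuity (the left side has only the $j=n$ term surviving as $q\to 0$ when... actually for $q=0$ the left side is $\frac{1}{n}p^{n-1}\cdot$... ) so I would simply note the identity holds for $q \neq 0$ and extends by continuity, or just assume $q > 0$ as will be the case in all applications (where $q = 1-p \in (0,1)$).

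Thus the write-up is three or four lines: state $\frac{1}{j}\binom{n-1}{j-1} = \frac{1}{n}\binom{n}{j}$, apply it termwise, recognize the resulting sum as $(p+q)^n - p^n$ after inserting the compensating factor of $q$, and divide. I do not anticipate any real difficulty — this is a routine lemma whose only purpose is to be invoked later when manipulating the binomial-distribution expressions appearing in the payment formula of Mechanism~\ref{mech: firstAxis}.
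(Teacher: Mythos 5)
Your proof is correct and follows essentially the same route as the paper: rewrite $\frac{1}{j}\binom{n-1}{j-1}$ as $\frac{1}{n}\binom{n}{j}$, pull a factor of $\frac{1}{q}$ to complete the binomial expansion of $(p+q)^n$, and subtract the missing $j=0$ term $p^n$. The brief remark about $q\neq 0$ is a harmless extra precaution not made in the paper but reasonable.
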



\begin{lemma}\label{lemma: Binomial Inequality}
    $\pr{B(m, p) = k | B(m, p) \ge k} \ge 1- \frac{p(m-k)}{(1-p)k}$.
\end{lemma}

Using the lemma above, we can now show that the values $ H_{i,j}(v_i) $ for items valued at $ a $ decrease as we move to lower layers. In other words, if two agents report value $ a $ for the same item, the one who reports more $ b $'s for the remaining items is given priority. This property is central to the structure of the flow: it allows us to incentivize bidders to report more $ b $'s, even at higher cost, by offering them preferential treatment for other items. Proving this behavior is nontrivial because, up to the middle layer, two opposing forces are at play. On the one hand, the total amount of flow entering each successive layer increases. On the other hand, the number of edges carrying that flow also grows, making it unclear whether the flow per edge increases as the layers decrease. The technical lemma resolves this tension by allowing us to establish that the flow per edge indeed increases, enabling the desired incentive structure. The proof of the following lemma is at~\Cref{apx: proofs-Axis}.

\begin{lemma}\label{lemma: tech-f(k)}
    Let $f(k) = a - \frac{1}{(1-p)^k p^{m-k} } \cdot\frac{1}{(m-k)\binom{m}{k}}\cdot \sum_{z=k+1}^m\binom{m}{z}(1-p)^zp^{m-z} \cdot (b-a)$. Then if $k\ge k'$, $f(k)\ge f(k')$ for any choice of $p \in [0,1]$, and $m \in \mathbb{Z}$.
\end{lemma}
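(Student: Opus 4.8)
The plan is to reduce the claimed monotonicity of $f$ to a one-step inequality between consecutive ``layers'' and then close it using \Cref{lemma: Binomial Inequality}. Write $q = 1-p$ and observe that $\binom{m}{z}(1-p)^z p^{m-z} = \Pr[B(m,q)=z]$; hence
\[ f(k) = a - (b-a)\,g(k), \qquad g(k) := \frac{\Pr[B(m,q)\ge k+1]}{(m-k)\,\Pr[B(m,q)=k]}. \]
Since $b>a$, it suffices to prove that $g$ is non-increasing on the range $k\in\{0,1,\dots,m-1\}$ (the only values relevant to \Cref{mech: firstAxis}, since $H_{i,j}(v_i)$ is evaluated via this formula only when item $j$ is valued at $a$, i.e.\ $k_{v_i}\le m-1$), and because $k$ is integer-valued it is enough to show $g(k+1)\le g(k)$ for every $0\le k\le m-2$. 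The boundary cases $p\in\{0,1\}$ place all probability mass on a single layer and the formula is degenerate there, so I would assume $0<p<1$ throughout, in which case every quantity below is strictly positive and well defined.

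Next I would abbreviate $A_k := \Pr[B(m,q)\ge k+1]$ and $B_k := \Pr[B(m,q)=k]$, so that $g(k) = A_k/((m-k)B_k)$, and compute the ratio of consecutive terms. Using the standard pmf recursion $B_{k+1}/B_k = \tfrac{m-k}{k+1}\cdot\tfrac{q}{p}$ together with $A_{k+1} = A_k - B_{k+1}$, one gets
\[ \frac{g(k+1)}{g(k)} = \frac{A_{k+1}}{A_k}\cdot\frac{m-k}{m-k-1}\cdot\frac{B_k}{B_{k+1}} = \frac{A_{k+1}}{A_k}\cdot\frac{k+1}{m-k-1}\cdot\frac{p}{q}. \]
So the target $g(k+1)\le g(k)$ is equivalent to the bound $\dfrac{A_{k+1}}{A_k}\le \dfrac{q\,(m-k-1)}{p\,(k+1)}$.

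Then I would rewrite the left side as a conditional probability: $\dfrac{A_{k+1}}{A_k} = 1 - \dfrac{B_{k+1}}{A_k} = 1 - \Pr[B(m,q)=k+1 \mid B(m,q)\ge k+1]$. Applying \Cref{lemma: Binomial Inequality} with success probability $q$ and threshold $k+1$ yields $\Pr[B(m,q)=k+1 \mid B(m,q)\ge k+1] \ge 1 - \tfrac{q(m-k-1)}{(1-q)(k+1)} = 1 - \tfrac{q(m-k-1)}{p(k+1)}$, hence $\dfrac{A_{k+1}}{A_k}\le \dfrac{q(m-k-1)}{p(k+1)}$, which is exactly the required inequality. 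Substituting back gives $g(k+1)/g(k)\le 1$ for all $0\le k\le m-2$, and telescoping yields $g(k)\le g(k')$ whenever $k\ge k'$; since $f(k)=a-(b-a)g(k)$ with $b-a>0$, this gives $f(k)\ge f(k')$, as claimed.

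The main obstacle is precisely the tension flagged before the statement: as one moves toward the middle layers, the total flow routed into layer $k$ increases, but the number of edges carrying it also increases, so the flow-per-edge (equivalently $g$) need not move monotonically for elementary reasons. In the reformulation above this tension is captured exactly by whether $\Pr[B(m,q)=k+1\mid B(m,q)\ge k+1]$ is large enough, which is the content of \Cref{lemma: Binomial Inequality}; once that lemma is available the remaining steps are routine manipulations of binomial pmf ratios. The only side issues needing a word of care are the degenerate parameters $p\in\{0,1\}$ and the excluded endpoint $k=m$ (where the factor $m-k$ vanishes), neither of which arises in \Cref{mech: firstAxis}.
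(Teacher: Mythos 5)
Your argument is correct and takes essentially the same route as the paper's: both reduce to the consecutive‑step inequality $f(k)\ge f(k-1)$ (you write it as $g(k+1)\le g(k)$), rewrite the relevant ratio as $1-\Pr[B(m,q)=j\mid B(m,q)\ge j]$, and close it by invoking \Cref{lemma: Binomial Inequality}. Your ratio/pmf‑recursion phrasing is a cleaner presentation of the same algebra, and your remarks about the degenerate endpoints $p\in\{0,1\}$ and the excluded index $k=m$ are correct and slightly more careful than the paper's.
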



Now we are ready to show the first main result of this section.

\begin{lemma}\label{lemma: first axis mechanism induced by flow}
    \Cref{mech: firstAxis} is induced by the flow defined in \Cref{def: firstAxis flow}.
\end{lemma}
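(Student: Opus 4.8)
The plan is to verify the two components of "induced by the flow" separately: (i) that the hierarchy scores $H_{i,j}(v_i)$ in Mechanism 1 agree with the formula $H_{i,j}(v_i) = v_{i,j} - \frac{1}{\Pr[v_i]}\sum_{v_i'}\lambda_i(v_i',v_i)(v_{i,j}'-v_{i,j})$ dictated by the Flow Induced Mechanism, and (ii) that the payments $p_i(v_i)$ in Mechanism 1 equal the path-decomposition payment formula. Step (i) is the more self-contained piece. If $v_{i,j}=b$ then no parent $v_i'$ of $v_i$ (in $S(\tilde v_i)$-sense) can differ from $v_i$ by flipping coordinate $j$ downward — parents flip an $a$ up to $b$, so $v'_{i,j}-v_{i,j}\le 0$ only when $v'_{i,j}=a,v_{i,j}=b$, which never happens for incoming edges into a $b$-coordinate — hence the correction term vanishes and $H_{i,j}(v_i)=b$, matching the mechanism. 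If $v_{i,j}=a$, the only incoming-flow neighbours $v_i'$ with $v_{i,j}'\ne v_{i,j}$ are the parents obtained by flipping coordinate $j$ from $a$ to $b$; for each such parent $v'_{i,j}-v_{i,j}=b-a$, so the correction term is $\frac{(b-a)}{\Pr[v_i]}\sum_{\text{parents flipping }j}\lambda_i(v_i',v_i)$. I would then compute this parent-flow sum in closed form using the recursive definition of $\lambda$, by induction on the layer index $k_{v_i}$ from the top ($k=m$) downward: unwinding the recursion gives that the total flow into a layer-$k$ node is $\frac{\binom{m}{k}}{\binom{m}{k}}$-weighted and telescopes to $\sum_{z=k+1}^m \binom{m}{z}(1-p)^z p^{m-z}$ up to combinatorial factors, and by symmetry each of the $k$ parents contributes a $\frac{1}{k}$ share of one more layer up; dividing by $\Pr[v_i]=(1-p)^{k_{v_i}}p^{m-k_{v_i}}$ and by the appropriate binomials yields exactly the expression $\frac{1}{(1-p)^{k_{v_i}}p^{m-k_{v_i}}}\cdot\frac{1}{(m-k_{v_i})\binom{m}{k_{v_i}}}\sum_{z=k_{v_i}+1}^m\binom{m}{z}(1-p)^zp^{m-z}\cdot(b-a)$ in Mechanism 1. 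Lemma 4.1 is the natural tool for collapsing the intermediate binomial sums here.

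For step (ii), I would first compute the interim allocations $\pi_{i,j}(v_i)$ implied by the hierarchy rule. By Lemma 4.4, the scores $H_{i,j}$ for $a$-coordinates are monotone in the layer index, and the $b$-coordinate score is the constant $b$, which is the global maximum; so item $j$ with $v_{i,j}=b$ is split uniformly among all agents bidding $b$ on $j$ (none of whom can be beaten), giving the $k_{v_i}b\sum_{z=0}^{n-1}\frac{1}{z+1}\Pr[B(n-1,1-p)=z]$ term in the payment once we pass through the payment identity. For $a$-coordinates, agent $i$ at layer $k_{v_i}\ge k^*$ wins item $j$ against the other $n-1$ agents exactly when all of them bid $a$ on $j$ (probability $p^{n-1}$) and have layer index $\le k_{v_i}$, with ties broken uniformly among those at layer exactly $k_{v_i}$; this produces the $\frac{(\Pr[B(m-1,1-p)\le k_{v_i}])^n-(\Pr[B(m-1,1-p)<k_{v_i}])^n}{n\Pr[B(m-1,1-p)=k_{v_i}]}$-type expression (here the number of $b$'s among the other $m-1$ coordinates of an agent is $B(m-1,1-p)$). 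Then I would substitute these $\pi_{i,j}$ into the Flow Induced Mechanism's payment formula $p_i(v_i)=\frac{1}{\Pr[v_i]}\sum_{\ell\in\Pcal_{v_i}}\xi_\ell\big(\sum_j v_{i,j}\pi_{i,j}(v_i)-\sum_z\sum_j(v_{i,j}^z-v_{i,j}^{z+1})\pi_{i,j}(v_i^{z+1})\big)$. Because every simple path in the flow DAG descends one layer at a time, flipping one coordinate $b\to a$ at each step, the telescoping sum along a path from $v_i$ down to $[a,\dots,a]$ contributes, for each coordinate that is flipped, a $(b-a)\pi_{i,j}(\text{intermediate node})$ term; aggregating over the path decomposition (using $\sum_{\ell\in\Pcal_{v_i}}\xi_\ell=\Pr[v_i]$ and the parent-flow computation from step (i)) collapses the double sum to a single sum $\sum_{\tau=k^*}^{k_{v_i}-1}$ of interim-allocation differences at consecutive layers, which is exactly the last line of the Mechanism 1 payment; the cutoff at $k^*$ appears because below layer $k^*$ the $a$-scores are negative, so those coordinates are never allocated and contribute nothing.

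The main obstacle I anticipate is the bookkeeping in step (ii): one must show that the path-decomposition payment, which a priori depends on the particular decomposition $\xi$ and on the full combinatorial structure of descending paths, collapses cleanly to the layer-indexed sum in Mechanism 1. The key simplifications making this tractable are (a) the symmetry of the flow — all layer-$k$ nodes are interchangeable, so $\pi_{i,j}(v_i^{z})$ depends only on the layer of $v_i^z$ and on whether coordinate $j$ is $a$ or $b$ there, and (b) the fact that along any descending path a coordinate, once flipped from $b$ to $a$, stays at $a$, so each coordinate is "charged" at exactly the layers it passes through as an $a$-coordinate. I would organize this by grouping the path contributions by the layer at which each coordinate gets flipped, then using the parent-flow identities from step (i) to replace path-flow sums $\sum_{\ell:(v_i',v_i)\in\ell}\xi_\ell$ by the $\lambda$ values, exactly as in the proof of Theorem 3.1. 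A secondary subtlety is correctly handling the $\delta$-tie-breaking at $H_j^{\max}=0$ and verifying that the choice of $k^*$ makes the layer-$k^*$ $a$-score nonnegative while layer-$(k^*-1)$ is negative, which is precisely guaranteed by the definition of $k^*$ together with Lemma 4.4.
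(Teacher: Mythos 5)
Your plan matches the paper's proof in both structure and substance: compute the per-edge flow by symmetry (all layer-$k$ to layer-$(k{-}1)$ edges carry the same flow, equal to the total source-flow entering layers $k+1,\dots,m$ divided by the $(m-k)\binom{m}{k}$ edges), read off the virtual values, then compute $\pi(b)$ and $\pi(a,k)$ and collapse the path-decomposition payment by noting that every descending path flips exactly one coordinate $b\to a$ per layer, so the inner sum depends only on $k_{v_i}$ and not on the particular path $\ell$ or decomposition $\xi$. One small correction: Lemma~4.1 is used in the paper to collapse the interim-allocation binomial sum into the closed ratio $\frac{(\Pr[B\le k])^n-(\Pr[B<k])^n}{n\Pr[B=k]}$, not to simplify the per-edge flow, which follows from pure flow-conservation and edge-counting with no Lemma~4.1 needed.
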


\begin{proof}

    We begin by showing that the allocation rule defined in \Cref{mech: firstAxis} is induced by the flow described in \Cref{def: firstAxis flow}. To do so, we compute the exact amount of flow carried by each edge $(v_i', v_i)$ in the flow network. Recall the layered interpretation of the flow from \Cref{fig:4item_flow}, where nodes are grouped into layers according to the number of items valued at $b$. We will use a simple inductive argument to establish that all edges traversing from layer $k$ to layer $k-1$ carry the same amount of flow. Assume this holds for some layer $k$. By symmetry, each node in layer $k-1$ receives the same amount of flow from its parents in layer $k$ and the source. Moreover, each such node has exactly $k - 1$ children in layer $k - 2$, and it distributes its incoming flow equally among them. Thus, the flow carried by each edge from layer $k-1$ to layer $k-2$ is also equal, completing the inductive step. For the base case, note that there is only one node in layer $m$, which receives flow directly from the source and splits it equally among its $m$ children in layer $m - 1$. 

    As established earlier, each node in layer $k+1$ has exactly $k+1$ outgoing edges, and there are $\binom{m}{k+1}$ such nodes. Therefore, the total number of edges traversing from layer $k+1$ to layer $k$ is
    \[
    (k+1) \binom{m}{k+1} = (k+1) \cdot \frac{m!}{(m-k-1)!(k+1)!} = (m-k) \cdot \frac{m!}{(m-k)!k!} = (m-k) \binom{m}{k}.
    \]
    Each node in layer $z$ receives $(1-p)^z p^{m-z}$ units of flow from the source. Since only the node in layer $0$ sends flow to the sink, the total amount of flow traversing from layer $k+1$ to layer $k$ must equal the total flow introduced in layers $k+1, k+2, \dots, m$. That is, $\sum_{z = k+1}^m \binom{m}{z} (1-p)^z p^{m-z}$.Consequently, the flow carried by each edge from layer $k+1$ to layer $k$ is $\frac{1}{(m - k)\binom{m}{k}} \sum_{z = k+1}^m \binom{m}{z} (1 - p)^z p^{m - z}$.

    We now confirm that for all $i \in [n]$, $j \in [m]$, and $v_i \in \Vcal_i$, the virtual value is given by $H_{i,j}(v_i)$ as defined in the mechanism. First, consider the case where $v_{i,j} = b$. Then for any $v_i' \in \Vcal_i$, either $v_{i,j}' = b$ (so $v_{i,j}' - v_{i,j} = 0$), or $\lambda_i(v_i', v_i) = 0$ by the structure of the flow. Thus, all terms in the sum vanish, and we have $H_{i,j}(v_i) = b$. Now consider the case where $v_{i,j} = a$. In this case, for any $v_i'$ such that $\lambda_i(v_i', v_i) > 0$, we must have $v_{i,j}' = a$, except for a single predecessor $\tilde{v}_i \in \Vcal_i$ where $\tilde{v}_{i,j} = b$. By construction of the flow, exactly one such $\tilde{v}_i$ exists. Combining all of the above, we have:
\begin{align*}
    H_{i,j}(v_i) &= v_{i,j} - \frac{1}{\Pr[v_i]} \sum_{v_i' \in \Vcal_i} \lambda_i(v_i', v_i)(v_{i,j}' - v_{i,j}) \\
    &= a - \frac{1}{(1-p)^{k_{v_i}} p^{m-k_{v_i}}} \cdot \frac{1}{(m - k_{v_i}) \binom{m}{k_{v_i}}} \cdot \sum_{z = k_{v_i} + 1}^m \binom{m}{z}(1 - p)^z p^{m - z} \cdot (b - a)
\end{align*}

We now proceed to show that the payment rule of \Cref{mech: firstAxis} is induced by the flow defined in \Cref{def: firstAxis flow}. Let $\pi$ denote the interim allocation rule under truthful bidding. First, consider any $i \in [n]$, $v_i \in \Vcal_i$, and item $j \in [m]$ such that $v_{i,j} = b$. If exactly $z$ of the remaining $n - 1$ agents also report a value of $b$ for item $j$, then agent~$i$ receives the item with probability $1 / (z + 1)$. The probability that exactly $z$ from the remaining $n-1$ agents have value $b$ for item $j$ is $\Pr[B(n - 1, 1 - p) = z]$. Hence, the interim allocation in this case is $\pi_{i,j}(v_i) = \sum_{z = 0}^{n - 1} \frac{1}{z + 1} \cdot \Pr[B(n - 1, 1 - p) = z],$ which depends only on the fact that $v_{i,j} = b$, and not on $i$, $j$, or the rest of $v_i$. For ease of notation, define:
\[
\pi(b) := \sum_{z = 0}^{n - 1} \frac{1}{z + 1} \cdot \Pr[B(n - 1, 1 - p) = z].
\]

Now consider the case where $v_{i,j} = a$. Recall that
\[
k^* := \arg \min_{k \in [0, m]} \left\{ a > \frac{1}{(1 - p)^k p^{m - k}} \cdot \frac{1}{(m - k) \binom{m}{k}} \cdot \sum_{z = k + 1}^{m} \binom{m}{z} (1 - p)^z p^{m - z} \cdot (b - a) \right\}.
\]

If $k_{v_i} < k^*$, then $\pi_{i,j}(v_i) = 0$. Assume instead that $k_{v_i} \ge k^*$. In this case:
\begin{itemize}
    \item If there exists any agent $i' \in [n]$ such that $v_{i',j} = b$, then agent~$i$ does not receive item~$j$.
    \item If there exists $i' \in [n]$ such that $v_{i',j} = a$ but $k_{v_{i'}} > k_{v_i}$, then again agent~$i$ does not receive the item.
    \item Suppose there are exactly $z$ other agents $i'$ such that $v_{i',j} = a$ and $k_{v_{i'}} = k_{v_i}$, and the remaining $n - z - 1$ agents have $k_{v_{i''}} < k_{v_i}$. Then agent~$i$ receives item~$j$ with probability $1 / (z + 1)$.
\end{itemize}
The probability that any particular agent has $k_{v_i} = k$ given that his value for item $j$ is $v_{i,j}=a$ is $\Pr[B(m - 1, 1 - p) = k_{v_i}]$, while  the probability that any particular agent has $k_{v_i} < k$ given that his value for item $j$ is $v_{i,j}=a$ is $\Pr[B(m - 1, 1 - p) < k_{v_i}]$. Thus, the interim allocation is:
\begin{align*}
    \pi_{i,j}(v_i) &= p^{n - 1} \sum_{z = 0}^{n - 1} \frac{1}{z + 1} \binom{n - 1}{z} \left( \Pr[B(m - 1, 1 - p) = k_{v_i}] \right)^z \left( \Pr[B(m - 1, 1 - p) < k_{v_i}] \right)^{n - 1 - z} \\
    &= p^{n-1} \frac{\left(\Pr[B(m-1,1-p)<k_{v_i}]+\Pr[B(m-1,1-p)=k_{v_i}] \right)^n- \left( \Pr[B(m-1,1-p)<k_{v_i}] \right)^n}{n\Pr[B(m-1,1-p)=k_{v_i}]} \tag{\Cref{lemma: technical sum binom}} \\
    &=  p^{n-1} \frac{\left(\Pr[B(m-1,1-p)\le k_{v_i}] \right)^n- \left( \Pr[B(m-1,1-p)<k_{v_i}] \right)^n}{n\Pr[B(m-1,1-p)=k_{v_i}]}
\end{align*}

which depends only on $k_{v_i}$ and the fact that $v_{i,j} = a$. To simplify notation, we define:
\[
\pi(a, k) := 
\begin{cases}
p^{n-1} \frac{\left(\Pr[B(m-1,1-p)\le k_{v_i}] \right)^n- \left( \Pr[B(m-1,1-p)<k_{v_i}] \right)^n}{n\Pr[B(m-1,1-p)=k_{v_i}]} & \text{if } k \ge k^*, \\
0 & \text{if } k < k^*.
\end{cases}
\]

    Now consider any path $\ell \in \Pcal_{v_i}$. Each step along the path corresponds to a transition from one layer to the next, decreasing the number of $b$-valued items by one. Since the path starts at the source, passes through $v_i$ and each subsequent layer until it reaches layer $0$ and finally the sink, the total number of steps is given by $|\ell| = 3 + k_{v_i}$. Furthermore, for any two consecutive nodes on the path, $v_i^z$ and $v_i^{z+1}$, there exists exactly one item $j \in [m]$ such that $v_{i,j}^z = b$ and $v_{i,j}^{z+1} = a$. For all other items $j' \in [m] \setminus \{j\}$, we have $v_{i,j'}^z = v_{i,j'}^{z+1}$. Combining all these observations, we obtain:
    \begin{align*}
        p_i(v_i) &= \frac{1}{\Pr[v_i]}\sum_{\ell \in \Pcal_{v_i}} \xi_{\ell} \left(\sum_{j \in [m]} v_{i,j} \pi_{i,j}(v_i) - \sum_{z \in [1,|\ell|-3]}  \sum_{j \in [m]} \left( v_{i,j}^{z}-v_{i,j}^{z+1} \right)\pi_{i,j}(v_{i}^{z+1})\right) \\
        &=\frac{1}{\Pr[v_i]}\sum_{\ell \in \Pcal_{v_i}} \xi_{\ell} \left(\sum_{j \in [m]} v_{i,j} \pi_{i,j}(v_i) - \sum_{z \in [0,k_{v_i}-1]}  \left( b-a \right)\pi(a,z)\right)\\
        &= \sum_{j \in [m]} v_{i,j} \pi_{i,j}(v_i) - \sum_{z \in [0,k_{v_i}-1]}  \left( b-a \right)\pi(a,z) \\
        &= k_{v_i}b\pi(b)+ (m-k_{v_i})a\pi(a,k_{v_i})-\left( b-a \right)\sum_{z \in [0,k_{v_i}-1]}  \pi(a,z)\\
    \end{align*}
    Substituting the appropriate values for $\pi(b)$ and $\pi(a,k_{v_i})$ we get that the payment of \Cref{mech: firstAxis} is indeed induced by the flow described in $\Cref{def: firstAxis flow}$.  
\end{proof}

Now to establish the optimality of \Cref{mech: firstAxis}, we simply need to show that it is BIC-IR.
\begin{theorem}\label{thm: mech1 BIC}
    Consider a setting with $ n $ agents and $ m $ items, where each agent independently values each item at $ a $ with probability $ p $, and at $ b $ with probability $ 1 - p $. Then, \Cref{mech: firstAxis} is BIC-IR.
\end{theorem}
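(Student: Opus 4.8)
The plan is to use the explicit reduced form established in \Cref{lemma: first axis mechanism induced by flow} and reduce everything to two monotonicity facts about the interim allocation, after which BIC and BIR follow from a short case analysis on the number of high-valued items in the true and reported types. Write $\pi(b)$ for the interim probability that an agent receives an item it reported at $b$, and $\pi(a,k)$ for the interim probability of receiving an item reported at $a$ when the report has exactly $k$ coordinates equal to $b$ (with $\pi(a,k)=0$ for $k<k^*$); these are exactly the quantities computed in the proof of \Cref{lemma: first axis mechanism induced by flow}. Since the allocation is a hierarchy allocation, the interim allocation of a report $v_i'$ depends only on which coordinates are $b$ and on the count $k_{v_i'}$, so the whole analysis is governed by these two scalars. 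The two facts I need are: (i) $\pi(a,k)\le\pi(b)$ for every $k$, and (ii) $\pi(a,\cdot)$ is non-decreasing. Fact (i) follows from a pointwise coupling over the other agents' reports: switching agent $i$'s report on a single item from $a$ to $b$ can only weakly increase $i$'s chance of winning that item, since the score $b$ strictly dominates every score attainable from an $a$-report --- indeed $f(k)\le f(m-1)<a<b$, the first inequality by \Cref{lemma: tech-f(k)} and the second by the definition of $f$. Fact (ii) follows from rewriting the closed form from the proof of \Cref{lemma: first axis mechanism induced by flow} via $\frac{x^n-y^n}{x-y}=\sum_{r=0}^{n-1}x^r y^{n-1-r}$ as $\pi(a,k)=\tfrac{p^{n-1}}{n}\sum_{r=0}^{n-1}F(k)^r F(k-1)^{n-1-r}$ with $F(k)=\Pr[B(m-1,1-p)\le k]$, each summand being a product of non-negative non-decreasing functions of $k$.

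Next I would dispatch BIR and record a closed form for the truthful utility. Plugging a truthful report with $k:=k_{v_i}$ high items into the reduced form and cancelling against the payment rule of \Cref{mech: firstAxis} gives
\[
\mathbb{E}[u_i(v_i\to v_i)] \;=\; (b-a)\sum_{z=0}^{k-1}\pi(a,z) \;=:\; U(k),
\]
which is valid uniformly in $k$: the sum is empty at $k=0$, and its terms with $z<k^*$ vanish, matching the $k<k^*$ branch of the payment. Non-negativity of the summands gives BIR, and fact (ii) shows $U$ is non-decreasing in $k$.

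For BIC I would fix a true type with $k$ high items, a report $v_i'$ with $k':=k_{v_i'}$ high items, and first observe that, since $\pi(b)\ge\pi(a,k')$, the most profitable deviation among reports with count $k'$ puts as many of the agent's true $b$-coordinates as possible into the reported-$b$ set (moving a higher-value true coordinate into that set never hurts, since it contributes $v_{i,j}\pi(b)$ instead of $v_{i,j}\pi(a,k')$). This leaves two cases. If $k'\le k$, simplifying the deviation utility against the payment formula yields $(b-a)\big[(k-k')\pi(a,k')+\sum_{z=0}^{k'-1}\pi(a,z)\big]$, so the desired bound $\le U(k)$ reduces to $(k-k')\pi(a,k')\le\sum_{z=k'}^{k-1}\pi(a,z)$, which holds term-by-term by fact (ii). If $k'>k$, the deviation utility simplifies to $(b-a)\big[\sum_{z=0}^{k'-1}\pi(a,z)-(k'-k)\pi(b)\big]$, so $\le U(k)$ reduces to $\sum_{z=k}^{k'-1}\pi(a,z)\le(k'-k)\pi(b)$, which holds term-by-term by fact (i). The boundary sub-cases $k'<k^*$ are degenerate: there $\pi(a,k')=0$ and the payment has no correction term, so the deviation utility is $\le 0\le U(k)$. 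Hence truthful reporting is optimal, the mechanism is BIC, and with BIR this proves the claim; combined with \Cref{thm: main Thm} and \Cref{lemma: first axis mechanism induced by flow} it also certifies that \Cref{mech: firstAxis} is revenue-optimal.

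I expect the main difficulty to be bookkeeping rather than a deep inequality: once facts (i) and (ii) are in place, every comparison collapses to ``a sum of $r$ terms, each dominated by (or dominating) a reference term, is at most (at least) $r$ times that term.'' The genuinely delicate points are justifying that only the count $k_{v_i'}$, and not the finer structure of $v_i'$, governs the interim allocation and payment, and handling the behavior around the threshold $k^*$ so that the uniform formula $U(k)=(b-a)\sum_{z=0}^{k-1}\pi(a,z)$ remains valid --- which in particular requires choosing the tie-breaking parameter $\delta$ so that $a$-items with zero score are never allocated.
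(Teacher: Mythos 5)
Your proof is correct and follows essentially the same route as the paper: same reduced-form quantities $\pi(b)$ and $\pi(a,k)$, same two monotonicity facts, same truthful-utility identity $U(k)=(b-a)\sum_{z=0}^{k-1}\pi(a,z)$ giving BIR, and the same split into $k'\le k$ versus $k'>k$ after first reducing an arbitrary misreport to an ``aligned'' one by a coordinate swap whose net effect is $(b-a)(\pi(b)-\pi(a,k'))\ge 0$. The two small departures are cosmetic but worth noting: you compress the paper's iterative $\Bcal(\cdot)$-measure argument for non-reachable misreports into a single ``optimize over reports of fixed count $k'$'' step, and you replace the paper's coupling/hierarchy justification of $\pi(a,\cdot)$ being non-decreasing (which leans on \Cref{lemma: tech-f(k)}) with the algebraic rewriting $\pi(a,k)=\tfrac{p^{n-1}}{n}\sum_{r=0}^{n-1}F(k)^rF(k-1)^{n-1-r}$; both are clean and correct.
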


\begin{proof}
    We begin with the BIR (Bayesian Individual Rationality) part of the proof. Recall the following definitions established in the proof of \Cref{lemma: first axis mechanism induced by flow}:

\begin{itemize}
    \item $k^* := \arg \min_{k \in [0, m]} \left\{ a > \frac{1}{(1 - p)^k p^{m - k}} \cdot \frac{1}{(m - k)\binom{m}{k}} \cdot \sum_{z = k + 1}^{m} \binom{m}{z} (1 - p)^z p^{m - z} \cdot (b - a) \right\}$,
    \item $\pi(a, k) := 
    \begin{cases}
        p^{n-1} \frac{\left(\Pr[B(m-1,1-p)\le k_{v_i}] \right)^n- \left( \Pr[B(m-1,1-p)<k_{v_i}] \right)^n}{n\Pr[B(m-1,1-p)=k_{v_i}]}, & \text{if } k \ge k^*, \\
        0, & \text{if } k < k^*,
    \end{cases}$
    \item $\pi(b) := \sum\limits_{z = 0}^{n - 1} \frac{1}{z + 1} \Pr[B(n - 1, 1 - p) = z]$.
\end{itemize}
Now fix any agent $i \in [n]$ and a valuation $v_i \in \Vcal_i$. The expected utility of agent $i$ under truthful reporting is given by:
    \begin{align*}
        \mathbb{E}[u_i(v_i \rightarrow v_i)] &= \sum_{j \in [m]} v_{i,j} \pi_{i,j}(v_i) - p_i(v_i)\\
        &= k_{v_i}b\pi(b)+(m-k_{v_i})a\pi(a,k_{v_i})- \left(k_{v_i}b\pi(b)+(m-k_{v_i})a\pi(a,k_{v_i})-(b-a) \sum_{z \in [0,k_{v_i}-1]} \pi(a,z) \right) \\
        &= (b-a) \sum_{z \in [0,k_{v_i}-1]} \pi(a,z) \ge 0
    \end{align*}

The bulk of the proof is devoted to establishing that the mechanism is \emph{Bayesian Incentive Compatible (BIC)}. That is, for any agent $i \in [n]$ and any pair of valuations $v_i, v_i' \in \Vcal_i$, we must show $\mathbb{E}[u_i(v_i \rightarrow v_i)] \ge \mathbb{E}[u_i(v_i \rightarrow v_i')]$ where the expectation is taken over the types of the other agents and the randomness of the mechanism.

We analyze this by considering four cases, based on the relationship between $v_i$ and $v_i'$ in the flow graph. First, suppose that $k_{v_i} \ge k_{v_i'}$, i.e., $v_i'$ lies in the same layer or a lower layer than $v_i$ in the flow graph. Within this case, we consider two subcases.

Assume first that there exists a path $\ell \in \Pcal_{v_i}$ such that $v_i' \in \ell$, that is, $v_i'$ is reachable from $v_i$ in the flow graph. Then we have the following:
    \begin{align*}
         \mathbb{E}[u_i(v_i \rightarrow v'_i)] &= \sum_{j \in [m]} v_{i,j} \pi_{i,j}(v'_i) - p_i(v'_i)\\
         &= k_{v'_i}b\pi(b)+\left((m-k_{v_i})a + (k_{v_i}-k_{v'_i})b\right)\pi(a,k_{v_i'})\\ 
         & \phantom{=}- \left(k_{v_i'}b\pi(b)+(m-k_{v_i}')a\pi(a,k_{v_i}')-(b-a) \sum_{z \in [0,k_{v_i}'-1]} \pi(a,z) \right)\\
         &= (b-a) \sum_{z \in [0,k_{v_i}'-1]} \pi(a,z) +(k_{v_i}'-k_{v_i})a\pi(a,k_{v_i'}) + (k_{v_i}-k_{v'_i})b \pi(a,k_{v_i'})\\
         &= (b-a)\left(\sum_{z \in [0,k_{v_i'}-1]} \pi(a,z) +  (k_{v_i}-k_{v_i'}) \pi(a,k_{v_i'})\right)\\
         &= (b-a)\left(\sum_{z \in [0,k_{v_i'}-1]} \pi(a,z) +   \sum_{z \in [k_{v_i'},k_{v_i}-1]} \pi(a,k_{v_i'})\right)\\
         & \le (b-a)\left(\sum_{z \in [0,k_{v_i'}-1]} \pi(a,z) +   \sum_{z \in [k_{v_i'},k_{v_i}-1]} \pi(a,z)\right) \tag{if $x \le x'$ then $\pi(a,x) \le \pi(a,x')$ }\\
         &= (b-a)\left(\sum_{z \in [0,k_{v_i}-1]} \pi(a,z) \right)= \mathbb{E}[u_i(v_i \rightarrow v_i)]
    \end{align*}
    In the inequality, we use the fact that reporting a valuation of a higher $k$ results in a higher hierarchy and thus higher interim probability of being allocated the item. Using the same argument, we can also show that $\pi(b) \ge \pi(a,k)$ for all $k \in [0,m]$, which will become useful later on.

    Next, assume that $k_{v_i} \ge k_{v_i'}$ but there does not exist a path $\ell \in \Pcal_{v_i}$ such that $v_i' \in \ell$, that is, $v_i'$ is \emph{not} reachable from $v_i$ in the flow graph. This implies that $v_i'$ differs from $v_i$ in at least one coordinate $j_b \in [m]$ such that $v_{i,j_b}' = b$ while $v_{i,j_b} = a$. Since $k_{v_i} \ge k_{v_i'}$, there must also exist $j_a \in [m]$ such that $v_{i,j_a}' = a$ while $v_{i,j_a} = b$.Now consider a modified valuation $v_i'' \in \Vcal_i$ constructed as follows:
\[
v_{i,j}'' = 
\begin{cases}
v_{i,j}' & \text{for } j \in [m] \setminus \{j_a, j_b\}, \\
a & \text{for } j = j_b, \\
b & \text{for } j = j_a.
\end{cases}
\]
By construction, $v_i''$ and $v_i'$ differ only in a swap of $a$ and $b$ at positions $j_a$ and $j_b$, hence $k_{v_i''} = k_{v_i'}$. Thus,
    \begin{align*}
        \mathbb{E}[u_i(v_i \rightarrow v'_i)]- \mathbb{E}[u_i(v_i \rightarrow v''_i)] &= \sum_{j\in[m]}v_{i,j}(\pi_{i,j}(v'_{i,j})-\pi_{i,j}(v''_{i,j})) -(p_i(v_i')-p_i(v_i''))\\
        &= \sum_{j\in[m]}v_{i,j}(\pi_{i,j}(v'_{i,j})-\pi_{i,j}(v''_{i,j})) \tag{$p_i(v_i')=p_i(v_i'')$} \\
        &= v_{i,j_a}(\pi_{i,j_a}(v'_{i,j_a})-\pi_{i,j_a}(v''_{i,j_a})) + v_{i,j_b}(\pi_{i,j_b}(v'_{i,j_b})-\pi_{i,j_b}(v''_{i,j_b})) \\
        &= b(\pi(a,k_{v_i'})-\pi(b)) + a(\pi(b)-\pi(a,k_{v_i'})) \\
        &= (b-a)(\pi(a,k_{v_i'})-\pi(b)) \le 0 \tag{$\pi(a,k_{v_i'})\le\pi(b)$}
    \end{align*}
Let $\Bcal(v_i, \tilde{v}_i) = |\{j \in [m] : \tilde{v}_{i,j} = b \text{ and } v_{i,j} = a\}|$ denote the number of items whose value increases from $a$ in $v_i$ to $b$ in $\tilde{v}_i$. If $\Bcal(v_i, \tilde{v}_i) = 0$ and $k_{v_i} \ge k_{\tilde{v}_i}$, then by the definition of the flow, there exists a path $\ell \in \Pcal_{v_i}$ such that $\tilde{v}_i \in \ell$. In our case, $\Bcal(v_i, v_i') > 0$ and $\Bcal(v_i, v_i') - \Bcal(v_i, v_i'') = 1$. Therefore, we interpret $\Bcal(\cdot)$ as a measure of how far a valuation is from being reachable along a flow path from $v_i$ (when $k_{v_i} \ge k_{v_i'}$). By applying the above inequality iteratively, we can conclude that for any $v_i', v_i'' \in \Vcal_i$ with $k_{v_i'} = k_{v_i''}$ and $\Bcal(v_i, v_i') < \Bcal(v_i, v_i'')$, it holds that $\mathbb{E}[u_i(v_i \rightarrow v_i')] \ge \mathbb{E}[u_i(v_i \rightarrow v_i'')]$. Combining this with the fact that the agent has no incentive to misreport any valuation reachable from her true type completes the proof for this case. The remaining two cases follow by similar arguments.

    Now we will move to the case where $k_{v_i} < k_{v_i'}$. First, we will assume that there exists $\ell \in \Pcal_{v'_i}$ such that $v_i \in \ell$, that is, $v_i$ is reachable from $v'_i$ in the flow graph.
    \begin{align*}
         \mathbb{E}[u_i(v_i \rightarrow v'_i)] &= \sum_{j \in [m]} v_{i,j} \pi_{i,j}(v'_i) - p_i(v'_i)\\
         &= (k_{v_i}b+ (k_{v_i'}-k_{v_i})a)\pi(b)+(m-k_{v'_i})a\pi(a,k_{v_i'}) \\
         &\phantom{=}- \left(k_{v_i'}b\pi(b)+(m-k_{v_i}')a\pi(a,k_{v_i}')-(b-a) \sum_{z \in [0,k_{v_i}'-1]} \pi(a,z) \right)\\
         &= (b-a) \sum_{z \in [0,k_{v_i'}-1]} \pi(a,z)  - (k_{v_i'}-k_{v_i})(b-a) \pi(b)\\
         &= (b-a)\left(\sum_{z \in [0,k_{v_i'}-1]} \pi(a,z) -  (k_{v_i'}-k_{v_i}) \pi(b)\right)\\
         &= (b-a)\left(\sum_{z \in [0,k_{v_i'}-1]} \pi(a,z) -   \sum_{z \in [k_{v_i},k_{v'_i}-1]} \pi(b)\right)\\
         & \le (b-a)\left(\sum_{z \in [0,k_{v_i'}-1]} \pi(a,z) -   \sum_{z \in [k_{v_i},k_{v_i'}-1]} \pi(a,z)\right) \tag{$\pi(a,x) \le \pi(b)$ }\\
         &= (b-a)\left(\sum_{z \in [0,k_{v_i}-1]} \pi(a,z) \right)= \mathbb{E}[u_i(v_i \rightarrow v_i)]
    \end{align*}
    Finally, assume that there does not exist a path $\ell \in \Pcal_{v'_i}$ such that $v_i \in \ell$, while maintaining $k_{v_i} < k_{v_i'}$. This implies that there exists $j_a \in [m]$ such that $v_{i,j_a}' = a$ and $v_{i,j_a} = b$. Since $k_{v_i} < k_{v_i'}$, it must also be the case that there exists $j_b \in [m]$ such that $v_{i,j_b}' = b$ and $v_{i,j_b} = a$.  Now consider a valuation $v_i'' \in \Vcal_i$ such that for all $j \in [m] \setminus \{j_a, j_b\}$, we have $v_{i,j}'' = v_{i,j}'$, while $v_{i,j_b}'' = a$ and $v_{i,j_a}'' = b$. By construction, this implies that $k_{v_i''} = k_{v_i'}$. Thus:
    \begin{align*}
        \mathbb{E}[u_i(v_i \rightarrow v'_i)]- \mathbb{E}[u_i(v_i \rightarrow v''_i)] &= \sum_{j\in[m]}v_{i,j}(\pi_{i,j}(v'_{i,j})-\pi_{i,j}(v''_{i,j})) -(p_i(v_i')-p_i(v_i''))\\
        &= \sum_{j\in[m]}v_{i,j}(\pi_{i,j}(v'_{i,j})-\pi_{i,j}(v''_{i,j})) \tag{$p_i(v_i')=p_i(v_i'')$} \\
        &= v_{i,j_a}(\pi_{i,j_a}(v'_{i,j_a})-\pi_{i,j_a}(v''_{i,j_a})) + v_{i,j_b}(\pi_{i,j_b}(v'_{i,j_b})-\pi_{i,j_b}(v''_{i,j_b})) \\
        &= b(\pi(a,k_{v_i'})-\pi(b)) + a(\pi(b)-\pi(a,k_{v_i'})) \\
        &= (b-a)(\pi(a,k_{v_i'})-\pi(b)) \le 0 \tag{$\pi(a,k_{v_i'})\le\pi(b)$}
    \end{align*}
    Let $\Bcal(v_i, \tilde{v}_i) = \left| \left\{ j \in [m] : \tilde{v}_{i,j} = a \text{ and } v_{i,j} = b \right\} \right|$. If $\Bcal(v_i, \tilde{v}_i) = 0$ and $k_{v_i'} > k_{v_i}$, then by the definition of the flow, there exists a path $\ell \in \Pcal_{\tilde{v}_i}$ such that $v_i \in \ell$. In our current case, we have $\Bcal(v_i, v'_i) > 0$ and $\Bcal(v_i, v'_i) - \Bcal(v_i, v''_i) = 1$. This implies that $\Bcal(\cdot)$ serves as a measure of how far away the node $v_i$ is from being on a path rooted at $v'_i$ (under the assumption $k_{v_i} < k_{v_i'}$). By repeatedly applying the inequality implied by the above step, we can show that $\mathbb{E}\left[u_i(v_i \rightarrow \hat{v}'_i)\right] - \mathbb{E}\left[u_i(v_i \rightarrow \hat{v}''_i)\right] \le 0 $ for any $\hat{v}'_i, \hat{v}''_i \in \Vcal_i$ such that $k_{\hat{v}'_i} = k_{\hat{v}''_i} > k_{v_i}$ and $B(v_i, \hat{v}'_i) < B(v_i, \hat{v}''_i)$. Combining this with the fact that the agent has no incentive to misreport a value that can reach her true value completes the argument and proves the desired inequality.

\end{proof}

Finally, combining all the above results, we can derive the main result of this section:
\begin{theorem}\label{thm: revenue-first-axis}
    Consider a setting with $ n $ agents and $ m $ items, where each agent independently values each item at $ a $ with probability $ p $, and at $ b $ with probability $ 1 - p $. Then, \Cref{mech: firstAxis} is optimal and extracts, in expectation, revenue equal to:
    \begin{align*}
\mathbb{E}[\mathrm{Rev}] =\  m \bigg( & b \cdot (1 - p^n) \\
& + p^n \sum_{k=k^*}^{m-1} \bigg( \left(  \Pr[B(m-1,1-p) \le k]  \right)^n - \left(  \Pr[B(m-1,1-p) < k] \right)^n \bigg)
\\
& \quad \left. \cdot \left[ 
    a - \frac{1}{(1-p)^k p^{m-k}} \cdot \frac{1}{(m-k)\binom{m}{k}} 
    \sum_{z=k+1}^m \binom{m}{z} (1-p)^z p^{m-z} \cdot (b-a) 
    \right] 
\right)
\end{align*}
where $k^* = \arg \min_{k \in [0,m]}\left\{a > \frac{1}{(1-p)^k p^{m-k} } \cdot\frac{1}{(m-k)\binom{m}{k}}\cdot \sum_{z=k+1}^m\binom{m}{z}(1-p)^zp^{m-z} \cdot (b-a)\right\}$.
\end{theorem}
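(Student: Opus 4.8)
This part requires no new work. \Cref{lemma: first axis mechanism induced by flow} shows that \Cref{mech: firstAxis} is the Flow Induced Mechanism of the Dual Flow in \Cref{def: firstAxis flow}, and \Cref{thm: mech1 BIC} shows this mechanism is BIC-IR; \Cref{thm: main Thm} then yields optimality, with the flow of \Cref{def: firstAxis flow} serving as a dual certificate. It remains only to evaluate the expected revenue.

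\textbf{Revenue, setup.} The plan is to use the virtual-surplus identity derived inside the proof of \Cref{thm: main Thm}: since the Flow Induced Mechanism allocates each item $j$ to an agent attaining $\max_{i} H_{i,j}(v_i)$ whenever this quantity is positive and withholds it otherwise (the case where the max equals $0$ contributing nothing either way), the expected revenue equals $\sum_{j \in [m]} \mathbb{E}_{v \sim \Dcal}\big[\,[\max_{i \in [n]} H_{i,j}(v_i)]^+\,\big]$. Because the $m$ items are i.i.d.\ and the mechanism treats them symmetrically, this equals $m \cdot \mathbb{E}_{v}\big[\,[\max_{i} H_{i,1}(v_i)]^+\,\big]$, so it suffices to analyze item $1$. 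Write $f$ for the increasing function from \Cref{lemma: tech-f(k)}; by \Cref{lemma: first axis mechanism induced by flow}, $H_{i,1}(v_i) = b$ when $v_{i,1} = b$ and $H_{i,1}(v_i) = f(k_{v_i})$ when $v_{i,1} = a$, where $k_{v_i}$ is the number of $b$-valued items of agent $i$.

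\textbf{Revenue, computation.} I would split on whether some agent values item $1$ at $b$. With probability $1 - p^n$ at least one agent does; then, since every $a$-valuer of item $1$ has virtual value $f(k) \le a < b$, the maximum virtual value is $b$, contributing $b(1-p^n)$. With probability $p^n$ all agents value item $1$ at $a$; conditioned on this, agent $i$'s virtual value is $f(k_{v_i})$, and because item $1$ contributes an $a$, the count $k_{v_i}$ equals the number of $b$'s among that agent's other $m-1$ items, hence $k_{v_i} \sim B(m-1, 1-p)$ independently across agents. By monotonicity of $f$ (\Cref{lemma: tech-f(k)}), $\max_i H_{i,1}(v_i) = f(\max_i k_{v_i})$, and by the definition of $k^*$ (the least $k$ with $f(k) > 0$, well-defined precisely because $f$ is increasing) this is positive exactly when $\max_i k_{v_i} \ge k^*$. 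Letting $G(k) := \Pr[B(m-1,1-p) \le k]$, independence gives $\Pr[\max_i k_{v_i} \le k] = G(k)^n$, so $\Pr[\max_i k_{v_i} = k] = G(k)^n - G(k-1)^n = (\Pr[B(m-1,1-p)\le k])^n - (\Pr[B(m-1,1-p) < k])^n$; the all-$a$ event thus contributes $p^n \sum_{k=k^*}^{m-1} f(k)\,\big(G(k)^n - G(k-1)^n\big)$ (an empty sum when $k^* > m-1$, i.e.\ when the item is never sold in this event). Multiplying by $m$ and substituting the explicit form of $f(k)$ gives exactly the claimed expression.

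\textbf{Main obstacle.} I do not anticipate a genuine difficulty: this theorem is essentially a corollary of the preceding results. The only points that need care are using the monotonicity \Cref{lemma: tech-f(k)} to reduce the maximum over $a$-valuers to $f$ evaluated at the largest $b$-count; correctly identifying the conditional law of $k_{v_i}$ as $B(m-1,1-p)$ rather than $B(m,1-p)$; and the boundary cases $k^* = 0$, $k^* > m-1$, together with the fact that $k_{v_i} = m$ cannot occur in the event where item $1$ is valued at $a$, so $f$ is never evaluated at the point where its closed form is $0/0$.
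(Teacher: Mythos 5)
Your proof is correct and follows essentially the same route as the paper's: both invoke the preceding results for optimality and then evaluate the dual objective $m\cdot\mathbb{E}\big[[\max_i H_{i,1}(v_i)]^+\big]$ by conditioning on whether some agent values item~$1$ at $b$, identifying the conditional law of $k_{v_i}$ as $B(m-1,1-p)$, and using the CDF$^n$ trick for the maximum. Your write-up is slightly more explicit than the paper's in citing \Cref{lemma: tech-f(k)} for the reduction $\max_i H_{i,1}(v_i)=f(\max_i k_{v_i})$ and in flagging the edge cases ($k^*>m-1$, $k_{v_i}\neq m$ on the all-$a$ event), but the underlying argument is the same.
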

Notice that the revenue expression of the above theorem is a direct generalization of the revenue computed by Yao~\cite{YAO_BIC_DSIC} for the $m=2$ subcase.

\subsection{Second Axis} \label{sec: secondAxis}

In this section, we study the setting with $n$ agents with non-identical distributions (non-iid) and 2 items. Agent $i$ values each item independently at $a$ with probability $q_i$ and at $b$ with probability $1-q_i$. Without loss of generality, we assume that the probabilities are ordered $q_1\ge q_2\ge \ldots \ge q_n$. 

Similarly to the hierarchy mechanism in \Cref{sec: firstAxis}, the hierarchy allocation values $H_{i,j}$ are equal to the virtual values. By construction, the mechanism allocates each item to the agent with the highest non‐negative virtual value for that item, breaking ties uniformly at random. The allocation rule is therefore simple, but the payment rule is more intricate. That is because agents have different $q_i$'s, hence the virtual-value formula $H_{i,j}(v_i)$ is different for each agent. We show that the \Cref{mech: non-iid-agents} is optimal by showing, in \Cref{lemma: non-iid-agens-induced-mech}, that the flow in~\Cref{fig: flow-non-iid-agents} induces the mechanism. Then we show, in~\Cref{lemma: non-iid-agents-bic-conditions}, the necessary conditions the interim allocation probabilities must satisfy for the mechanism to be BIC. Finally, we show, in~\Cref{lemma: non-iid-agents-interim} that the probabilities that are induced by the hierarchy allocation satisfy the conditions given by the previous lemma. 

\begin{algorithm}[ht]
\caption{Optimal mechanism for non-iid agents}\label{mech: non-iid-agents}
\begin{algorithmic}
\State For any agents $i \in [n]$, with valuation $v_i\in\Vcal_i$, if $v_{i,j}=b$ then $H_{i,j}(v_i)=b$, else if $v_{i,j}=a$ then 
\[H_{i,j}(v_i) = a-\frac{1-q_i}{2q_i}(b-a)\quad \text{if }v_{i,-j}=b,\quad \& 
\quad H_{i,j}(v_i)=a-\frac{1-q_i^2}{2q_i^2}(b-a) \quad\text{if }v_{i,-j}=a\]

The payment of agent $i$ for each of $\{(b,b),(b,a),(a,b),(a,a)\}$ is
\begin{align*}
    p(b,b) &=2b\pi_i(b) - (b-a)(\pi_i(a,b) +  \pi_i(a,a))\\
    p(a,b) &= p(b,a) = b\pi_i(b) + a \pi_i(a,b) - (b-a)\pi_i(a,a)\\
    p(a,a) &=2a\pi_i(a)
\end{align*}

\State Let the probability agent \(i\) wins her item when she reports \(b\) be 

\[\pi_i(b) = \sum_{z = 1 }^{n} \frac{1}{z} \cdot \sum_{\substack{S \subseteq [n]\setminus\{i\} \\[4pt]|S| = z-1}} \;\prod_{k\in S} (1-q_k) \prod_{k \in [n]\setminus (S\cup \{i\})} q_k\]

\State The probability agent $i$ wins the item when she reports $a$ has the following three cases
     \If{$q_i \ge \sqrt{\frac{b-a}{a+b}}$ and $S_i^1=S_i^2=\emptyset$}
     \State\begin{equation*}
         \pi_{i}^{(1)}(a,b) =  \pi_{i}^{(1)}(b,a)=  \prod_{k\in [i]\setminus\{i\}} q_k \quad \& \quad \pi_i^{(1)}(a)=  \prod_{k\in S_i^3} q_k^2 \prod_{\ell \in S_i^4} q_\ell
     \end{equation*}
     \ElsIf {$\sqrt{\frac{b-a}{a+b}}>q_i \ge \frac{b-a}{a+b}$  or ($S_i^1=\emptyset$ and $ S_i^2 \neq \emptyset$) }
    \State\begin{equation*}
         \pi_{i}^{(2)}(a,b) =\pi_{i}^{(2)}(b,a) = \prod_{k\in S_i^2} q_k^2 \prod_{\ell \notin S_i^2} q_\ell  \qquad \& \qquad \pi_{i}^{(2)}(a) =0 
    \end{equation*}
     \Else{ $q_i < \frac{b-a}{a+b} $ or $S_i^1 \neq \emptyset$}
    \State \begin{equation*}
        \pi_{i}^{(3)}(a,b) =\pi_{i}^{(3)}(b,a) = \pi_{i}^{(3)}(a) =0 
    \end{equation*}
    \EndIf

 \State Where we define four disjoint sets of other players by comparing their $q_i$ to thresholds $\sqrt{q_k}$ and $q_k$:

\[
\begin{aligned}
S_i^1 &= \{\,k\neq i: q_k>\sqrt{q_i}\},&
S_i^2 &= \{\,k\neq i: q_i<q_k\le\sqrt{q_i}\},\\
S_i^3 &= \{\,k\neq i: q_i^2<q_k\le q_i\},&
S_i^4 &= \{\,k\neq i: q_k\le q_i^2\}.
\end{aligned}
\]

\end{algorithmic}
\end{algorithm}

For each agent, the flow is described by~\Cref{fig: flow-non-iid-agents}. Notably, the optimal flow for each agent is qualitatively identical, just parameterized by $q_i$. Thus, adding more bidders of various types does not introduce a fundamentally new structure into the mechanism.

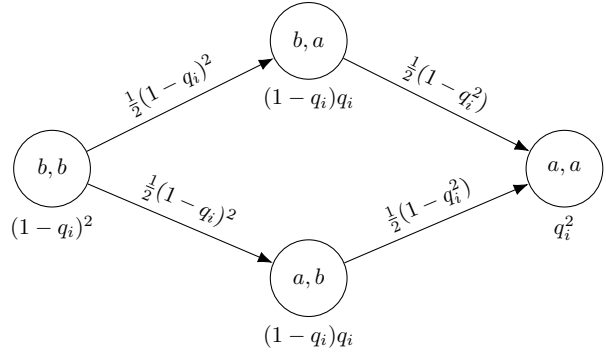
\begin{figure}[ht]
    \centering
    \begin{tikzpicture}[scale=0.85,  transform shape, node distance=2.5cm and 2.8cm,
    vertex/.style={draw, circle, minimum size=1.2cm},
    edge/.style={draw, -{Latex[length=2mm]}},
    ]

\node[vertex,label=below:{$(1-q_i)^2$}] (bb) {$b,b$};
\node[vertex, right=of bb, yshift=2cm, label=below:{$(1-q_i)q_i$}] (ba) {$b,a$};
\node[vertex, below=of ba,label=below:{$(1-q_i)q_i$}] (ab) {$a,b$};
\node[vertex, right=of bb, xshift=4cm,label=below:{$q_i^2$}] (aa) {$a,a$};

\path[edge] (ab) edge node[midway,sloped,above] {$\frac{1}{2}(1-q_i^2)$} (aa);
\path[edge] (ba) edge node[midway,sloped,above] {$\frac{1}{2}(1-q_i^2)$} (aa);
\path[edge] (bb) edge node[midway,sloped,above] {$\frac{1}{2}(1-q_i)^2$} (ba);
\path[edge] (bb) edge node[midway,sloped,above] {$\frac{1}{2}(1-q_i)^2$} (ab);

\end{tikzpicture}
    \caption{The Figure depicts the flow graph for a single agent $i$ with parameter $q_i$. The four nodes $(b,b), (b,a), (a,b), (a,a)$ represent the agent’s possible valuations for the two items. We omit the flow from $(a,a)\to\bot$, that is $\mu(a,a)=1$ }
    \label{fig: flow-non-iid-agents}
\end{figure}

\begin{lemma}\label{lemma: non-iid-agens-induced-mech}
The flow defined in~\Cref{fig: flow-non-iid-agents} induces the hierarchy allocation function and interim allocation probabilities presented in~\Cref{mech: non-iid-agents}.   
\end{lemma}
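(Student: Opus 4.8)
The lemma has two halves, which I would prove in order: (I) the graph in \Cref{fig: flow-non-iid-agents} is a feasible acyclic dual flow whose induced virtual values coincide with the hierarchy functions $H_{i,j}$ of \Cref{mech: non-iid-agents}; and (II) running that hierarchy allocation under truthful bidding produces exactly the interim probabilities $\pi_i(b),\pi_i(a,b),\pi_i(a,a)$ listed there.

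\textbf{Step 1: feasibility and virtual values.} First I would verify flow conservation at each of the four nodes $(b,b),(b,a),(a,b),(a,a)$, using the source weights $\Pr[v_i]$ shown in the figure and the single sink edge $\mu_i(a,a)=1$; e.g. at $(b,a)$ the inflow $(1-q_i)q_i+\tfrac12(1-q_i)^2$ equals the outflow $\tfrac12(1-q_i^2)$, and the remaining three checks are identical by symmetry. This certifies $(\lambda,\mu)$ as a feasible acyclic dual flow, so the machinery behind \Cref{thm: main Thm} applies. Then I would substitute the edge weights into $H_{i,j}(v_i)=v_{i,j}-\frac{1}{\Pr[v_i]}\sum_{v_i'}\lambda_i(v_i',v_i)(v_{i,j}'-v_{i,j})$ node by node: $(b,b)$ has no incoming $\lambda$-edge, so $H=b$ in both coordinates; $(b,a)$ receives only the edge from $(b,b)$ of weight $\tfrac12(1-q_i)^2$ with $\Pr[(b,a)]=(1-q_i)q_i$, so its $a$-coordinate becomes $a-\frac{1-q_i}{2q_i}(b-a)$ and its $b$-coordinate is unchanged (the $b$-to-$b$ difference vanishes), and $(a,b)$ is symmetric; finally $(a,a)$ receives weight $\tfrac12(1-q_i^2)$ from each of $(b,a)$ and $(a,b)$ with $\Pr[(a,a)]=q_i^2$, and in each coordinate exactly one parent raises that coordinate from $a$ to $b$, giving $a-\frac{1-q_i^2}{2q_i^2}(b-a)$. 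These match the $H_{i,j}$ of \Cref{mech: non-iid-agents} verbatim.

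\textbf{Step 2: interim allocation.} With the scores fixed I would isolate the four elementary comparisons that drive everything. Fix an item $j$. A report of $b$ for $j$ has score $b$, strictly above every $a$-report score (each is $a$ minus a positive quantity and $a<b$); the scores $a-\frac{1-q}{2q}(b-a)$ and $a-\frac{1-q^2}{2q^2}(b-a)$ are strictly increasing in $q$, with the second no larger than the first for fixed $q$ since $\tfrac{1+q}{q}\ge 1$; the cross-type comparisons collapse to $H_{k,j}((a,b))\ge H_{i,j}((a,a))\Leftrightarrow q_k\ge q_i^2$ and $H_{k,j}((a,a))\ge H_{i,j}((a,b))\Leftrightarrow q_k\ge\sqrt{q_i}$, which is precisely what the partition $S_i^1,S_i^2,S_i^3,S_i^4$ records; and $H_{i,j}((a,b))\ge 0\Leftrightarrow q_i\ge\frac{b-a}{a+b}$, $H_{i,j}((a,a))\ge 0\Leftrightarrow q_i\ge\sqrt{\frac{b-a}{a+b}}$, which yields the three regimes. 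From here $\pi_i(b)$ follows by conditioning on the number $z-1$ of other agents reporting $b$ for $j$ (an event of probability $\sum_{|S|=z-1}\prod_{k\in S}(1-q_k)\prod_{k\notin S\cup\{i\}}q_k$) and splitting the resulting $z$-way tie uniformly. For $\pi_i(a,b)$ and $\pi_i(a,a)$, agent $i$ receives item $j$ only when her own score is non-negative and no opponent outscores her; by the four comparisons, ``no opponent outscores her'' unpacks into: nobody reports $b$ for $j$; every opponent $k$ whose $q_k$ lies in the range where even its lower-value report would still beat agent $i$ must report value $a$ on \emph{both} items (contributing $q_k^2$); each remaining opponent only needs value $a$ for item $j$ (contributing $q_k$); and if some opponent has $q_k>\sqrt{q_i}$ it can never be blocked, forcing probability $0$. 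Reading off which opponents fall into which range in each regime gives exactly $\pi_i^{(1)},\pi_i^{(2)},\pi_i^{(3)}$, together with $\pi_i^{(t)}(a)=0$ whenever $q_i<\sqrt{\frac{b-a}{a+b}}$. (The payment rule of \Cref{mech: non-iid-agents} is then recovered by pushing the five-path flow decomposition of \Cref{fig: flow-non-iid-agents} through the flow-induced-payment formula, exactly as in \Cref{lemma: first axis mechanism induced by flow}.)

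\textbf{Main obstacle.} Step 1 is mechanical. The work lies in Step 2's case analysis for $\pi_i(a,b)$ and $\pi_i(a,a)$: one must check that the three regimes are jointly exhaustive and pairwise exclusive once the thresholds $\frac{b-a}{a+b},\sqrt{\frac{b-a}{a+b}}$ and the emptiness conditions on $S_i^1,S_i^2$ are combined; that in each regime every admissible opponent configuration is counted exactly once and no higher-scoring configuration is missed — in particular that an opponent with $q_k\in(q_i^2,q_i]$ blocks an $(a,a)$-report but not necessarily an $(a,b)$-report, whereas one with $q_k>\sqrt{q_i}$ blocks both; and that ties among the $q_k$ (coincident scores) are handled consistently with the uniform tie-breaking of the hierarchy rule. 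I expect this bookkeeping to absorb essentially all the effort.
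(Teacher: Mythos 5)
Your proposal is correct and follows essentially the same route as the paper's own proof: compute the virtual values $H_{i,j}$ node-by-node from the flow formula, then read off the interim probabilities from the hierarchy rule by case analysis on where each opponent's $q_k$ sits relative to the thresholds $q_i$, $\sqrt{q_i}$, $q_i^2$ and where $q_i$ sits relative to $\frac{b-a}{a+b}$ and $\sqrt{\frac{b-a}{a+b}}$. You are slightly more explicit than the paper in two useful ways — you actually check flow conservation at each node (the paper takes this as read), and you isolate the four score comparisons as biconditionals rather than implications — and you are right to flag that coincident $q_k$'s interact with uniform tie-breaking in a way neither your sketch nor the paper's proof fully pins down.
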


\begin{proof}
    We will break the proof of the lemma into two steps. First, we show that for each $i\in[n],j\in\{1,2\}$ and $v_i \in \{(b,b),(b,a),(a,b),(a,a)\}$, the hierarchy allocation function $H_{i,j}$ is equal to the virtual value, $H_{i,j}(v_i) = v_{i,j} - \frac{1}{\Pr[v_i]}\sum_{v'_i \in \mathcal{V}_i} \lambda_i(v'_i, v_i)(v'_{i,j} - v_{i,j})$. Then, we will calculate the interim allocation rule $\pi_{i,j}(\cdot)$.

    Let us first focus on the hierarchy allocation function $H_{i,j}(\cdot)$. Notice that in the flow, shown in~\Cref{fig: flow-non-iid-agents}, $\lambda_i(v_i',v_i)$ is strictly positive if $v_i'$ and $v_i$ differ in exactly one coordinate, which is $b$ in $v_i'$ and $a$ in $v_i$. Given this observation, we can see that if $v_{i,j}=b$ then $H_{i,j}=b$ since $\lambda_i(v_{i}',v_{i})(v_{i,j}'-v_{i,j})=0$ either because $v_{i,j}'=v_{i,j}=b$ or $\lambda_i(v_{i}',v_{i})=0$. On the other hand, if $v_{i,j}=a$ there exists exactly one node such that $v_{i,j}'=b$ such that $\lambda_i(v_{i}',v_{i})(v_{i,j}'-v_{i,j})>0$. Therefore,
    \begin{align*}
        H_{i,1}(a,b)=H_{i,2}(b,a) &=  v_{i,j} - \frac{1}{\Pr[v_i]}\sum_{v'_i \in \mathcal{V}_i} \lambda_i(v'_i, v_i)(v'_{i,j} - v_{i,j}) = a - \frac{1-q_i}{2q_i}(b-a)\\
        H_{i,1}(a,a)=H_{i,2}(a,a) &=  v_{i,j} - \frac{1}{\Pr[v_i]}\sum_{v'_i \in \mathcal{V}_i} \lambda_i(v'_i, v_i)(v'_{i,j} - v_{i,j}) =  a - \frac{1-q_i^2}{2q_i^2}(b-a)
    \end{align*}

    Our next step is to calculate the interim allocation rule assuming truthful bidding. Recall that the mechanism allocates each item to the agent with the highest non‐negative virtual value for that item, breaking ties uniformly at random. If agent $i$ reports $b$ for an item, then she has the highest virtual value for that item, and hence she splits the item with all the other agents that reported $b$ for that item. Therefore, probability she gets an item when she $b$ is $\pi_{i,j}(v_i) =\sum_{z = 1 }^{n} \frac{1}{z} \cdot \sum_{S\subseteq[n]\setminus \{i\},|S|=z-1} \prod_{k\in S} (1-q_k) \cdot \prod_{k \in [n]\setminus (S\cup \{i\})} q_k$. Intuitively, this probability captures all the subsets of size $z-1$ where all of those reported $b$ and the rest $a$. Note that this probability does not depend on the item $j$ or the valuation of the other item. For ease of notation, we will refer to this probability as $\pi_i(b)$. Now consider the case where agent $i$ reports $a$ for item $1$; by symmetry, the same analysis holds for item 2. Agent $i$ gets the item if, and only if, she has the highest non-negative virtual value for that item. Recall that if there exists an agent that reported $b$ for item $1$, then agent $i$ will not receive that item. One can easily verify the crucial points of the virtual values, 

\[
\begin{cases}
  q_i \ge \sqrt{\frac{b-a}{b+a}}
  &\implies  0 \le H_{i,1}(a,a) \le H_{i,1}(a,b),\\[6pt]
  \frac{b-a}{b+a}\le q_i \le \sqrt{\frac{b-a}{b+a}}
  &\implies H_{i,1}(a,a) \le 0 \le H_{i,1}(a,b),\\[6pt]
  q_i \le \frac{b-a}{b+a}
  &\implies  H_{i,1}(a,a) \le H_{i,1}(a,b) \le 0.
\end{cases}
\]

Observe the monotonicity property of the agent’s virtual value. That is, whenever $q_i \ge q_k$, we have $H_{i,1}(a,b)\ge H_{k,1}(a,b)$, and $H_{i,1}(a,a)\ge H_{k,1}(a,a)$. Moreover, if $q_i\ge \sqrt{q_k}$ then $H_{i,1}(a,a)\ge H_{k,1}(a,b).$

To determine whether agent $i$ has the highest virtual value among the agents who reported $a$ we partition the other agents into four disjoint sets, $S_i^1, S_i^2, S_i^3, S_i^4$, based on the relationship of their virtual values $H_{k,1}(\cdot,\cdot)$ to agent $i$'s virtual values $H_{i,1}(\cdot,\cdot)$.

Recall that each item is given to the agent with the highest virtual value. Agent $i$ is in the third case if all the virtual values are negative or $S_i^1 = \{k\in[n]\setminus\{i\}: q_k \in (\sqrt{q_i},1]\}$ is nonempty. The fact that $S_i^1$ is nonempty implies that there is an agent $k$ such that $H_{k,1}(a,a)>H_{i,1}(a,b)$, thus even if $H_{i,1}(a,b)$ is positive, there is an agent with a higher virtual value no matter what they report. Therefore, $\pi_{i,j}^{(3)}(v_i)=0$ when $v_{i,j}=a.$ Agent $i$ is in the second case if only the virtual values of $(a,a)$ are negative or if $S_i^1$ is empty and $S_i^2 = \{k\in[n]\setminus\{i\}: q_k \in (q_i,\sqrt{q_i}]\}$ is nonempty. The fact that the virtual values of $(a,a)$ are negative implies that $\pi_i^{(2)}=0$. The set $S_i^2$ contains all the agents $k$ that $H_{k,1}(a,a)\le H_{i,1}(a,b)$ and $H_{k,1}(a,b)>H_{i,1}(a,b)$. So when she reports $(a,b)$ or $(b,a)$, all agents in $S_i^2$ must report $(a,a)$ and the rest $a$ only for that item. This gives us the interim allocation probability $\pi_{i,j}^{(2)}(v_{i,j}) = \prod_{k\in S_i^2} q_k^2 \prod_{\ell \notin S_i^2} q_\ell$, when $v_{i,j}=a$. Finally, agent $i$ is in the first case if all the virtual values are positive and $S_i^1$ and $S_i^2$ are empty. Since $S_i^1$ and $S_i^2$ are empty implies that $H_i(a,b) > H_k(a,b)$ for all $k$; therefore, she has the highest virtual value when all other players report $a$ for that item. Hence, the probability $\pi_{i}^{(1)}(a,b) = \pi_{i}^{(2)}(b,a) =\prod_{k\in [n]\setminus\{i\}}.$ On the other hand, when she reports $(a,a)$ she takes the item if all agents in $S_i^3 = \{\,k\neq i: q_i^2<q_k\le q_i\}$ also report $a,a$, since $H_k(a,b) > H_i(a,a) > H_k(a,a)$ for $k\in S_i^3,$ and the rest report $a$ for the same item.  Hence $\pi_i^{(1)}(a)=  \prod_{k\in S_i^3} q_k^2 \prod_{\ell \notin S_i^4} q_\ell.$

\end{proof}

\begin{lemma}
    The payment rule induced by~\Cref{mech: non-iid-agents} is 
    \begin{align*}
    p(b,b) &=2b\pi_i(b) - (b-a)(\pi_i(a,b) +  \pi_i(a,a))\\
    p(a,b) &= p(b,a) = b\pi_i(b) + a \pi_i(a,b) - (b-a)\pi_i(a,a)\\
    p(a,a) &=2a\pi_i(a)
\end{align*}
\end{lemma}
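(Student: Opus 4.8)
The plan is to specialize the generic payment formula from the box ``The Flow Induced Mechanism'' to the four-node flow of \Cref{fig: flow-non-iid-agents}, reusing the interim allocation probabilities already computed in \Cref{lemma: non-iid-agens-induced-mech}. Throughout, write $\pi_i(b)$ for the probability agent $i$ wins an item she reported $b$ for, and $\pi_i(a,b)$ (resp.\ $\pi_i(a,a)$) for the probability she wins an item she reported $a$ for when her other report is $b$ (resp.\ $a$); the item symmetry established in \Cref{lemma: non-iid-agens-induced-mech} guarantees that $\pi_{i,j}(v_i)$ depends only on these data, and $\pi_i(a,a)=\pi_i(a)$.

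First I would pin down the path decomposition of the flow. The only simple $s\to\bot$ paths are $\ell_{aa}=(s,(a,a),\bot)$; $\ell_{ba}=(s,(b,a),(a,a),\bot)$ and $\ell_{ab}=(s,(a,b),(a,a),\bot)$; and $\ell'_{ba}=(s,(b,b),(b,a),(a,a),\bot)$ and $\ell'_{ab}=(s,(b,b),(a,b),(a,a),\bot)$. Since $\Pcal_{(a,a)}$, $\Pcal_{(b,a)}$, $\Pcal_{(a,b)}$ are singletons, property (i) of the flow decomposition, $\sum_{\ell\in\Pcal_{v_i}}\xi_\ell=\Pr[v_i]$, forces $\xi_{\ell_{aa}}=q_i^2$ and $\xi_{\ell_{ba}}=\xi_{\ell_{ab}}=q_i(1-q_i)$; and because the flow leaving $(b,b)$ splits evenly between $(b,a)$ and $(a,b)$, the two $(b,b)$-rooted paths satisfy $\xi_{\ell'_{ba}}=\xi_{\ell'_{ab}}=\tfrac12(1-q_i)^2$, with $\xi_{\ell'_{ba}}+\xi_{\ell'_{ab}}=\Pr[(b,b)]$. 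No individual $\xi$ needs to be solved for explicitly: only these partial sums matter, because the $\tfrac{1}{\Pr[v_i]}$ prefactor in the payment formula will cancel $\sum_{\ell\in\Pcal_{v_i}}\xi_\ell$.

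Next I would substitute each of the four types into
\[
p_i(v_i)=\frac{1}{\Pr[v_i]}\sum_{\ell\in\Pcal_{v_i}}\xi_\ell\Bigl(\textstyle\sum_j v_{i,j}\pi_{i,j}(v_i)-\sum_{z\in[|\ell|-3]}\sum_j(v_{i,j}^z-v_{i,j}^{z+1})\pi_{i,j}(v_i^{z+1})\Bigr).
\]
For $(a,a)$ the unique path has length $3$, so the $z$-sum is empty and $p_i(a,a)=2a\,\pi_i(a,a)$. For $(b,a)$ the single path has one correction step, the edge $(b,a)\to(a,a)$, which decrements the first coordinate from $b$ to $a$ and hence subtracts $(b-a)\pi_i(a,a)$; with the zeroth-order term $b\,\pi_i(b)+a\,\pi_i(a,b)$ this gives $p_i(b,a)=b\,\pi_i(b)+a\,\pi_i(a,b)-(b-a)\pi_i(a,a)$, and $(a,b)$ is identical by item symmetry. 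For $(b,b)$ the two paths each have two correction steps: along $\ell'_{ba}$ one first decrements coordinate $2$ (partner still $b$), subtracting $(b-a)\pi_i(a,b)$, then decrements coordinate $1$ (partner now $a$), subtracting $(b-a)\pi_i(a,a)$; along $\ell'_{ab}$ the same two terms appear with the coordinates swapped. Hence both paths yield the identical bracket $2b\,\pi_i(b)-(b-a)\pi_i(a,b)-(b-a)\pi_i(a,a)$, and since $\xi_{\ell'_{ba}}+\xi_{\ell'_{ab}}=\Pr[(b,b)]$ the prefactor cancels, giving $p_i(b,b)=2b\,\pi_i(b)-(b-a)(\pi_i(a,b)+\pi_i(a,a))$.

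I do not expect a genuine obstacle here: the statement is a bookkeeping verification resting entirely on \Cref{lemma: non-iid-agens-induced-mech} and the flow-decomposition identity. The only places that need care are (i) reading off the simple-path structure of \Cref{fig: flow-non-iid-agents} and using $\sum_{\ell\in\Pcal_{v_i}}\xi_\ell=\Pr[v_i]$ rather than computing individual path weights, and (ii) tracking, along each path, which coordinate is decremented at each step so that the correct one of $\pi_i(a,b)$ and $\pi_i(a,a)$ is subtracted — in particular, verifying that the two $(b,b)$-rooted paths contribute the same bracket so that their weights recombine cleanly.
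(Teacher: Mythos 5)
Your proof is correct and follows essentially the same route as the paper: you read off the same five simple $s\to\bot$ paths, use the flow-decomposition identity $\sum_{\ell\in\Pcal_{v_i}}\xi_\ell=\Pr[v_i]$ to cancel the $1/\Pr[v_i]$ prefactor, and substitute into the generic payment formula, with the only care being which coordinate is decremented at each step so the right one of $\pi_i(a,b)$ or $\pi_i(a,a)=\pi_i(a)$ appears. The paper's proof is just a terser version of the same bookkeeping, computing the $(b,b)$ bracket in one step rather than tracing the two correction terms along each path as you do.
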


\begin{proof}

To prove that the payment rule shown in~\Cref{mech: non-iid-agents} is induced by the flow in~\Cref{fig: flow-non-iid-agents}, we must for the flow decomposition $\xi$ defined on the set of simple paths $\Pcal_{v_i}$ for all $v_i$, the payment for agent $i$ is $p_i(v_i) = \frac{1}{\Pr[v_i]}\sum_{\ell \in \Pcal_{v_i}} \xi_{\ell} \left(\sum_{j \in [2]} v_{i,j} \pi_{i,j}(v_i) - \sum_{z \in [|\ell|-3]}  \sum_{j \in [m]} \left( v_{i,j}^{z}-v_{i,j}^{z+1} \right)\pi_{i,j}(v_{i}^{z+1})\right).$ Starting with the payment of $v_i=(a,a)$, we can easily see that there is only one path from $v_i$ to $\bot$ with flow $q_i^2$: 
\begin{align*}
    p_i(a,a) &=\frac{1}{q_i^2} q_i^2 \left(2a\pi_i(a) \right)=2a\pi_i(a) 
\end{align*}
Following similar logic, we can show that $p_i(a,b)=p_i(b,a)$, since there is only one path from $(a,b)$ ($(b,a)$ respectively) to $\bot$ with flow $q_i(1-q_i)$: 
\begin{align*}
    p_i(b,a) = p_i(a,b) &= \frac{1}{q_i(1-q_i)} q_i(1-q_i) \left(b\pi_i(b) + a \pi_i(a,b) - (b-a)\pi_i(a)\right)\\
    &=b\pi_i(b) + a \pi_i(a,b) - (b-a)\pi_i(a)
\end{align*}
Finally, the payment for $(b,b)$ can be calculated from the two simple paths that exist between $b,b$ and $\bot$ with flow $1/2(1-q_i)^2$, that is $\{s,(b,b),(a,b),(a,a),\bot\},\{s,(b,b),(b,a),(a,a),\bot\}$: 
\begin{align*}
    p_i(b,b) &= \frac{1}{\Pr[v_i]}\sum_{\ell \in \Pcal_{v_i}} \xi_{\ell} \left(\sum_{j \in \{1,2\}} v_{i,j} \pi_{i,j}(v_i) - \sum_{z \in [1,|\ell|-3]}  \sum_{j \in [m]} \left( v_{i,j}^{z}-v_{i,j}^{z+1} \right)\pi_{i,j}(v_{i}^{z+1})\right) \\
    &=\frac{1}{(1-q_i)^2}\sum_{\ell \in \Pcal_{(b,b)}} \frac{1}{2}(1-q_i)^2 \left(2b\pi_i(b) - (b-a)\pi_i(a,b) - (b-a) \pi_i(a)\right)\\
    &= 2b\pi_i(b) - (b-a)(\pi_i(a,b) +  \pi_i(a))
\end{align*}
\end{proof}

\begin{lemma}
\label{lemma: non-iid-agents-bic-conditions}
Under the payments of \Cref{lemma: non-iid-agens-induced-mech}, the mechanism in
\Cref{mech: non-iid-agents} is BIR. The mechanism is BIC provided that for every agent $i$
\[
\pi_i(a)\;\le\;\pi_i(a,b)\;\le\;\pi_i(b)
\]
\end{lemma}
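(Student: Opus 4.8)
The plan is to reduce both claims to a bounded case analysis over the four valuations $\{(b,b),(b,a),(a,b),(a,a)\}$ of a fixed agent $i$, using the fact that in a hierarchy mechanism both the interim allocation and the (interim) payment of agent $i$ depend only on her \emph{reported} type. Write $\beta := \pi_i(b)$, $\gamma := \pi_i(a,b)=\pi_i(b,a)$, and $\alpha := \pi_i(a)=\pi_i(a,a)$; these are nonnegative, and the hypothesis to be used for BIC is exactly $\alpha\le\gamma\le\beta$. First I would read off, from the allocation rule together with the item-symmetry established in \Cref{lemma: non-iid-agens-induced-mech}, the interim allocation vector $\bigl(\pi_{i,1}(r_i),\pi_{i,2}(r_i)\bigr)$ associated with each report $r_i$: it equals $(\beta,\beta)$ for $(b,b)$, $(\beta,\gamma)$ for $(b,a)$, $(\gamma,\beta)$ for $(a,b)$, and $(\alpha,\alpha)$ for $(a,a)$, the only nontrivial point being that reporting $a$ on an item whose partner is reported $b$ gives win-probability $\gamma$ regardless of which of the two items it is.

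For BIR, I would substitute the payment formulas into $\mathbb{E}[u_i(v_i\to v_i)]=\sum_j v_{i,j}\,\pi_{i,j}(v_i)-p_i(v_i)$ for each of the four types; the computation collapses to $\mathbb{E}[u_i((b,b)\to(b,b))]=(b-a)(\gamma+\alpha)$, $\mathbb{E}[u_i((b,a)\to(b,a))]=\mathbb{E}[u_i((a,b)\to(a,b))]=(b-a)\alpha$, and $\mathbb{E}[u_i((a,a)\to(a,a))]=0$, each nonnegative because $b>a$ and $\alpha,\gamma,\beta\ge 0$. Hence BIR holds with no assumption on the ordering of $\alpha,\gamma,\beta$.

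For BIC, I would fix a true type $v_i$ and a report $r_i$, compute $\mathbb{E}[u_i(v_i\to r_i)]=\sum_j v_{i,j}\,\pi_{i,j}(r_i)-p_i(r_i)$ from the table and the payments, and subtract the truthful utility above; each difference telescopes into $(b-a)$ times a difference of two of $\alpha,\gamma,\beta$. Concretely: from $(b,b)$ the deviations to $(b,a)$ and $(a,b)$ are exactly indifferent and the deviation to $(a,a)$ loses $(b-a)(\gamma-\alpha)$; from $(b,a)$ (and symmetrically $(a,b)$) the deviations to $(b,b)$ and to $(a,b)$ both lose $(b-a)(\beta-\gamma)$, while the deviation to $(a,a)$ is indifferent; from $(a,a)$ the deviations to $(b,a)$ and $(a,b)$ lose $(b-a)(\beta-\alpha)$ and the deviation to $(b,b)$ loses $(b-a)(2\beta-\gamma-\alpha)$. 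By item-symmetry this list is exhaustive, and every loss is nonnegative exactly because $\alpha\le\gamma\le\beta$ (the bound $2\beta\ge\gamma+\alpha$ being a consequence of $\beta\ge\gamma$ and $\beta\ge\alpha$), i.e. $\pi_i(a)\le\pi_i(a,b)\le\pi_i(b)$.

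I do not expect a genuine obstacle here; the proof is finite bookkeeping. The one place calling for care is reading off the interim allocation vector of a \emph{misreported} type — in particular recognizing that a type with one high item who reports $(a,a)$ obtains exactly her truthful utility, which is why several of the incentive constraints hold with equality rather than with strict slack.
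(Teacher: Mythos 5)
Your proof is correct and follows essentially the same route as the paper's: you enumerate the four true types and all deviations, substitute the payment formulas, and reduce each incentive constraint to a comparison among $\pi_i(a,a),\pi_i(a,b),\pi_i(b)$ (your $\alpha,\gamma,\beta$). The only cosmetic difference is that you express things as ``losses'' (truthful minus deviating utility) rather than writing out the deviating utilities directly as the paper does.
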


\begin{theorem}\label{lemma: non-iid-agents-interim} \Cref{mech: non-iid-agents} is BIC, satisfying the necessary conditions from \Cref{lemma: non-iid-agents-bic-conditions}.  
\end{theorem}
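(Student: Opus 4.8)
The plan is to apply \Cref{lemma: non-iid-agents-bic-conditions}, which tells us that \Cref{mech: non-iid-agents} is BIR unconditionally and is BIC as soon as every agent $i$ satisfies $\pi_i(a)\le\pi_i(a,b)\le\pi_i(b)$; so it suffices to establish these two chains of interim-allocation inequalities. A direct, case-by-case comparison of the closed forms from \Cref{mech: non-iid-agents} works painlessly for $\pi_i(a)\le\pi_i(a,b)$ and for $\pi_i(a,b)\le\pi_i(b)$ in the second and third regimes (where one of the two quantities is simply $0$), but it breaks down for $\pi_i(a,b)\le\pi_i(b)$ in the first regime: there $\pi_i(a,b)$ can exceed any individual summand of the subset-sum defining $\pi_i(b)$, so no term-by-term argument suffices. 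This is the step I expect to be the main obstacle, and I would sidestep it by proving a single structural monotonicity fact rather than manipulating the formulas.

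The fact is: \emph{fix an item (say item $1$, the other being symmetric) and fix the other agents' reports $v_{-i}$; then the randomized share that the hierarchy allocation rule gives agent $i$ for that item is non-decreasing in the virtual value $H_{i,1}(\cdot)$ that $i$ presents.} Granting this, both inequalities follow by taking expectations over $v_{-i}\sim\Dcal_{-i}$, because---as already read off the virtual-value expressions in the proof of \Cref{lemma: non-iid-agens-induced-mech}---one has $H_{i,1}(a,a)\le H_{i,1}(a,b)< b=H_{i,1}(b,\cdot)$, with the first inequality strict unless $q_i\in\{0,1\}$ (a degenerate case where the report in question never occurs or the two virtual values coincide, so the inequalities are immediate). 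The monotonicity itself I would prove pointwise in $v_{-i}$: the other agents' item-$1$ virtual values $H_{k,1}(v_k)$ do not depend on $i$'s report, so raising $H_{i,1}$ keeps the maximum item-$1$ virtual value weakly maximal at $i$; if $i$'s share was already positive then no competitor's virtual value exceeds $i$'s old value, hence none can reach $i$'s new (strictly larger, in the non-degenerate case) value, so the set of top scorers can only shrink and $i$'s share weakly increases; if $i$'s share was zero there is nothing to prove. Specialized to $(a,b)$ versus $b$: on any $v_{-i}$ where some competitor reports $b$ for item $1$, agent $i$'s $(a,b)$-share is $0$ while her $b$-share is non-negative; otherwise reporting $b$ makes $i$ the unique top scorer with positive score $b$, so her $b$-share is $1$, dominating her $(a,b)$-share.

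The bookkeeping I would be most careful about is the boundary clause of the hierarchy rule when the maximum item-$1$ score equals $0$: if $H_{i,1}(a,a)=0$, agent $i$'s $(a,a)$-share is at most $\delta$ times a tie-splitting factor, whereas switching to $(a,b)$ makes her score strictly positive (non-degenerate case), so she receives item $1$ with the full conditional probability, which still dominates; the values $q_i\in\{0,1\}$ must be excluded at the outset. I would also note explicitly that in the second regime $\pi_i(a)=0\le\pi_i(a,b)$ and in the third $\pi_i(a)=\pi_i(a,b)=0$, so the coupling argument is doing substantive work only in the first regime, where it simultaneously yields $\pi_i(a)\le\pi_i(a,b)$ and $\pi_i(a,b)\le\pi_i(b)$. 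Combining the established inequalities with \Cref{lemma: non-iid-agents-bic-conditions} then gives that \Cref{mech: non-iid-agents} is BIC (and BIR), completing the proof.
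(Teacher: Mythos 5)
Your argument is correct, but it departs from the paper's proof in a meaningful way. The paper shows $\pi_i(a)\le\pi_i(a,b)\le\pi_i(b)$ by direct comparison of the closed-form interim probabilities, regime by regime: it observes that $\prod_{k\neq i}q_k$ (and its multiples by $\prod_{k\in S_i^2}q_k\le 1$ or $\prod_{k\in S_i^3}q_k\le1$) sits inside the subset-sum defining $\pi_i(b)$, so each inequality follows by dropping non-negative terms or multiplying by a factor in $[0,1]$. You instead prove a pointwise (in $v_{-i}$) monotonicity fact about the hierarchy rule --- that agent $i$'s item-$j$ share is non-decreasing in $H_{i,j}(\cdot)$ --- and then invoke only the ordering $H_{i,1}(a,a)\le H_{i,1}(a,b)\le b$, which is read directly off the virtual-value formulas. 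This is more structural: it never touches the $S_i^1,\dots,S_i^4$ casework or the explicit interim probabilities, it handles the $\delta$-boundary clause cleanly, and it would transfer unchanged to other hierarchy allocations. The tradeoff is that you must establish and carefully state the monotonicity lemma rather than perform a one-line algebraic comparison.

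One correction to your rationale, however: you claim that in the first regime ``$\pi_i(a,b)$ can exceed any individual summand of the subset-sum defining $\pi_i(b)$, so no term-by-term argument suffices.'' This is false, and it is precisely the inequality the paper's proof exploits. The $z=1$ (i.e.\ $S=\emptyset$) summand of
\[
\pi_i(b)=\sum_{z=1}^{n}\frac{1}{z}\sum_{\substack{S\subseteq[n]\setminus\{i\}\\ |S|=z-1}}\prod_{k\in S}(1-q_k)\prod_{k\in[n]\setminus(S\cup\{i\})}q_k
\]
is exactly $\prod_{k\neq i}q_k=\pi_i^{(1)}(a,b)$, so the term-by-term bound $\pi_i(b)\ge\pi_i^{(1)}(a,b)$ is immediate, and the bounds for the other regimes follow by multiplying by factors at most one. (The paper's text attributes this summand to ``$z=n$,'' which appears to be a typo for $z=1$, but the underlying argument is sound.) Your coupling approach is a legitimate and arguably cleaner alternative, but it is not forced by any failure of the direct comparison.
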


\begin{proof}
Recall that \Cref{mech: non-iid-agents} has three payment and allocation rules depending on the probability $q_i$ of agent $i$. To prove that the mechanism is BIC, we must show that $\pi_i(a)\le \pi_i(a,b)\le\pi_i(b)$. We start from the trivial case where agent $i$ is in the third case since it is easy to see that $\pi_i(b)\ge  \pi_i^{(3)}(a,b) = \pi_i^{(3)}(a) = 0.$ 

When agent $i$ is in the second case, it suffices to show that $\pi_i(b)\ge  \pi_i^{(2)}(a,b)$ since $\pi_i^{(2)}(a) = 0.$ Recall from \Cref{lemma: non-iid-agens-induced-mech} that when she reports $b$ she takes that item with probability $\pi_i(b)=\sum_{z = 1 }^{n} \frac{1}{z} \cdot \sum_{S \subseteq [n]\setminus\{i\}, |S| = z-1} \;\prod_{k\in S} (1-q_k) \prod_{k \in [n]\setminus (S\cup \{i\})} q_k$, and when she reports $(a,b)$ or $(b,a)$ she gets the item she reported $a$ with probability $\pi_i^{(2)}(a,b)=\pi_i^{(2)}(b,a) = \prod_{k\in S_i^2} q_k^2 \prod_{\ell \notin S_i^2} q_\ell$. Rearrange $\prod_{k\in S_i^2} q_k \prod_{\ell \in [n]\setminus\{i\}} q_\ell$. Notice that the event when everyone reports $a$ for one item ($\prod_{\ell \in [n]\setminus\{i\}} q_\ell$) is included as an event in $\pi_i(b)$ for $z=n$. Multiplying by $\prod_{k\in S_i^2} q_k \le 1$ makes the probability even smaller. Therefore, $\pi_i(b) \ge \pi_i^{(2)}(a,b)$. 

Finally, when agent $i$ is in the first case and all the interim allocation probabilities are non-negative. From \Cref{lemma: non-iid-agens-induced-mech}, recall that $\pi_i(b)$ is the same as in the previous case, $\pi_i^{(1)}(a,b) = \pi_i^{(1)}(b,a) =\prod_{k\in [n]\setminus\{i\}} q_k $, and $\pi_i^{(1)}(a) =  \prod_{k\in S_i^3} q_k^2 \prod_{\ell \notin S_i^4} q_\ell $. Using similar arguments as in the previous case, we can see that $\pi_i(b)\ge \pi_i(a,b)$ since the event where everyone reports $a$ is included in $\pi_i(b).$ Notice that $ \pi_i^{(1)}(a) = \prod_{k\in S_i^3} q_k^2 \prod_{\ell \in S_i^4} q_\ell =  \prod_{k\in S_i^3} q_k \cdot \pi_i(a,b) $. Therefore, $\pi_i^{(1)}(b,a)\ge \pi_i^{(1)}(a)$ which concludes the proof.
\end{proof}

\subsection{Third Axis} \label{sec: thirdAxis}

In this section, we consider $n$ i.i.d. bidders and two items, where each bidder’s valuation for item~1 is $a$ with probability $p$ and $b$ with probability $1-p$, and independently their valuation for item~2 is $a$ with probability $q$ and $b$ with probability $1-q$, where $b>a.$ We show that the optimal mechanism is induced by the simple flow shown in \Cref{fig: flow-non-iid-items-split}, where $x$ depends on the values of $a,b,p,$ and $q$. These values split the optimal mechanism into seven regions. In this setting, since the agents are iid, we remove the subscript $i$, and the set of profiles for all agents is $\Vcal = \{(b,b),(b,a),(a,b),(a,a)\}$. By symmetry, we focus on the case where $p\ge q.$ We start by showing that the virtual values presented in table~\ref{tab: virtual-values-non-iid-items} are induced by the flow shown in~\Cref{fig: flow-non-iid-items-split}.
\begin{figure}[ht]
  \centering
  \begin{subfigure}{0.45\textwidth}
    \centering
    \begin{tikzpicture}[scale=0.7,  transform shape,  node distance=2.5cm and 2.5cm,
    vertex/.style={draw, circle, minimum size=1.2cm},
    edge/.style={draw, -{Latex[length=2mm]}},
    ]

\node[vertex, label=below:{$(1-p)q$}] (ba) {$b,a$};
\node[vertex, yshift=-4cm, label=below:{$p(1-q)$}] (ab) {$a,b$};
\node[vertex, left=of ab,yshift=2cm, label=below:{$(1-p)(1-q)$}] (bb) {$b,b$};

\node[vertex, right=of ab, yshift=2cm, label=below:{$pq$}] (aa) {$a,a$};

  \path[edge] (ab) edge
    node[midway,sloped,above] {$p(1-q)+x$}
    (aa);

  \path[edge] (ba) edge
    node[midway,sloped,above] {$1-p-x$}
    (aa);

  \path[edge] (bb) edge
    node[midway,sloped,above] {$(1-p)(1-q)-x$}
    (ba);

  \path[edge] (bb) edge
    node[midway,sloped,above] {$x$}
    (ab);

\end{tikzpicture}
    \caption{Node $(b,b)$ splits its flow between $(a,b)$ and $(b,a)$.}
    \label{fig: flow-non-iid-items-split}
  \end{subfigure}%
  ~ 
  \begin{subfigure}{0.55\textwidth}
  \centering
        \begin{tabular}{|c|c|c|}
    \hline
     & Item 1 & Item 2 \\ \hline
    $(b,b)$ & $b$ & $b$ \\
    $(b,a)$ & $b$ & $a - \tfrac{(1-p)(1-q)-x}{(1-p)q}(b-a)$ \\
    $(a,b)$ & $a - \tfrac{x}{p(1-q)}(b-a)$ & $b$ \\
    $(a,a)$ & $a - \tfrac{1-p-x}{pq}(b-a)$ 
          & $a - \tfrac{p(1-q)+x}{pq}(b-a)$ \\ \hline
  \end{tabular}
  \vspace{0.7cm}
  \caption{Virtual values parametrized by  $x.$}
  \label{tab: virtual-values-non-iid-items}
  \end{subfigure}
  \caption{Dual flow parametrized by $x\in[0,(1-p)(1-q)]$ and the corresponding virtual values.}
  \label{fig: flows-and-virtual-values}
\end{figure}

\begin{lemma}
    The virtual values shown in Table~\ref{tab: virtual-values-non-iid-items} are induced by the parametrized flow shown in~\Cref{fig: flow-non-iid-items-split}.
\end{lemma}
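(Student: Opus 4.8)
The plan is to follow the same template used in the proofs of \Cref{lemma: first axis mechanism induced by flow} and \Cref{lemma: non-iid-agens-induced-mech}: recall that the Flow Induced Mechanism sets the hierarchy score of agent $i$ for item $j$ at type $v_i$ equal to the virtual value
\[
H_{i,j}(v_i) \;=\; v_{i,j} \;-\; \frac{1}{\Pr[v_i]}\sum_{v'_i \in \Vcal_i} \lambda(v'_i, v_i)\,\bigl(v'_{i,j} - v_{i,j}\bigr),
\]
so it suffices to read the edge flows off \Cref{fig: flow-non-iid-items-split} and evaluate this expression at each of the four types $(b,b),(b,a),(a,b),(a,a)$ and each of the two items. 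The first observation is that in this flow the only positive $\lambda(v'_i,v_i)$ correspond to edges between types differing in exactly one coordinate, in which $v'_i$ carries a $b$ and $v_i$ carries an $a$; the source injects $\Pr[v_i]$ into each node and the single sink edge $\mu(a,a)=1$ absorbs everything.

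Consequently, for any coordinate $j$ with $v_{i,j}=b$ every term of the sum vanishes (either $v'_{i,j}-v_{i,j}=0$ or $\lambda(v'_i,v_i)=0$), which immediately gives $H_{i,1}(b,b)=H_{i,2}(b,b)=H_{i,1}(b,a)=H_{i,2}(a,b)=b$, matching the $b$-entries of \Cref{tab: virtual-values-non-iid-items}. For the $a$-valued coordinates I would simply substitute: the node $(b,a)$ has the single parent $(b,b)$ sending it flow $(1-p)(1-q)-x$ along an edge that changes only coordinate $2$, so dividing by $\Pr[(b,a)]=(1-p)q$ yields $H_{i,2}(b,a)=a-\tfrac{(1-p)(1-q)-x}{(1-p)q}(b-a)$, and symmetrically $H_{i,1}(a,b)=a-\tfrac{x}{p(1-q)}(b-a)$. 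The node $(a,a)$ has two incoming edges, and the only subtlety of the whole argument is here: the edge from $(b,a)$ (flow $1-p-x$) changes only coordinate $1$, while the edge from $(a,b)$ (flow $p(1-q)+x$) changes only coordinate $2$, so each contributes to exactly one of the two virtual values. Dividing each contribution by $\Pr[(a,a)]=pq$ gives $H_{i,1}(a,a)=a-\tfrac{1-p-x}{pq}(b-a)$ and $H_{i,2}(a,a)=a-\tfrac{p(1-q)+x}{pq}(b-a)$, which are exactly the remaining table entries.

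Finally, for completeness I would check that the displayed object really is a feasible acyclic flow for every $x\in[0,(1-p)(1-q)]$: flow conservation at $(b,b),(b,a),(a,b),(a,a)$ reduces to the identities $x+\bigl((1-p)(1-q)-x\bigr)=(1-p)(1-q)$, $(1-p)q+(1-p)(1-q)-x=1-p-x$, $p(1-q)+x=p(1-q)+x$, and $pq+(1-p-x)+(p(1-q)+x)=1=\mu(a,a)$; nonnegativity of every edge flow amounts precisely to $0\le x\le(1-p)(1-q)$ (which also forces $x\le 1-p$). Since the resulting graph is a DAG, the Flow Decomposition Theorem applies and the Flow Induced Mechanism is well defined. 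There is no genuine obstacle in this lemma — it is pure bookkeeping — and the only place where care is needed is the node $(a,a)$, where one must keep track of which incoming edge perturbs which coordinate.
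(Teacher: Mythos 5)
Your proof is correct and follows essentially the same approach as the paper's: split on whether $v_j=b$ (all terms vanish) or $v_j=a$ (exactly one parent edge contributes), then read the edge flows off \Cref{fig: flow-non-iid-items-split} and divide by $\Pr[v]$. The additional flow-conservation and nonnegativity check at the end is sound bookkeeping but not part of the paper's argument for this particular lemma.
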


\begin{proof}
    Recall that the virtual values are given by 
    \[H_j(v) = v_j - \frac{1}{\Pr[v]}\sum_{v'\in\Vcal}\lambda(v',v)(v'_j-v_j)\]
    
    First, we consider the case where $v_j=b$. Then for any $v'\in \Vcal,$ either $v'_j=b$ or $\lambda(v',v)=0$ by construction of the flow. Hence, all terms in the sum are zero, making $H_j(v)=b.$ Now consider the case where $v_j=a$. Here exactly one node $v'$ such that $\lambda(v',v)>0$ and $v'_j-v_j=b-a$. We can easily confirm that on the path $(b,b)\to(b,a)$ with flow $(1-p)(1-q)-x$, the virtual value becomes $H_2(b,a)=a-\frac{(1-p)(1-q)-x}{(1-p)q}(b-a)$. For the path $(b,b)\to(a,b)$ with flow $x$ the virtual value becomes $H_1(a,b) = a -\frac{x}{p(1-q)}(b-a)$. Similarly, for the path $(b,a)\to(a,a)$ with flow $1-p-x$ the virtual value is $H_1(a,a) = a \frac{1-p-x}{pq}(b-a).$ Finally, for the path $(a,b)\to(a,a)$ with flow $p(1-q)+x$ the virtual value is $H_2(a,a)=a-\frac{p(1-q)+x}{pq}(b-a).$   
\end{proof}

Next, the following technical lemma states that the virtual values are monotone for each item. The proof is shown in~\Cref{apx: proofs-thr-axis}.
\begin{lemma}\label{lemma: non-iid-items-monotone}
For the virtual values shown in~\Cref{tab: virtual-values-non-iid-items}, for all $x\in[0,(1-p)(1-q)],$ 
    \[H_1(a,b)\ge H_1(a,a)\quad \&\quad H_2(b,a) \ge H_2(a,a) \] 
\end{lemma}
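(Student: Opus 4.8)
The plan is to verify both inequalities by direct substitution of the closed forms from Table~\ref{tab: virtual-values-non-iid-items} together with elementary algebra, using only that $b>a$ and that $p,q\in(0,1)$, so that all the denominators appearing ($p$, $q$, $1-p$, $1-q$, $pq$, $p(1-q)$, $(1-p)q$) are strictly positive. Observe that every virtual value in the table has the form $a-(\text{nonnegative coefficient})\cdot(b-a)$, so comparing two virtual values for the same item is equivalent, after cancelling the common $a$ and dividing by $b-a>0$, to comparing their coefficients in the reverse direction. I would state the nondegeneracy assumption $p,q\in(0,1)$ explicitly at the outset (if $p$ or $q$ lies in $\{0,1\}$ the corresponding types have probability zero and the formulas degenerate, so the claim is vacuous or follows by a limiting argument).

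For the first inequality $H_1(a,b)\ge H_1(a,a)$, I would plug in $H_1(a,b)=a-\frac{x}{p(1-q)}(b-a)$ and $H_1(a,a)=a-\frac{1-p-x}{pq}(b-a)$, cancel $a$, divide by $b-a>0$ (flipping the inequality), and reduce to $\frac{x}{p(1-q)}\le\frac{1-p-x}{pq}$. Clearing denominators by multiplying through by $pq(1-q)>0$ and rearranging turns this into $xq+x(1-q)\le(1-p)(1-q)$, i.e.\ $x\le(1-p)(1-q)$ — which is precisely the right endpoint of the admissible interval for $x$, hence holds by hypothesis. For the second inequality $H_2(b,a)\ge H_2(a,a)$, the same steps with $H_2(b,a)=a-\frac{(1-p)(1-q)-x}{(1-p)q}(b-a)$ and $H_2(a,a)=a-\frac{p(1-q)+x}{pq}(b-a)$ reduce the claim to $\frac{(1-p)(1-q)-x}{(1-p)q}\le\frac{p(1-q)+x}{pq}$; multiplying through by $p(1-p)q>0$ and cancelling the common term $p(1-p)(1-q)$ on both sides leaves $-px\le(1-p)x$, i.e.\ $0\le x$ — precisely the left endpoint of the admissible interval.

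There is essentially no substantive obstacle here: the lemma is a bookkeeping check, and the only point requiring attention is keeping track of signs when clearing denominators (all harmless once $p,q\in(0,1)$ is in force) and noticing that the two inequalities land exactly on the two endpoints of the range $x\in[0,(1-p)(1-q)]$, which is why the parametrization was chosen with that range in the first place.
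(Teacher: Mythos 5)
Your proof is correct and is essentially the paper's own argument: both reduce each inequality to a comparison of the coefficient of $(b-a)$ in Table~\ref{tab: virtual-values-non-iid-items} and, after clearing denominators, land exactly on $x\le(1-p)(1-q)$ and $x\ge0$ respectively. The only cosmetic difference is that the paper simplifies the difference $H_j(\cdot)-H_j(\cdot)$ directly into a single fraction rather than cross-multiplying the inequality.
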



In this setting, the number of agents $n,$ the probabilities $p,q$, and the valuations $a,b$. dictate the optimal mechanism. We show that there are seven distinct regions where the optimal mechanism slightly varies. Before we dive into the different regions, we calculate the payments that are induced by the parametrized flow shown in~\Cref{fig: flow-non-iid-items-split}. Recall that the technique we use to deduce the payments is by decomposing the flow into its simple components. Since the node $(b,b)$ is the only node that splits its incoming flow, we get that only the payment of $(b,b)$ depends on $x$, all other payments have a closed form. We deferred the proof to~\Cref{apx: proofs-thr-axis}.

\begin{lemma}\label{lemma: non-iid-items-payments}
The payment rule induced by the parametrized flow shown in~\Cref{fig: flow-non-iid-items-split} is

\begin{align*}
    p(b,b) &= b(\pbb + \qbb) - (b - a)(\qab + \paa)\\
    &\qquad\qquad\qquad+\frac{x}{(1 - p)(1 - q)} (b - a) \left( \qab - \pab + \paa - \qaa \right)\\
    p(b,a) &= a \qab + b \pbb - (b-a) \paa \\
    p(a,b) &= a \pab + b \qbb - (b-a)\qaa\\
    p(a,a) &= a \cdot \left( \paa + \qaa \right)
\end{align*}

\end{lemma}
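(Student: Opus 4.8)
The plan is to instantiate the payment formula from the ``Flow Induced Mechanism'' box, which reads $p_i(v_i)=\tfrac{1}{\Pr[v_i]}\sum_{\ell\in\Pcal_{v_i}}\xi_\ell\big(\sum_j v_{i,j}\pi_{i,j}(v_i)-\sum_{z}\sum_j(v^z_{i,j}-v^{z+1}_{i,j})\pi_{i,j}(v^{z+1}_i)\big)$, after writing down the simple-path decomposition of the flow in \Cref{fig: flow-non-iid-items-split}. First I would enumerate the simple $s$–$\bot$ paths. Since $\mu(a,a)=1$ is the only edge into $\bot$ and $(b,b)$ is the only node that splits its outgoing flow, there are exactly five paths: the three-node path $s\to(a,a)\to\bot$; the two four-node paths $s\to(a,b)\to(a,a)\to\bot$ and $s\to(b,a)\to(a,a)\to\bot$; and the two five-node paths $s\to(b,b)\to(a,b)\to(a,a)\to\bot$ and $s\to(b,b)\to(b,a)\to(a,a)\to\bot$. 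The decomposition $\xi$ is then forced: the three paths not passing through $(b,b)$ carry exactly the source mass of their first node, namely $pq$, $p(1-q)$ and $(1-p)q$, while the two paths through $(b,b)$ carry $x$ and $(1-p)(1-q)-x$. A one-line check confirms that this $\xi$ satisfies conditions (i)–(iii) of the Flow Decomposition definition; in particular $\sum_{\ell\in\Pcal_v}\xi_\ell=\Pr[v]$ for each $v\in\Vcal$, and $\Pcal_v$ is a singleton for $v\in\{(a,a),(a,b),(b,a)\}$ while $\Pcal_{(b,b)}$ contains the two five-node paths.

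Next I would substitute each type into the payment formula. The key structural fact is that along every edge of this graph the two endpoints differ in exactly one coordinate, which is $b$ at the tail and $a$ at the head; hence each inner summand $\sum_j(v^z_{i,j}-v^{z+1}_{i,j})\pi_{i,j}(v^{z+1}_i)$ collapses to a single term $(b-a)\,\pi_{i,j^\star}(v^{z+1}_i)$ for the flipped coordinate $j^\star$. For $(a,a)$, $(a,b)$ and $(b,a)$ the $1/\Pr[v]$ cancels the single path mass and the payment is read off immediately (for $(a,a)$ there is no inner step at all). For $(b,b)$, one forms the $x$-weighted combination of the two bracketed quantities coming from the routes through $(a,b)$ and $(b,a)$, divides by $(1-p)(1-q)$, and collects terms; the part independent of $x$ is common to both routes, and the part multiplying $x$ is exactly the difference between the two routes, which yields the stated $(b-a)\big(\qab-\pab+\paa-\qaa\big)$ correction. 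I keep the interim allocations $\pi_{i,j}(\cdot)$ symbolic throughout, using only the two identities $\pi_2(a,b)=\pi_2(b,b)$ and $\pi_1(b,a)=\pi_1(b,b)$, valid because an agent reporting $b$ for an item has the maximal virtual value $b$ there, so her interim allocation for that item does not depend on her report on the other item; this is what turns the raw expressions into the $\pbb,\qbb,\pab,\qab,\paa,\qaa$ of the statement.

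The argument is essentially bookkeeping, so the only place that demands care — and the main (mild) obstacle — is the two five-node paths feeding $(b,b)$: one must track, for each of the two steps, which coordinate is flipped and at which intermediate node the interim allocation is evaluated (for the route through $(a,b)$ these are $\pi_1(a,b)$ and $\pi_2(a,a)$; for the route through $(b,a)$ they are $\pi_2(b,a)$ and $\pi_1(a,a)$), and then separate out the terms surviving multiplication by $x$ from those multiplying $(1-p)(1-q)-x$. I would present the four cases in the order $(a,a)$, $(b,a)$, $(a,b)$, $(b,b)$, each as a short display, and remark explicitly that $(b,b)$ is the unique type lying strictly above the branching node, which is precisely why it is the only payment depending on the free parameter $x$.
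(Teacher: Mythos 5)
Your proposal is correct and follows essentially the same route as the paper's proof: enumerate the simple $s$–$\bot$ paths, read off the forced flow decomposition, and substitute each type into the flow-induced payment formula, with $(b,b)$ being the only type whose payment depends on $x$ because it is the unique branching node. You are slightly more careful than the paper in making explicit the identities $\pi_1(b,a)=\pi_1(b,b)$ and $\pi_2(a,b)=\pi_2(b,b)$ (which are needed to rewrite the raw interim allocations in terms of $\pbb,\qbb$, and which the paper leaves as a ``quickly verify''), but the argument and its bookkeeping are the same.
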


Next, we show the necessary conditions the interim probabilities must satisfy in order for the mechanism to be BIC. From complementary slackness, we know that if a dual variable is strictly positive, then the corresponding constraint in the primal is tight. In our case, when a flow variable $\lambda(v,v')>0$ then the BIC constraint holds with equality, $\Ex{v\to v} = \Ex{v\to v'}$. This is particularly useful when the parameter of the flow $x<(1-p)(1-q)$ since $\lambda((b,b),(b,a))>0$ and thus $\Ex{(b,b)\to(b,b)} = \Ex{(b,b)\to(b,a)}$. We can easily calculate $\Ex{(b,b)\to(b,b)} = (b - a)(\qab + \paa)  - \frac{x}{(1 - p)(1 - q)} (b - a) \left( \qab - \pab + \paa - \qaa \right).$  Similarly, $\Ex{(b,b)(b,a)} = b(\pbb+\qab) - p(b,a) =(b - a)(\qab + \paa)$. Therefore,  for any $x<(1-p)(1-q)$ we get that 
\begin{equation}\label{eq: BIC-variantI-eq}
    \qab - \pab + \paa - \qaa =0
\end{equation}
From the previous observation, we can define two variants of payment in the case of $(b,b)$. Variant~$I$, when $x<(1-p)(1-q)$ and $\qab - \pab + \paa - \qaa = 0$, the payment is reduced to 
\begin{equation}
    p_1(b,b) = b(\pbb+\qbb)-(b-a)(\pab+\qaa).\label{eq: non-iid-items-p1(b,b)}
\end{equation}
Variant~$II$, when $x=(1-p)(1-q)$ and the BIC constraint is not tight, we get the payment of $(b,b)$ is 
\begin{equation}
p_2(b,b)= b(\pbb+\qbb) -(b-a)(\pab+\qaa).\label{eq: non-iid-items-p2(b,b)}
\end{equation}

This leads us to the following lemma that gives us the necessary conditions for the mechanism to be BIC in each variant.  We rely heavily on this to show that our mechanism is BIC in all the regions. 

\begin{lemma}\label{lemma: non-iid-items-BIC}
The mechanism is Bayesian individual rational (BIR). The mechanism is Bayesian incentive‐compatible (BIC) if, and only if, 
\begin{itemize}
    \item Variant $I$ $(\qab - \pab + \paa - \qaa = 0)$:
    \begin{align}    
\label{eq: BIC-variantI}
\qbb\ge\qab&\ge\qaa, \nonumber \\
\qab + \paa &\ge \pab +\qaa, \\
\pbb+\qaa &\ge \qab + \paa\nonumber,\\
\pbb &\ge \paa.\nonumber
\end{align}
\item Variant $II$ $(x=(1-p)(1-q))$:
    \begin{align}
\label{eq: BIC-variantII}
\pbb\ge\pab&\ge\paa, \nonumber \\
\pab + \qaa &\ge \qab +\paa,\nonumber \\
\qbb+\paa &\ge \pab + \qaa,\\
\qbb &\ge \qaa.\nonumber
\end{align}
\end{itemize}

\end{lemma}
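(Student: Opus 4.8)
The plan is to prove BIR by a direct computation of the four on‑path utilities, and to prove the BIC characterization by writing out all sixteen expressions $\Ex{v\to v'}$ in closed form and checking that the twelve off‑diagonal incentive constraints become, after dividing by $b-a>0$, precisely the listed linear inequalities in the six interim probabilities $\pbb,\qbb,\pab,\qab,\paa,\qaa$ (or immediate consequences of them).

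\textbf{Interim allocations and BIR.} First I would record the interim allocation induced by the hierarchy rule. Since $H_j(v)=b$ whenever $v_j=b$, and $H_1$ (resp.\ $H_2$) does not depend on the report for the other item, a report $v'$ yields $\pi_1(v')=\pbb$ if $v'_1=b$ and $\pi_1(v')\in\{\pab,\paa\}$ according to whether $v'_2=b$ or $a$ if $v'_1=a$, and symmetrically for item $2$. Substituting these allocations and the payments of \Cref{lemma: non-iid-items-payments} into $\Ex{v\to v'}=\sum_j v_j\,\pi_j(v')-p(v')$, a short calculation gives $\Ex{(a,a)\to(a,a)}=0$, $\Ex{(b,a)\to(b,a)}=(b-a)\paa$, $\Ex{(a,b)\to(a,b)}=(b-a)\qaa$, and $\Ex{(b,b)\to(b,b)}=(b-a)(\pab+\qaa)$, each being $b-a$ times a non‑negative combination of interim probabilities; hence all four are $\ge 0$ and the mechanism is BIR.

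\textbf{The twelve incentive constraints.} For BIC I would tabulate $\Ex{v\to v'}$ for every ordered pair $v\neq v'$. Each such value is $(b-a)$ times an explicit signed sum of the six probabilities, except that for $v'=(b,b)$ it carries the term $a(\pbb+\qbb)-p(b,b)$, which the payment formula rewrites as $-(b-a)\big[(\pbb+\qbb)-(\pab+\qaa)\big]$ in the variant at hand. Dividing $\Ex{v\to v}\ge\Ex{v\to v'}$ by $b-a>0$ turns each constraint into a linear inequality, and I would sort the twelve outcomes into three groups: (i) the deviations into $(a,a)$ out of $(a,b)$ and $(b,a)$, which hold with equality; (ii) $(b,b)\to(a,b)$, and in Variant~I also $(b,b)\to(b,a)$, which hold with equality — as expected from the flow, the edge $(b,b)\to(a,b)$ carrying flow $x$ and (in Variant~I) $(b,b)\to(b,a)$ carrying positive flow; (iii) the remaining constraints, each of which equals one of the listed inequalities verbatim or the sum of two of them. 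For example, in Variant~II the constraint $(a,b)\to(b,b)$ reduces to $\pbb\ge\pab$, $(b,b)\to(a,a)$ to $\pab\ge\paa$, $(b,b)\to(b,a)$ to $\pab+\qaa\ge\qab+\paa$, $(b,a)\to(a,b)$ to $\qbb+\paa\ge\pab+\qaa$, $(a,a)\to(a,b)$ to $\qbb\ge\qaa$, and $(a,a)\to(b,b)$ to $\pbb+\qbb\ge\pab+\qaa$, which is the sum of $\pbb\ge\pab$ and $\qbb\ge\qaa$. This settles both directions at once: sufficiency, because the listed conditions imply all twelve constraints; and necessity, because each listed inequality is one of the constraints. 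The Variant~I computation is entirely parallel, using \eqref{eq: BIC-variantI-eq} to rewrite $p(b,b)$ as $p_1(b,b)$ and to collapse the pair $\{(b,b)\to(b,a),(b,b)\to(a,b)\}$ to equalities; the resulting list is the Variant~II list under the formal relabeling $(\pbb,\pab,\paa)\leftrightarrow(\qbb,\qab,\qaa)$, which mirrors the two possible routings of $(b,b)$'s outflow.

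\textbf{Main obstacle.} There is no conceptual difficulty here, only careful bookkeeping: tracking the twelve constraints in each of the two variants, distinguishing those that are slack ``for free'' from those that encode genuine conditions, and remembering that in Variant~I the payment $p_1(b,b)$ can be simplified only after invoking \eqref{eq: BIC-variantI-eq}. The relabeling symmetry between the two variants roughly halves the work.
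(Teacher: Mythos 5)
Your proposal is correct and takes essentially the same route as the paper's appendix proof: write out $\Ex{v\to v'}$ for all twelve ordered pairs, divide by $b-a>0$, note that deviations along flow-carrying edges are tight, and collect the remaining constraints into the listed inequalities (necessity because each listed inequality is itself one of the constraints, sufficiency because the others follow as sums). The relabeling symmetry $(\pbb,\pab,\paa)\leftrightarrow(\qbb,\qab,\qaa)$ between the two variants is a tidy observation that the paper does not make explicit, and your sample computations for Variant~II check out.
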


Now that we have presented the parameterized mechanism and the necessary BIC conditions, we define the interim allocation probabilities when the virtual values are positive and distinct. That is $H_1(a,b)>H_1(a,a)>0$ and $H_2(b,a)>H_2(a,a)>0$. Recall that the hierarchical allocation mechanism awards an item to the agent with the highest non-negative virtual value, breaking ties uniformly at random.  When the virtual value is zero, the mechanism flips a biased coin (probability $\delta$ of ``heads"), and only if it comes up heads does it allocate uniformly at random. This reduced-probability trick is essential for the optimal mechanism, as we will see in the region-by-region analyses.

\vspace{5pt}
\noindent\fbox{%
    \parbox{\textwidth}{%
    \smallskip
        For the first item, if she reports $b$, then
        \begin{equation}\label{eq: pbb}
            \pbb=\sum_{\ell=1}^n \frac{1}{\ell}\binom{n-1}{\ell-1}(1-p)^{\ell-1}p^{n-\ell}= \frac{1 - p^n}{n(1-p)},
        \end{equation} 
        where the last equality follows from \Cref{lemma: technical sum binom}. By symmetry, if she reports $b$ for the second item,
        \begin{equation}\label{eq: qbb}
            \qbb = \sum_{\ell=1}^n \frac{1}{\ell}\binom{n-1}{\ell-1}(1-q)^{\ell-1}q^{n-\ell}=\frac{1 - q^n}{n(1-q)}.
        \end{equation}
        When an agent reports $(a,b)$, she shares the first item with everyone who reported $(a,b)$, given that the remaining agents have reported $(a,a)$. This gives us the following 
        \begin{equation}\label{eq: pab}
            \pab = \sum_{\ell=1}^n \frac{1}{\ell}\binom{n-1}{\ell-1} (p(1-q))^{\ell-1}(pq)^{n-\ell} = p^{n-1}\qbb.
        \end{equation}
        Likewise reporting $(b,a)$, yields
         \begin{equation}\label{eq: qab}
             \qab = \sum_{\ell=1}^n \frac{1}{\ell}\binom{n-1}{\ell-1} ((1-p)q)^{\ell-1}(pq)^{n-\ell} = q^{n-1}\pbb.
         \end{equation}
         Finally, reporting $(a,a)$ wins only when all other agents have also reported $(a,a)$, that is 
         \begin{equation}\label{eq: paa}
             \paa=\qaa= \frac{1}{n}(pq)^{n-1}.
         \end{equation}
         
    }%
}
\vspace{5pt}

 It is important to note that the interim probabilities are different in each region depending on the sign of the virtual value. However, these are going to be crucial for the analysis of the regions. Before we proceed, we show the following technical lemma about the interim probabilities. 

\begin{lemma}\label{lemma: non-iid-items-probabilities-monotonicity}
Under the definitions in 
\Cref{eq: pbb,eq: qbb,eq: pab,eq: qab,eq: paa}, if $p\ge q$ then
\[
  \pbb\;\ge\;\qbb,
  \quad
  \pab\;\ge\;\qab,\]
\[\pbb\;\ge\;\pab\;\ge\;\paa,
  \quad
  \qbb\;\ge\;\qab\;\ge\;\qaa,
\]
\end{lemma}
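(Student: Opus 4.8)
The plan is to work entirely from the closed forms already recorded in \Cref{eq: pbb,eq: qbb,eq: pab,eq: qab,eq: paa}: using the sum expansions (rather than $\frac{1-p^n}{n(1-p)}$, so the degenerate cases $p=1$ or $q=1$ need no special handling) we have $\pbb=\frac1n\sum_{k=0}^{n-1}p^{k}$, $\qbb=\frac1n\sum_{k=0}^{n-1}q^{k}$, $\pab=p^{n-1}\qbb=\frac1n\sum_{k=0}^{n-1}p^{n-1}q^{k}$, $\qab=q^{n-1}\pbb=\frac1n\sum_{k=0}^{n-1}q^{n-1}p^{k}$, and $\paa=\qaa=\frac1n(pq)^{n-1}$. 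The only elementary facts used throughout are that for $x,y\in[0,1]$ and an integer exponent $e\ge 0$ one has $x^{e}\le 1$, and that $x\ge y$ implies $x^{e}\ge y^{e}$. I would then dispatch the six required inequalities as six short term-by-term comparisons of these sums.

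First, $\pbb\ge\qbb$ is immediate since $\pbb-\qbb=\frac1n\sum_{k=0}^{n-1}(p^{k}-q^{k})\ge 0$. Next, $\pbb\ge\pab$ follows from the chain $\pbb\ge\qbb\ge p^{n-1}\qbb=\pab$ (the middle step uses $p^{n-1}\le 1$), and $\qbb\ge\qab$ from the decomposition $\qbb-\qab=\frac1n\sum_{k=0}^{n-1}q^{k}\bigl(1-q^{n-1-k}p^{k}\bigr)\ge 0$, each summand being nonnegative because $q^{n-1-k}p^{k}\le 1$. For $\pab\ge\qab$ I would write $\pab-\qab=\frac1n\sum_{k=0}^{n-1}p^{k}q^{k}\bigl(p^{n-1-k}-q^{n-1-k}\bigr)\ge 0$, where each summand is nonnegative since $p\ge q$ and $n-1-k\ge 0$. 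Finally, bounding each of $\qbb$ and $\pbb$ below by its last summand ($q^{n-1}$, resp. $p^{n-1}$) gives $\pab=p^{n-1}\qbb\ge p^{n-1}\cdot\frac{q^{n-1}}{n}=\paa$ and $\qab=q^{n-1}\pbb\ge q^{n-1}\cdot\frac{p^{n-1}}{n}=\qaa$, completing the two chains $\pbb\ge\pab\ge\paa$ and $\qbb\ge\qab\ge\qaa$.

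There is no real obstacle here: once the closed forms from \Cref{lemma: technical sum binom} are in hand, the statement is elementary. The only mild subtlety worth flagging is that $\pab\ge\qab$ and $\qbb\ge\qab$ do \emph{not} follow by transitivity from the already-established $\pbb\ge\qbb$ and $\pbb\ge\pab$ (the inequality $\pbb\ge\qbb$ points the ``wrong way'' for that), so these two require their own termwise arguments — the factorizations $p^{k}q^{k}(p^{n-1-k}-q^{n-1-k})$ and $q^{k}(1-q^{n-1-k}p^{k})$ respectively. Everything else reduces to $t^{e}\le 1$ for $t\in[0,1]$ and monotonicity of $t\mapsto t^{e}$.
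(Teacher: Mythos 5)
Your proof is correct and follows essentially the same route as the paper: the same termwise decompositions $\pab-\qab=\frac1n\sum_k(pq)^k(p^{n-1-k}-q^{n-1-k})$ and $\qbb-\qab=\frac1n\sum_k q^k(1-q^{n-1-k}p^k)$, the same chain $\pbb\ge\qbb\ge p^{n-1}\qbb=\pab$, and the same lower bound $\qbb\ge\tfrac1n q^{n-1}$ (resp.\ $\pbb\ge\tfrac1n p^{n-1}$) to get $\pab\ge\paa$ and $\qab\ge\qaa$. No gaps; the reasoning matches the paper's proof step for step.
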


Next, we provide a road map of the regions. Concretely, we partition the $(p,q)$-parameter space into seven regions based on the sign of flow-induced virtual values.  In~\Cref{tab:thresholds} we show the four critical $x$-values at which each $H_j(v)$ crosses zero. This table will help us with the transition from one region to the next. Then, the bullet points below summarize each region's definition and the corresponding value of the flow.

\begin{table}[ht]
\centering
\begin{tabular}{|c|c|}
\hline
Virtual‐value nonnegativity & Threshold on $x$ \\ 
\hline
$H_1(a,b)\ge0$ & $x\le \tfrac{a}{b-a}\,p(1-q)$ \\[3pt]
$H_2(b,a)\ge0$ & $x\ge (1-p)(1-q)-\tfrac{a}{b-a}\,(1-p)q$ \\[3pt]
$H_1(a,a)\ge0$ & $x\ge 1-p-\tfrac{a}{b-a}\,p\,q$ \\[3pt]
$H_2(a,a)\ge0$ & $x\le \tfrac{a}{b-a}\,p\;q-p(1-q)$ \\
\hline
\end{tabular}
\caption{Threshold values of $x$ at which each $H_j(v)$ becomes nonnegative.}
\label{tab:thresholds}
\end{table}

\begin{itemize}
\item \textbf{Region 1 (red):} the virtual values of items valued at $a$ are negative. 

\item \textbf{Region 2 (green):} only the virtual values of $(a,a)$ are negative, that is $H_1(a,a),H_2(a,a)<0$. For the mechanism to be BIC, $H_2(b,a) \ge H_1(a,b) = 0$.  


\item \textbf{Region 3 (orange):} In this region $H_1(a,b),H_2(b,a)$ are positive. For the mechanism to be optimal $H_1(a,a)=0$ and $H_2(a,a)<0$. From the BIC constraints, in this region, there is an additional condition given by the interim allocation probabilities. That is $\pab-\qab\le \paa.$ We see in more detail in~\Cref{subsec: region3}.

\item \textbf{Region 4 (purple):} In this region, the optimal flow is making $H_2(b,a)=0$. The virtual values for the first item are always positive. That is, $H_1(a,b),H_1(a,a)>0$. The only negative virtual value is $H_2(a,a).$ The same BIC condition must hold, $\pab-\qab\le \paa.$ We see in more detail in~\Cref{subsec: region4}.  

\item \textbf{Region 5 (brown):} In this region, the optimal flow makes $H_2(a,a)=0.$ All other virtual values are positive. The same BIC condition must hold, $\pab-\qab\le \paa.$ 

\item \textbf{Region 6 (black):} This region is combining Regions 3,4 and 5, when the condition does not hold. That is when $\pab-\qab> \paa.$  Only $H_2(a,a)$ is negative.

\item \textbf{Region 7 (blue):} In this region, all the virtual values are positive. 

\end{itemize}

\begin{figure}[ht]
\centering
    \begin{subfigure}{0.4\textwidth}
        \centering
    \includegraphics[width=\linewidth]{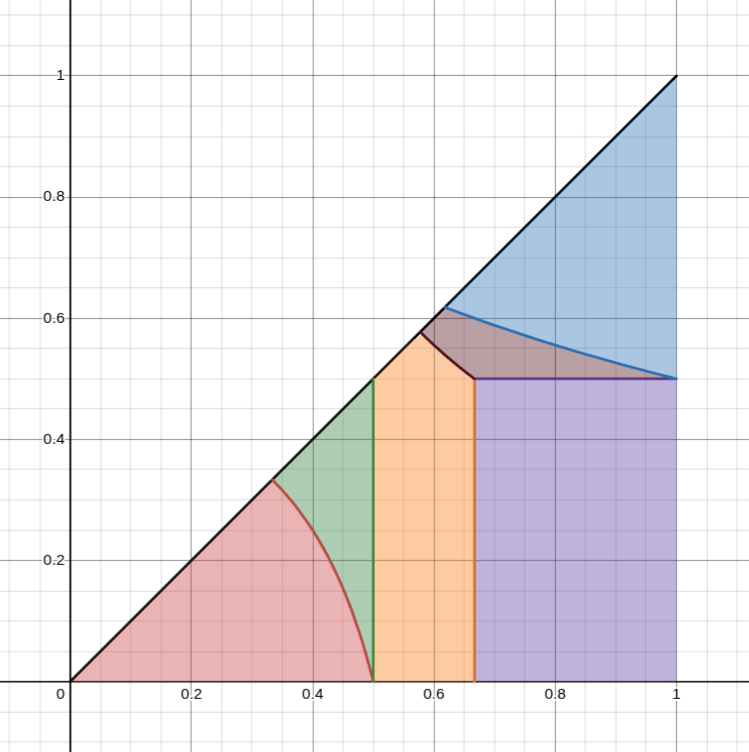}
    \caption{Regions $1-5\;\&\;7$   }
    \label{fig: regionGraph}
    \end{subfigure}
    \hfill
    \begin{subfigure}{0.4\textwidth}
        \centering
    \includegraphics[width=\linewidth]{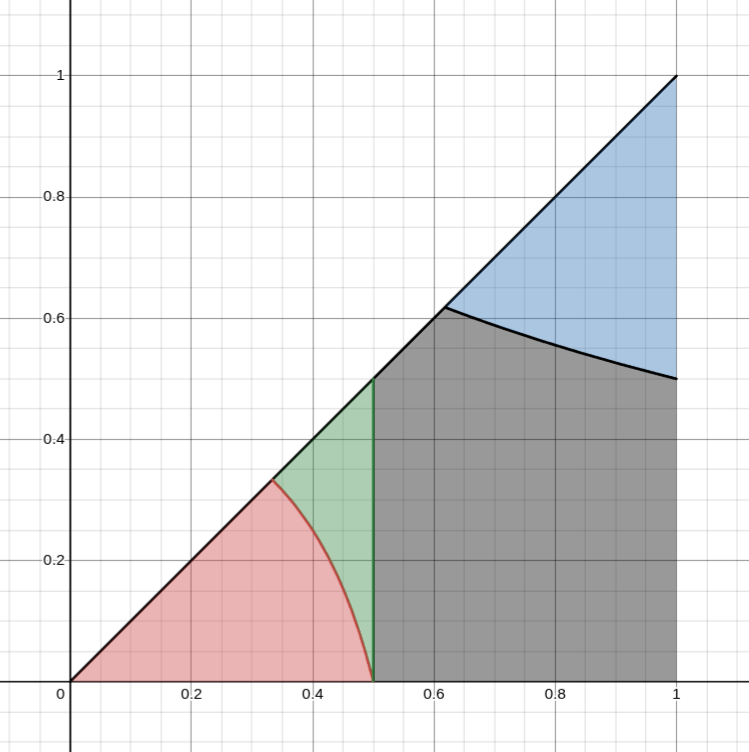}
    \caption{ Regions $1,2,6,7$ }
    \label{fig: regionGraphMerged}
    \end{subfigure}
    \caption{The regions defined for $p\ge q$ and $b>a$.  Red: Region 1, Green: Region 2, Orange: Region 3, Purple: Region 4, Brown: Region 5, Black: Region 6, Blue: Region 7. This graph is for $a=1$ and $b=2$ (\href{https://www.desmos.com/calculator/392a84bf6c}{link} to the graph parametrized by $a$ and $b$). }
    \label{fig: regions}
\end{figure}

Intuitively, we can see that the opposite BIC constraint is necessary in Variant~$(I)$ ($\qab + \paa \ge \pab +\qaa$) and in Variant~$(II)$ ($\qab + \paa \le \pab +\qaa$). In the regions that have the tension from the BIC constraint, we have that $H_2(a,a)\le 0$. Therefore, the condition becomes $\qab + \paa \le \pab$.  When the mechanism is under Variant~$(II)$ with optimal flow is $x=(1-p)(1-q)$ the virtual values are positive besides $H_2(a,a)<0$, and thus the interim allocation probabilities are defined according to~\Cref{eq: pbb,eq: qbb,eq: pab,eq: qab,eq: paa} and they cannot be modified. Hence, whenever $\qab + \paa \ge \pab$, and the optimal flow can be selected for any $0\le x\le (1-p)(1-q)$, depending on the region, one virtual value is set to zero so that the BIC constraint is tight, as we can see in~\Cref{eq: BIC-variantI-eq}. Recall that when a virtual value is equal to zero, the mechanism first flips a biased coin with probability $\delta$ and then allocates the item uniformly at random among the agents with zero virtual value. In~\Cref{fig: regions}, we present the different regions for $a=1$ and $b=2.$ 


Next, we present each region of the mechanism separately. For each region, we define the boundaries, then we show the optimal flow and the corresponding virtual values(Table~\ref{tab: virtual-values-non-iid-items}). Then prove that the flow induces the mechanism where we calculate the interim allocation probabilities, and we apply \Cref{lemma: non-iid-items-payments} to find the payments. Finally, we show that the induced mechanism is BIC, using~\Cref{lemma: non-iid-items-probabilities-monotonicity}. We formally present the proofs in~\Cref{apx: regions}.

\subsubsection{Region 1.}\label{subsec: region1}

In Region 1, all four virtual values at types with an $a$-report are negative. By~\Cref{lemma: non-iid-items-monotone} it follows that $H_1(a,a)\le H_1(a,b)$ and $H_2(a,a)\le H_2(b,a).$ Hence, it suffices to check $H_1(a,b),H_2(b,a).$ It is easy to verify $H_1(a,b)\le0$ whenever $x\ge\frac{a}{b-a}p(1-q)$ and $H_2(b,a)\le0$ whenever $x\le(1-p)(1-q)-\frac{a}{b-a}(1-p)q$. Hence, for all $x$ in the feasible range is $\frac{a}{b-a}p(1-q)\le x \le (1-p)(1-q)- \frac{a}{b-a}(1-p)q$ the mechanism is optimal. Equivalently, for the bounds to be consistent, it must be $\frac{(1-p)(1-q)}{1-pq}> \frac{a}{b}$ (boundary of region 1).  Since in this region $x<(1-p)(1-q)$, the mechanism falls under Variant~$I$.

\begin{algorithm}[ht]
\caption{Mechanism for Region 1}\label{mech: Region1}
\begin{algorithmic}
\State The hierarchy allocation function is the same for all agents. That is for all $i\in [n]$ and $v\in \Vcal$, we have $H_{i,j} (v) = H_j(v)$. Let $v = (v_1,v_2)$, if $v_j=b$ then $H_j(v)=b$, else if $v_j=a$ then $H_j(v)<0.$

\noindent The payment function for reporting $v\in \Vcal$ is 
\begin{itemize}
    \item $p^{(1)}(b,b) = b(\pbbr{1} + \qbbr{1})$
    \item $p^{(1)}(b,a) = b\pbbr{1} $
    \item $p^{(1)}(a,b) = b\qbbr{1}$
    \item $p^{(1)}(a,a) = 0$
\end{itemize}

where $\pbbr{1} = \frac{1-p^n}{n(1-p)} $, $\qbbr{1} =  \frac{1-q^n}{n(1-q)}$, and $\pabr{1} = \qabr{1} = \paar{1} = \qaar{1} = 0$
\end{algorithmic}
\end{algorithm}

\begin{lemma} \label{lemma: region1-induce}
In Region 2 (i.e., $\frac{a}{b-a}\,p(1-q)\le (1-p)(1-q)-\frac{a}{b-a}(1-p)q$), the flow with any $x\in(\frac{a}{b-a}p(1-q),(1-p)(1-q)-\frac{a}{b-a}(1-p)q]),$ induces \Cref{mech: Region1}.  
\end{lemma}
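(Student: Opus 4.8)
The plan is to check, one ingredient at a time, that for $x$ in the stated interval the parametrized flow of \Cref{fig: flow-non-iid-items-split} reproduces exactly the three components of \Cref{mech: Region1}: the hierarchy‑allocation function $H_j$, the interim allocation probabilities, and the payment rule. This is essentially a verification, so I would organize it around those three components and lean on the already‑established facts (the virtual‑value computation, \Cref{lemma: non-iid-items-monotone}, \Cref{lemma: technical sum binom}, and \Cref{lemma: non-iid-items-payments}).

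\textbf{Step 1 (the allocation function).} First I would invoke the lemma stating that the virtual values of \Cref{tab: virtual-values-non-iid-items} are precisely the hierarchy values induced by the flow of \Cref{fig: flow-non-iid-items-split}. Then $H_j(v)=b$ whenever $v_j=b$, matching the mechanism. When $v_j=a$, the region's defining inequalities give $H_1(a,b)<0$ (using $x>\tfrac{a}{b-a}p(1-q)$, strict since the interval is open on the left) and $H_2(b,a)\le 0$ (using $x\le(1-p)(1-q)-\tfrac{a}{b-a}(1-p)q$); combined with the monotonicity $H_1(a,a)\le H_1(a,b)$ and $H_2(a,a)\le H_2(b,a)$ of \Cref{lemma: non-iid-items-monotone}, all four virtual values for $a$‑reports are non‑positive, and the only one that can equal zero is $H_2(b,a)$, exactly at the right endpoint $x=(1-p)(1-q)-\tfrac{a}{b-a}(1-p)q$; in that single case one fixes the tie‑breaking parameter $\delta=0$ so that an item with maximum virtual value $0$ remains unallocated. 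This confirms $H_j$ and, more importantly, records the key consequence: under truthful bidding an agent who reports $a$ for an item never receives it, since either some agent reports $b$ for that item (virtual value $b>0$, so that agent wins) or every agent reports $a$ for it and the maximal virtual value is non‑positive.

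\textbf{Step 2 (interim probabilities).} An agent reporting $b$ for item $1$ has the maximal virtual value $b$ and shares item $1$ uniformly with every other agent also reporting $b$ for item $1$; the number of the latter is distributed as $B(n-1,1-p)$, so the interim probability is $\sum_{\ell=1}^{n}\tfrac1\ell\binom{n-1}{\ell-1}(1-p)^{\ell-1}p^{n-\ell}=\tfrac{1-p^n}{n(1-p)}$ by \Cref{lemma: technical sum binom}; this is independent of her report on item $2$, so it equals $\pbbr{1}$ for both the $(b,b)$ and the $(b,a)$ reports, and symmetrically $\qbbr{1}=\tfrac{1-q^n}{n(1-q)}$. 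Step $1$ already gives $\pabr{1}=\qabr{1}=\paar{1}=\qaar{1}=0$. These match the probabilities listed in \Cref{mech: Region1}.

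\textbf{Step 3 (payments), and the main difficulty.} Finally I would substitute $\pab=\qab=\paa=\qaa=0$ and $\pbb=\pbbr{1}$, $\qbb=\qbbr{1}$ into the payment formula of \Cref{lemma: non-iid-items-payments}: the $x$‑dependent correction term in $p(b,b)$ carries the factor $\qab-\pab+\paa-\qaa=0$ and the term $-(b-a)(\qab+\paa)$ also vanishes, so $p(b,b)=b(\pbbr{1}+\qbbr{1})$; likewise $p(b,a)=b\,\pbbr{1}$, $p(a,b)=b\,\qbbr{1}$, and $p(a,a)=0$, exactly as in \Cref{mech: Region1}. The only mildly delicate point in the whole argument is the boundary case $x=(1-p)(1-q)-\tfrac{a}{b-a}(1-p)q$ of Step $1$: there $H_2(b,a)=0$ rather than strictly negative, so one must explicitly take $\delta=0$ in the hierarchy‑allocation rule for the interim probability of winning item $2$ when reporting $a$ for it to still be $0$; away from that point the argument is a direct substitution into results already in hand.
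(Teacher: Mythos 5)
Your proof is correct and follows essentially the same route as the paper: verify the virtual values match Table~\ref{tab: virtual-values-non-iid-items}, invoke \Cref{lemma: non-iid-items-monotone} to conclude all $a$-report virtual values are non-positive so that an agent reporting $a$ for an item never wins it, compute the interim $b$-winning probabilities via \Cref{lemma: technical sum binom}, and substitute into \Cref{lemma: non-iid-items-payments}. Where you are more careful than the paper's own argument is the right endpoint $x=(1-p)(1-q)-\tfrac{a}{b-a}(1-p)q$: there $H_2(b,a)=0$ rather than strictly negative, yet the paper's proof opens with an open interval and writes $H_j(v)<0$ for all $a$-reports, while the lemma statement admits that endpoint; your explicit observation that one must take $\delta=0$ in the hierarchy tie-breaking rule to keep $\qabr{1}=0$ closes that small gap.
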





\begin{lemma}\label{lemma: region1-BIC}
\Cref{mech: Region1} is BIC.
\end{lemma}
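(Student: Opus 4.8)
The plan is to derive \Cref{lemma: region1-BIC} from the general BIC characterization of \Cref{lemma: non-iid-items-BIC} rather than reasoning about deviations from scratch. By \Cref{lemma: region1-induce}, \Cref{mech: Region1} is the mechanism induced by the parametrized flow of \Cref{fig: flow-non-iid-items-split} for any admissible $x\in\bigl(\tfrac{a}{b-a}p(1-q),\,(1-p)(1-q)-\tfrac{a}{b-a}(1-p)q\bigr]$. In particular $x<(1-p)(1-q)$ (this is where Region~1 lives), so the induced payments are exactly the Variant~$I$ payments; moreover the Variant~$I$ degeneracy relation \eqref{eq: BIC-variantI-eq}, namely $\qab-\pab+\paa-\qaa=0$, holds trivially because \Cref{lemma: region1-induce} records that $\pabr{1}=\qabr{1}=\paar{1}=\qaar{1}=0$. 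Hence BIR is immediate from \Cref{lemma: non-iid-items-BIC}, and it remains only to check the four Variant~$I$ inequalities \eqref{eq: BIC-variantI}.

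The key step is then a substitution. In Region~1 every type that reports $a$ for an item has a strictly negative virtual value for that item, so it never receives it; thus $\pab=\qab=\paa=\qaa=0$, while $\pbb=\pbbr{1}=\tfrac{1-p^n}{n(1-p)}\ge0$ and $\qbb=\qbbr{1}=\tfrac{1-q^n}{n(1-q)}\ge0$. Plugging these into \eqref{eq: BIC-variantI}: the chain $\qbb\ge\qab\ge\qaa$ becomes $\qbbr{1}\ge0\ge0$; the inequality $\qab+\paa\ge\pab+\qaa$ becomes $0\ge0$; $\pbb+\qaa\ge\qab+\paa$ becomes $\pbbr{1}\ge0$; and $\pbb\ge\paa$ becomes $\pbbr{1}\ge0$. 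All four hold, which together with BIR gives BIC-IR.

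There is essentially no obstacle here: Region~1 is the degenerate base case, where all interim probabilities attached to $a$-reports vanish and BIC collapses to the non-negativity of $\pbbr{1}$ and $\qbbr{1}$. The only thing requiring care is the bookkeeping, i.e. confirming that \Cref{mech: Region1} really is the Variant~$I$ specialization of the parametrized mechanism with these probabilities, which is precisely the content of \Cref{lemma: region1-induce}. For completeness I would also include a one-line direct check: writing the true type as $(v_1,v_2)\in\{a,b\}^2$, the utility of reporting $(b,b),(b,a),(a,b),(a,a)$ equals $(v_1-b)\pbbr{1}+(v_2-b)\qbbr{1}$, $(v_1-b)\pbbr{1}$, $(v_2-b)\qbbr{1}$, and $0$ respectively; since $v_1,v_2\le b$ and $\pbbr{1},\qbbr{1}\ge0$, every report yields non-positive utility and the truthful report always yields exactly $0$, so no agent can profit from misreporting and every agent is individually rational.
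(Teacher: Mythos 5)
Your proof is correct and follows essentially the same route as the paper: both reduce the claim to checking the Variant~$I$ conditions of \Cref{lemma: non-iid-items-BIC} and then observe that with $\pabr{1}=\qabr{1}=\paar{1}=\qaar{1}=0$ all four inequalities collapse to the nonnegativity of $\pbbr{1}$ and $\qbbr{1}$. The supplementary direct utility computation you add is a nice sanity check but not in the paper's proof, which stops after invoking the earlier lemma.
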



\subsubsection{Region 2.}\label{subsec: region2}

Region 2 starts exactly where Region 1 ends, namely when $\frac{(1-p)(1-q)}{1-pq}\le\frac{a}{b}.$ Equivalently, $(1-p)(1-q)-\frac{a}{b-a}(1-p)q\le\frac{a}{b-a}\,p(1-q).$ As we increase $x$, the virtual value $H_2(b,a)$ turns positive before $H_1(a,b)$ does. The BIC constraint for Variant~$I$ (\Cref{eq: BIC-variantI-eq}), $\qab - \pab + \paa - \qaa =0$ together with $\pi_1(a,a)=\pi_2(a,a)=0$ (since $H_1(a,a),H_2(a,a)<0$) forces $\pab=\qab.$  The only way to satisfy both inequalities is to set $H_1(a,b)=0,$ which fixes $x=\frac{a}{b-a}\,p(1-q)$.  Finally, for the flow to be feasible, $\frac{a}{b-a}\,p(1-q) \le (1-p)(1-q)$, equivalently $p<\frac{(b-a)}{b}$.

\begin{algorithm}[ht]
\caption{Mechanism for Region 2}\label{mech: Region2}
\begin{algorithmic}
\State The hierarchy allocation function is the same for all agents. That is for all $i\in [n]$ and $v\in \Vcal$, we have $H_{i,j} (v) = H_j(v)$. If $v_j=b$ then $H_j(v)=b$, if $v=(b,a)$ then $H_2(v)>0$, if $v = (a,b)$, then $H_1(v)=0$,  else if $v=(a,a)$ then $H_1(v),H_2(v)<0$. The payment function for reporting $v\in \Vcal$ is 
\begin{itemize}
    \item $p^{(2)}(b,b) = b(\pbbr{2} + \qbbr{2}) - (b-a)\pabr{2}$
    \item $p^{(2)}(b,a) = b\pbbr{2} + a\qabr{2}$
    \item $p^{(2)}(a,b) = a\pabr{2} + b\qbbr{2}$
    \item $p^{(2)}(a,a) = 0$
\end{itemize}

\noindent where $\pbbr{2} = \frac{(1-p^n)}{n(1-p)} $, $\qbbr{2} =  \frac{(1-q^n)}{n(1-q)}$, $\pabr{2} = \qabr{2} = q^{n-1} \pbbr{2} $, and $\paar{2} = \qaar{2} = 0 $
\end{algorithmic}
\end{algorithm}

\begin{lemma}\label{lemma: region2-induce}
In Region 2 (i.e., $(1-p)(1-q)-\frac{a}{b-a}(1-p)q\le\frac{a}{b-a}\,p(1-q)$ and $p<\frac{b-a}{b}$), the flow with $x=\frac{a}{b-a}p(1-q)$ induces \Cref{mech: Region2}.  
\end{lemma}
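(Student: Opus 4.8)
The plan is to verify that, at $x=\tfrac{a}{b-a}p(1-q)$, the three data of the Flow Induced Mechanism — its hierarchy functions $H_j$, its interim allocation probabilities under truthful play, and its payments — coincide with those declared in \Cref{mech: Region2}. Two of these are already handled by earlier results: the lemma that the parametrized flow of \Cref{fig: flow-non-iid-items-split} induces the virtual values of \Cref{tab: virtual-values-non-iid-items} identifies the $H_j$ with the table entries at our $x$, and \Cref{lemma: non-iid-items-payments} gives the $x$-parametrized payment formulas. So the work is evaluation and bookkeeping. First I would record that $x=\tfrac{a}{b-a}p(1-q)\in[0,(1-p)(1-q))$: nonnegativity is immediate, and $x<(1-p)(1-q)\iff ap<(1-p)(b-a)\iff p<\tfrac{b-a}{b}$, which is a hypothesis of the lemma; hence $\lambda((b,b),(b,a))=(1-p)(1-q)-x>0$ and we are in Variant~$I$.

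Next I would check the sign pattern of \Cref{mech: Region2}. Substituting $x=\tfrac{a}{b-a}p(1-q)$ into \Cref{tab: virtual-values-non-iid-items} gives $H_1(a,b)=a-\tfrac{x}{p(1-q)}(b-a)=0$ identically. Clearing denominators one sees $H_1(a,a)<0\iff ap<(1-p)(b-a)\iff p<\tfrac{b-a}{b}$ (a hypothesis); $H_2(a,a)<0\iff q(a+b)<b$, which follows from $q\le p<\tfrac{b-a}{b}<\tfrac{b}{a+b}$; and $H_2(b,a)\ge0\iff(1-p)(1-q)\le\tfrac{a}{b-a}\bigl(p(1-q)+(1-p)q\bigr)$, which — using $(1-pq)-(1-p)(1-q)=p+q-2pq$ — is exactly the Region~2 inequality $(1-p)(1-q)-\tfrac{a}{b-a}(1-p)q\le\tfrac{a}{b-a}p(1-q)$ (equivalently $\tfrac{(1-p)(1-q)}{1-pq}\le\tfrac ab$), strict in the interior and an equality only on the Region~1/Region~2 boundary. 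This matches ``$v_j=b\Rightarrow H_j=b$, $H_2(b,a)>0$, $H_1(a,b)=0$, $H_1(a,a),H_2(a,a)<0$''.

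Then I would compute the truthful interim allocation from this hierarchy rule, exploiting that agents are i.i.d.\ so all cross-agent comparisons are ties among equal reports. Reporting $b$ for an item has maximal value $b$, so the agent splits it with the other $b$-reporters, giving $\sum_{z=1}^n\tfrac1z\binom{n-1}{z-1}(1-p)^{z-1}p^{n-z}=\tfrac{1-p^n}{n(1-p)}=\pbbr{2}$ by \Cref{lemma: technical sum binom} (symmetrically $\qbbr{2}$). Since $H_1(a,a),H_2(a,a)<0$, reporting $(a,a)$ never wins, so $\paar{2}=\qaar{2}=0$. Reporting $(b,a)$: her item-$2$ value $H_2(b,a)>0$ beats every $(a,a)$-reporter and loses to any $b$-report, so she wins item~$2$ exactly when all others report $a$ for item~$2$, split with other $(b,a)$-reporters, yielding $q^{n-1}\pbbr{2}=\qabr{2}$. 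Reporting $(a,b)$: her item-$1$ value is $0$, so she can win only when nobody reports $b$ for item~$1$ (maximum $=0$), in which case the tie-breaking coin fires with probability $\delta$ and item~$1$ is split among the $(a,b)$-reporters; summing gives $\pi_1(a,b)=\delta\,p^{n-1}\qbbr{2}$, so taking $\delta:=q^{n-1}\pbbr{2}\big/\bigl(p^{n-1}\qbbr{2}\bigr)$ makes $\pi_1(a,b)=q^{n-1}\pbbr{2}=\pabr{2}$, as in \Cref{mech: Region2}. The one point needing justification is $\delta\le1$, i.e.\ $q^{n-1}\pbbr{2}\le p^{n-1}\qbbr{2}$; but this is precisely $\qab\le\pab$ from \Cref{lemma: non-iid-items-probabilities-monotonicity} (valid since $p\ge q$), because $\qab=q^{n-1}\pbb$, $\pab=p^{n-1}\qbb$, $\pbb=\pbbr{2}$, $\qbb=\qbbr{2}$. (On the boundary where $H_2(b,a)=0$ one additionally sets the item-$2$ coin probability to $1$; nothing else changes.)

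Finally I would substitute $\paar{2}=\qaar{2}=0$ and $\pabr{2}=\qabr{2}=q^{n-1}\pbbr{2}$ into the formulas of \Cref{lemma: non-iid-items-payments}: $p(a,a)=a(\paar{2}+\qaar{2})=0$, $p(b,a)=a\qabr{2}+b\pbbr{2}$, $p(a,b)=a\pabr{2}+b\qbbr{2}$, and for $p(b,b)$ the $x$-dependent correction carries the factor $\qab-\pab+\paa-\qaa=\qabr{2}-\pabr{2}=0$, so $p(b,b)=b(\pbbr{2}+\qbbr{2})-(b-a)\qabr{2}=b(\pbbr{2}+\qbbr{2})-(b-a)\pabr{2}$; these are exactly the payments $p^{(2)}$ of \Cref{mech: Region2}. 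The main obstacle is the conceptual step just flagged: recognizing that \Cref{mech: Region2}'s value $\pabr{2}=q^{n-1}\pbbr{2}$ is \emph{not} what a full-probability tie-break at the zero virtual value of item~$1$ would produce, that a sub-unit coin $\delta$ is therefore forced, and that its feasibility rests on the ordering $p\ge q$ through \Cref{lemma: non-iid-items-probabilities-monotonicity}; the remaining sign checks on \Cref{tab: virtual-values-non-iid-items} and the substitution into \Cref{lemma: non-iid-items-payments} are routine.
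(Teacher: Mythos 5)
Your proposal is correct and follows the same overall route as the paper: substitute $x=\tfrac{a}{b-a}p(1-q)$ into Table~\ref{tab: virtual-values-non-iid-items}, verify the sign pattern of the virtual values, read off the interim allocation probabilities, and plug into Lemma~\ref{lemma: non-iid-items-payments}. The one place where you go beyond the paper's own writeup is the treatment of $\pabr{2}$: the paper only says that Equation~\eqref{eq: BIC-variantI-eq} ``dictates'' $\pabr{2}=\qabr{2}$, but does not spell out that the hierarchy allocation with $H_1(a,b)=0$ naturally yields $\pi_1(a,b)=\delta\,p^{n-1}\qbbr{2}$ and that one must pick the tie-breaking coin $\delta=q^{n-1}\pbbr{2}\big/\bigl(p^{n-1}\qbbr{2}\bigr)$, whose feasibility $\delta\le 1$ is exactly the inequality $\qab\le\pab$ from Lemma~\ref{lemma: non-iid-items-probabilities-monotonicity} (valid because $p\ge q$). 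Making this step explicit is a genuine improvement: without it, a reader could worry that the declared interim probability is unattainable by any hierarchy allocation. Your remaining checks (the algebra that $H_1(a,b)=0$, $H_1(a,a)<0 \iff p<\tfrac{b-a}{b}$, $H_2(a,a)<0 \iff q(a+b)<b$, and $H_2(b,a)\ge 0 \iff \tfrac{(1-p)(1-q)}{1-pq}\le\tfrac ab$, plus the payment substitutions with the $x$-dependent correction vanishing since $\qabr{2}-\pabr{2}+\paar{2}-\qaar{2}=0$) match the paper's argument and are all verified correctly.
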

 

 

\begin{lemma}\label{lemma: region2-BIC}
\Cref{mech: Region2} is BIC.
\end{lemma}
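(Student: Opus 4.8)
The plan is to reduce the statement entirely to \Cref{lemma: non-iid-items-BIC} together with the explicit interim probabilities furnished by \Cref{lemma: region2-induce}. In Region~2 the optimal flow parameter is $x=\frac{a}{b-a}p(1-q)$, and feasibility of the region forces $x<(1-p)(1-q)$, so the mechanism is in Variant~$I$. Consequently \Cref{lemma: non-iid-items-BIC} already gives BIR for free, and it reduces BIC to checking the four inequalities of~\Cref{eq: BIC-variantI}. By \Cref{lemma: region2-induce}, the interim probabilities of \Cref{mech: Region2} are $\pbb=\frac{1-p^n}{n(1-p)}$, $\qbb=\frac{1-q^n}{n(1-q)}$, $\pab=\qab=q^{n-1}\pbb$, and $\paa=\qaa=0$, so the whole task becomes a short substitution.

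First I would dispatch the three trivial conditions. The identity $\qab+\paa\ge\pab+\qaa$ holds with equality, since $\pab=\qab$ and $\paa=\qaa=0$; this is precisely the Variant~$I$ tightness condition~\Cref{eq: BIC-variantI-eq}, consistent with $\lambda((b,b),(b,a))>0$. The inequality $\pbb+\qaa\ge\qab+\paa$ reduces to $\pbb\ge q^{n-1}\pbb$, which holds because $\pbb\ge 0$ and $q\le 1$. And $\pbb\ge\paa$ is immediate from $\paa=0$. Likewise the lower half of the first chain, $\qab\ge\qaa$, is just $q^{n-1}\pbb\ge 0$.

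The only inequality that requires an argument is the upper half of the first chain, $\qbb\ge\qab$, i.e.\ $\frac{1-q^n}{n(1-q)}\ge q^{n-1}\cdot\frac{1-p^n}{n(1-p)}$. Clearing denominators, this is $\sum_{k=0}^{n-1}q^k\ge q^{n-1}\sum_{k=0}^{n-1}p^k$, which holds because $\sum_{k=0}^{n-1}q^k\ge nq^{n-1}$ (each summand is at least $q^{n-1}$ since $q\le 1$) while $q^{n-1}\sum_{k=0}^{n-1}p^k\le nq^{n-1}$ (each $p^k\le 1$). Equivalently, one may note this is an instance of the general monotonicity of hierarchy allocations: reporting the maximal item-$2$ virtual value $b$ cannot decrease one's interim probability for item~$2$ relative to reporting $(b,a)$, whose item-$2$ virtual value is strictly smaller. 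With all four conditions of~\Cref{eq: BIC-variantI} verified, \Cref{lemma: non-iid-items-BIC} yields that \Cref{mech: Region2} is BIC. There is no genuinely hard step here — the real content was front-loaded into \Cref{lemma: non-iid-items-BIC} and the region-boundary/feasibility analysis; what remains is bookkeeping with the closed-form interim probabilities, and if anything the single inequality $\qbb\ge\qab$ is the only place a (one-line) estimate is invoked.
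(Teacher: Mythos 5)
Your proof is correct and takes essentially the same route as the paper: reduce to the Variant~$I$ BIC conditions of \Cref{lemma: non-iid-items-BIC}, substitute the Region~2 interim probabilities ($\paar{2}=\qaar{2}=0$, $\pabr{2}=\qabr{2}=q^{n-1}\pbbr{2}$), and observe that only $\qbb\ge\qab$ needs any estimate. The lone difference is cosmetic: the paper cites \Cref{lemma: non-iid-items-probabilities-monotonicity} for $\qbb\ge\qab$ (and for the chain $\pbb\ge\qbb\ge\qab$ giving condition~\ref{eq:reg2-3}), whereas you prove $\qbb\ge\qab$ directly via $\sum_{k=0}^{n-1}q^k\ge nq^{n-1}\ge q^{n-1}\sum_{k=0}^{n-1}p^k$ and reduce condition~\ref{eq:reg2-3} to the even simpler $\pbb\ge q^{n-1}\pbb$. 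Both are valid; your self-contained estimate is slightly more elementary but not a genuinely different argument.
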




\subsubsection{Region 3.}\label{subsec: region3}

Region 3 starts at $p\ge \frac{b-a}{b},$ again where Region 2 ends. The next virtual value to turn positive is $H_1(a,a)$. The virtual values $H_1(a,b),H_2(b,a)\ge 0$ and $H_2(a,a)<0$, implying that $p\le\frac{b}{b+a}$ and $pq\ge \frac{b-a}{b+a}.$  Under Variant I, Lemma \ref{lemma: non-iid-items-BIC} gives the BIC constraint $\qab +\paa \ge \pab + \qaa$. Since in Region 3 $H_2(a,a)<0$ for all $x\le(1-p)(1-q)$, we have $\qaa=0,$ and the constraint reduces to $\qab +\paa \ge \pab$, where $\qab,\paa,\pab$  are given by \Cref{eq: pab,eq: qab,eq: paa}. However, from \Cref{eq: BIC-variantI-eq} we get that the constraint must be tight. To achieve this, we choose the flow such that $H_1(a,a)=0$, so that we can reduce $\paar{3}=\pab-\qab$ by selecting the appropriate $\delta_3$. Combining the above, we get that $x=1-p-\frac{a}{b-a}pq.$ The flow is feasible since $0\le1-p-\frac{a}{b-a}pq\le (1-p)(1-q).$

\begin{algorithm}[ht]
\caption{Mechanism for Region 3}\label{mech: Region3}
\begin{algorithmic}
\State The hierarchy allocation function is the same for all agents. That is for all $i\in [n]$ and $v\in \Vcal$, we have $H_{i,j} (v) = H_j(v)$. If $v_j=b$ then $H_j(v)=b$, if $v = (a,b)$, then $H_1(v)>0$, if $v=(b,a)$ then $H_2(v)>0$, else if $v=(a,a)$ then $H_1(v)=0$ and $H_2(v)<0$. The payment function for reporting $v\in \Vcal$ is 
\begin{itemize}
    \item $p^{(3)}(b,b) = b(\pbbr{3} + \qbbr{3}) - (b-a)\pabr{3}$
    \item $p^{(3)}(b,a) = b\pbbr{3} + a\qabr{3} - (b-a)\paar{3} $
    \item $p^{(3)}(a,b) = b\qbbr{3} + a\pabr{3} $
    \item $p^{(3)}(a,a) = a \paar{3}$
\end{itemize}

\noindent where $\pbbr{3} = \frac{p(1-p^n)}{n(1-p)} $, $\qbbr{3} =  \frac{q(1-q^n)}{n(1-q)}$, $\pab = p^{n-1} \qbbr{3}$, $\qabr{3} = q^{n-q} \pbbr{3} $,  $\paar{3} = \pabr{3}-\qabr{3} $, and $ \qaar{3} = 0 $
\end{algorithmic}
\end{algorithm}

\begin{lemma}\label{lemma: region3-induce} 
In Region 3 (i.e.,$\frac{b-a}{b}\le p\le\frac{b}{b+a}$ and $pq \ge \frac{b-a}{b+a}$), the flow with $x=1-p-\frac{a}{b-a}pq$ induces \Cref{mech: Region3}.  
\end{lemma}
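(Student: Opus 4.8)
The plan is to follow the three-step template already used for Regions~1 and~2, now with the flow parameter fixed at $x = 1-p-\frac{a}{b-a}pq$: first read off the induced virtual values, then compute the interim allocation rule under truthful bidding, and finally recover the payments via \Cref{lemma: non-iid-items-payments}. For the first step, plugging this $x$ into \Cref{tab: virtual-values-non-iid-items} gives $H_1(a,a)=0$ by construction, and \Cref{tab:thresholds} shows that the region bounds $\frac{b-a}{b}\le p\le\frac{b}{b+a}$ are exactly what force $H_1(a,b)\ge 0$ and $H_2(b,a)\ge 0$, while also $H_2(a,a)<0$ and $0\le x\le(1-p)(1-q)$; \Cref{lemma: non-iid-items-monotone} then supplies $H_1(a,b)\ge H_1(a,a)$ and $H_2(b,a)\ge H_2(a,a)$. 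So the hierarchy ordering is: for item~$1$ a $b$-report beats an $(a,b)$-report, which beats an $(a,a)$-report (virtual value $0$); for item~$2$ an $(a,a)$-report never wins.

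The substantive step is the interim allocation. Reporting $b$ for an item wins it shared uniformly with the other agents who also reported $b$ for that item, a binomial sum that by \Cref{lemma: technical sum binom} equals the closed form $\pbbr{3}$ in \Cref{eq: pbb} (resp.\ $\qbbr{3}$ in \Cref{eq: qbb}). Reporting $(a,b)$ wins item~$1$ exactly when every other agent reports $a$ for item~$1$, shared only with the other $(a,b)$-reporters (the $(a,a)$-reporters lose since $0<H_1(a,b)$); summing and applying \Cref{lemma: technical sum binom} again gives $\pabr{3}$ as in \Cref{eq: pab}, and symmetrically $\qabr{3}$ as in \Cref{eq: qab}. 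The delicate type is $(a,a)$: since $H_2(a,a)<0$ we get $\qaar{3}=0$, and since $H_1(a,a)=0$ agent~$i$ can win item~$1$ only when \emph{all} other agents also report $(a,a)$, in which case $H^{\max}_1=0$ and the $H^{\max}_1=0$ case of the hierarchy rule allocates item~$1$ uniformly with probability $\delta_3$, so $\paar{3}=\frac{\delta_3}{n}(pq)^{n-1}$. I would then pick $\delta_3$ so that $\paar{3}=\pabr{3}-\qabr{3}$; this is admissible because $\pabr{3}\ge\qabr{3}$ by \Cref{lemma: non-iid-items-probabilities-monotonicity}, and $\pabr{3}-\qabr{3}\le\frac{1}{n}(pq)^{n-1}$ (the value $\paar{3}$ would take at $\delta_3=1$, see \Cref{eq: paa}) is precisely the Region~3 defining inequality $\pab-\qab\le\paa$ that separates it from Region~6.

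For the payments, with $\paar{3}=\pabr{3}-\qabr{3}$ one has $\qabr{3}-\pabr{3}+\paar{3}-\qaar{3}=0$, i.e.\ \Cref{eq: BIC-variantI-eq} holds, so the $x$-dependent term in the $p(b,b)$ formula of \Cref{lemma: non-iid-items-payments} vanishes and that expression collapses to Variant~$I$, \Cref{eq: non-iid-items-p1(b,b)}; substituting $\qaar{3}=0$ and the interim probabilities above into all four formulas of \Cref{lemma: non-iid-items-payments} then reproduces the payment rule of \Cref{mech: Region3}. The routine parts are the threshold arithmetic of the first step and the substitution in the last step; the hard part will be the interim-allocation step --- in particular, recognizing that $H_1(a,a)=0$ forces the reduced-probability ($\delta_3$) branch for the $(a,a)$ type, that complementary slackness (tightness of \Cref{eq: BIC-variantI-eq}, itself forced by $x<(1-p)(1-q)$) is what pins $\delta_3$ down, and that checking $\delta_3\in[0,1]$ is exactly where the region's interim-probability inequality $\pab-\qab\le\paa$ is needed. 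Keeping the hierarchy tie-breaking for the $(a,a)$ report straight is the step most prone to error.
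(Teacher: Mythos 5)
Your proposal is correct and follows essentially the same outline as the paper: verify the threshold signs using \Cref{tab:thresholds}, read off the interim allocations (positive virtual values via \Cref{eq: pbb,eq: qbb,eq: pab,eq: qab}, $\qaar{3}=0$ from $H_2(a,a)<0$, and $\paar{3}=\pabr{3}-\qabr{3}$), then apply \Cref{lemma: non-iid-items-payments}.

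One place where you go beyond the paper's terse write-up, and usefully so: the paper just states that ``\Cref{eq: BIC-variantI-eq} dictates $\paar{3}=\pabr{3}-\qabr{3}$'' without saying how the hierarchy mechanism actually realizes that value. You supply the missing mechanism-level explanation --- that $H_1(a,a)=0$ routes the $(a,a)$ report through the $\delta$-branch of the hierarchy rule (\Cref{def:multi‑hierarchy}, case $H^{\max}_j=0$), giving $\paar{3}=\frac{\delta_3}{n}(pq)^{n-1}$, and that $\delta_3\in[0,1]$ is feasible precisely because $0\le\pabr{3}-\qabr{3}\le\frac{1}{n}(pq)^{n-1}$, i.e.\ the Region~3/Region~6 separating condition $\pab-\qab\le\paa$. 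That condition is not written into the lemma's parenthetical ``i.e.'' clause, so making it explicit as the feasibility of $\delta_3$ is a genuine gap-fill rather than just expansion. Your observation that tightness of \Cref{eq: BIC-variantI-eq} collapses the $p(b,b)$ formula of \Cref{lemma: non-iid-items-payments} to the Variant-$I$ form \Cref{eq: non-iid-items-p1(b,b)} also matches the intent and is correct.
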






\begin{lemma}\label{lemma: region3-BIC}
\Cref{mech: Region3} is BIC.
\end{lemma}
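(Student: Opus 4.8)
The plan is to reduce everything to the Variant~$I$ characterization of \Cref{lemma: non-iid-items-BIC} and then discharge the few surviving inequalities with \Cref{lemma: non-iid-items-probabilities-monotonicity}. First I would note that in Region~3 the flow parameter is $x=1-p-\frac{a}{b-a}pq$, and that $x<(1-p)(1-q)$ is equivalent, after cancelling $1-p$ and dividing by $q$, to $\frac{a}{b-a}p>1-p$, that is to $p>\frac{b-a}{b}$, which holds throughout Region~3. Hence $\lambda((b,b),(b,a))>0$, the mechanism is governed by Variant~$I$, and by \Cref{lemma: non-iid-items-BIC} it is automatically BIR while its BIC reduces to the equality $\qabr{3}-\pabr{3}+\paar{3}-\qaar{3}=0$ together with
\[
\qbbr{3}\ge\qabr{3}\ge\qaar{3},\qquad \qabr{3}+\paar{3}\ge\pabr{3}+\qaar{3},\qquad \pbbr{3}+\qaar{3}\ge\qabr{3}+\paar{3},\qquad \pbbr{3}\ge\paar{3}.
\]

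Next I would feed in the two structural facts about the Region~3 interim rule already established in \Cref{lemma: region3-induce}: $\qaar{3}=0$, since $H_2(a,a)<0$ so a truthful $(a,a)$-report never wins item~$2$, and $\paar{3}=\pabr{3}-\qabr{3}$, the value of $\pi_1^{(3)}(a,a)$ produced by the $\delta_3$-reduced allocation at the zero virtual value $H_1(a,a)=0$, which is precisely the choice that makes complementary slackness tight. Plugging these in, the Variant~$I$ equality and the second inequality both become the tautology $\pabr{3}=\pabr{3}$; the fourth inequality $\pbbr{3}\ge\pabr{3}-\qabr{3}$ follows from the third one together with $\qabr{3}\ge0$; and the first inequality splits into the trivial $\qabr{3}\ge0$ and the genuine $\qbbr{3}\ge\qabr{3}$. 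So the whole claim collapses to the two inequalities $\pbbr{3}\ge\pabr{3}$ and $\qbbr{3}\ge\qabr{3}$, plus $\pabr{3}\ge\qabr{3}$, which is needed anyway so that $\paar{3}\ge0$ and is therefore part of the well-definedness in \Cref{lemma: region3-induce}, but which will drop out for free below.

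Finally I would observe that the Region~3 interim probabilities for the profiles $(b,b)$, $(b,a)$, $(a,b)$ coincide with the generic ones of \Cref{eq: pbb,eq: qbb,eq: pab,eq: qab}: a truthful $b$-report for an item ties only with other $b$-reports, and a truthful $(a,b)$-report competes for item~$1$ only against the other $(a,b)$-reports because $H_1(a,a)=0<H_1(a,b)$ dominates every $(a,a)$-report, symmetrically for $(b,a)$ using $H_2(a,a)<0$, so via \Cref{lemma: technical sum binom} one gets $\pbbr{3}=\pbb$, $\qbbr{3}=\qbb$, $\pabr{3}=\pab=p^{n-1}\qbb$ and $\qabr{3}=\qab=q^{n-1}\pbb$. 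With these identifications the three inequalities I still need are precisely $\pbb\ge\pab$, $\qbb\ge\qab$ and $\pab\ge\qab$, all of which \Cref{lemma: non-iid-items-probabilities-monotonicity} supplies under the standing assumption $p\ge q$. The proof is thus essentially bookkeeping; the one place that needs care --- and the only thing I would be wary of --- is correctly pinning down $\qaar{3}$ and $\paar{3}$ from the $\delta_3$-reduced rule and the signs of the virtual values, which is why I would invoke \Cref{lemma: region3-induce} for those quantities rather than re-derive them here.
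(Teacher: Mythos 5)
Your proposal is correct and takes essentially the same route as the paper: invoke Variant~$I$ of \Cref{lemma: non-iid-items-BIC} (after confirming $x<(1-p)(1-q)$), substitute $\qaar{3}=0$ and $\paar{3}=\pabr{3}-\qabr{3}$, and discharge what remains via \Cref{lemma: non-iid-items-probabilities-monotonicity}. The only cosmetic difference is that you derive $\pbbr{3}\ge\paar{3}$ from the reduced third inequality plus $\qabr{3}\ge0$, whereas the paper obtains it directly from $\pbb\ge\paa\ge\paar{3}$; both are fine.
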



    

\subsubsection{Region 4.}\label{subsec: region4}

Region 4 starts from the point where the virtual $H_2(b,a)$ switches from positive to negative after Region 3. The boundary, therefore, is $p\ge\frac{b}{b+a}.$ By the monotonicity of the virtual values we have $H_1(a,b)>H_1(a,a)>0$ (\Cref{lemma: non-iid-items-monotone}).The only negative virtual value is $H_2(a,a)<0.$ To maintain $H_2(a,a)<0$ we must have $q\le \frac{b-a}{b}.$ In this region, the same BIC constraint, as in Region 3, must hold $\qab+\paa\ge\pab$ since the mechanism is under Variant~$I$.  However, \Cref{eq: BIC-variantI-eq} enforces the inequality to be tight. To achieve equality, the optimal flow induces $H_2(b,a)=0$ with flow $x=(1-p)(1-q)-\frac{a}{b-a}(1-p)q$, allowing to control $\qabr{4}$ by appropriately selecting the probability $\delta_4$.
\smallskip
\begin{algorithm}[ht]
\caption{Mechanism for Region 4}\label{mech: Region4}
\begin{algorithmic}
\State The hierarchy allocation function is the same for all agents. That is for all $i\in [n]$ and $v\in \Vcal$, we have $H_{i,j} (v) = H_j(v)$. If $v_j=b$ then $H_j(v)=b$, if $v = (a,b)$, then $H_1(v)>0$, if $v=(b,a)$ then $H_2(v)=0$, else if $v=(a,a)$ then $H_1(v)>0$ and $H_2(v)<0$.  

The payment function for reporting $v\in \Vcal$ is 
\begin{itemize}
    \item $p^{(4)}(b,b) = b(\pbbr{4} + \qbbr{4}) - (b-a)\pabr{4}$
    \item $p^{(4)}(b,a) = b\pbbr{4} + a\qabr{4} - (b-a)\paar{4} $
    \item $p^{(4)}(a,b) = b\qbbr{4} + a\pabr{4} $
    \item $p^{(4)}(a,a) = a \paar{4}$
\end{itemize}

\noindent where $\pbbr{4} = \frac{p(1-p^n)}{n(1-p)} $, $\qbbr{4} =  \frac{q(1-q^n)}{n(1-q)}$, $\pab = p^{n-1} \qbbr{4}$, $\paar{4}=\frac{1}{n}(pq)^{n-1}$ $\qabr{4} = \pabr{4}-\paar{4}$ and $ \qaar{4}= 0 $
\smallskip
\end{algorithmic}
\end{algorithm}
\smallskip

\begin{lemma}\label{lemma: region4-induce}
In Region 4 (i.e., $p\ge\frac{b}{b+a}$ $ q\le\frac{b-a}{b}$), the flow with $x=(1-p)(1-q)-\frac{a}{b-a}(1-p)q$ induces \Cref{mech: Region3}.  
\end{lemma}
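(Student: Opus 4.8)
The plan is to follow the same three-step template used for \Cref{lemma: region2-induce} and \Cref{lemma: region3-induce}. First I would verify that the flow of \Cref{fig: flow-non-iid-items-split} with $x=(1-p)(1-q)-\frac{a}{b-a}(1-p)q$ reproduces the hierarchy-allocation/virtual-value function declared in \Cref{mech: Region3}; second, I would compute the interim allocation probabilities that this hierarchy function induces under truthful bidding and check they equal $\pbbr{3},\qbbr{3},\pabr{3},\qabr{3},\paar{3},\qaar{3}$; third, I would substitute those probabilities into the generic payment formula of \Cref{lemma: non-iid-items-payments} and confirm it collapses to $p^{(3)}(b,b),p^{(3)}(b,a),p^{(3)}(a,b),p^{(3)}(a,a)$.

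For the first step I plug the prescribed $x$ into \Cref{tab: virtual-values-non-iid-items} using the already-established correspondence between the parametrized flow and that table; by \Cref{tab:thresholds} this value of $x$ is exactly the threshold at which the relevant virtual value vanishes, and the region constraints $p\ge\frac{b}{b+a}$, $q\le\frac{b-a}{b}$ together with the monotonicity of \Cref{lemma: non-iid-items-monotone} pin down the signs of the remaining virtual values, matching the sign pattern stated in \Cref{mech: Region3}. Since $x<(1-p)(1-q)$ throughout this region, the mechanism runs under Variant~$I$, so \Cref{eq: non-iid-items-p1(b,b)} is the operative reduced form of $p(b,b)$ and \Cref{eq: BIC-variantI-eq} holds; the latter is what kills the $x$-dependent term in the third step.

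The second step is the crux. The probabilities $\pbbr{3},\qbbr{3}$ follow from \Cref{lemma: technical sum binom} exactly as in \Cref{eq: pbb,eq: qbb}, and $\pabr{3}=p^{n-1}\qbbr{3}$ as in \Cref{eq: pab}, since a $b$-report always gives the top virtual value for that item while an $(a,b)$-report gives the top positive value for item~$1$ only against other $(a,b)$-reporters with everyone else at $(a,a)$. The delicate case is the agent--item pair whose virtual value equals zero: there the hierarchy rule first tosses a biased coin with heads-probability $\delta_3$ and only then allocates uniformly among the tied agents, so the induced interim probability is $\delta_3$ times the ``natural'' share from \Cref{eq: qab} (resp.\ \Cref{eq: paa}). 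I would enumerate, over the draws of the other $n-1$ agents, which configurations place our agent at the zero level versus strictly below the winner, and then argue that the $\delta_3\in[0,1]$ needed to force $\paar{3}=\pabr{3}-\qabr{3}$ and $\qaar{3}=0$ exists; feasibility here uses $\pabr{3}\ge\qabr{3}\ge\paar{3}\ge0$, which is \Cref{lemma: non-iid-items-probabilities-monotonicity}. The third step is then a mechanical substitution into \Cref{lemma: non-iid-items-payments}: the factor multiplying $\tfrac{x}{(1-p)(1-q)}(b-a)$ in $p(b,b)$ is $\qab-\pab+\paa-\qaa$, which is $0$ by \Cref{eq: BIC-variantI-eq}, and collecting the surviving terms with the values just computed yields each $p^{(3)}(\cdot)$.

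The main obstacle is the zero-virtual-value bookkeeping inside the second step: one must correctly count, over the other agents' reports, the tie classes at the zero level and confirm that the single free parameter $\delta_3$ suffices to realize the target interim probabilities without leaving $[0,1]$. Everything else is binomial-sum manipulation already packaged by \Cref{lemma: technical sum binom} and \Cref{lemma: non-iid-items-probabilities-monotonicity}.
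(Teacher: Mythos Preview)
Your three–step template is exactly the paper's approach, but you have misidentified \emph{which} virtual value vanishes in Region~4. The prescribed $x=(1-p)(1-q)-\tfrac{a}{b-a}(1-p)q$ is, by \Cref{tab:thresholds}, the threshold for $H_2(b,a)=0$, \emph{not} for $H_1(a,a)=0$. Indeed, in Region~4 one has $H_1(a,b)\ge H_1(a,a)>0$ strictly (the region constraint $p\ge\tfrac{b}{b+a}$ gives $x\ge 1-p-\tfrac{a}{b-a}pq$), while $H_2(b,a)=0$ and $H_2(a,a)<0$. Consequently the interim probability that the $\delta$-coin controls is $\qabr{4}$, not $\paar{}$: the natural \Cref{eq: paa} value $\paar{4}=\tfrac{1}{n}(pq)^{n-1}$ is fixed, and $\delta_4$ is chosen so that $\qabr{4}=\pabr{4}-\paar{4}$ (this is the content of \Cref{mech: Region4}; the statement's reference to \Cref{mech: Region3} is a typo that evidently misled you — the flow value and region boundaries unambiguously point to the Region-4 mechanism).

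A second gap is your feasibility argument for $\delta\in[0,1]$. You invoke \Cref{lemma: non-iid-items-probabilities-monotonicity}, but that lemma only gives the chains $\pbb\ge\pab\ge\paa$ and $\qbb\ge\qab\ge\qaa$; it does \emph{not} give $\pab-\qab\le\paa$, which is what is actually needed so that the target $\qabr{4}=\pabr{4}-\paar{4}$ lies in $[0,\qab]$. That inequality is precisely the defining boundary between Regions 3/4/5 and Region~6 and must be assumed as part of the region hypothesis, not derived from the monotonicity lemma. Once you correct the zero-level bookkeeping to $H_2(b,a)=0$ and invoke this region condition, the rest of your plan (signs via \Cref{tab:thresholds} and \Cref{lemma: non-iid-items-monotone}, probabilities via \Cref{eq: pbb,eq: qbb,eq: pab,eq: paa}, and payments via \Cref{lemma: non-iid-items-payments} with the Variant-$I$ reduction) goes through exactly as in the paper's proof.
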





\begin{lemma}\label{lemma: region4-BIC}
\Cref{mech: Region4} is BIC.
\end{lemma}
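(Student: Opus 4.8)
The plan is to reduce BIC to the finitely many inequalities supplied by \Cref{lemma: non-iid-items-BIC} and then verify them by substituting the explicit interim probabilities listed in \Cref{mech: Region4}. First I would observe that in Region~4 the flow parameter equals $x=(1-p)(1-q)-\tfrac{a}{b-a}(1-p)q$, which is strictly less than $(1-p)(1-q)$ since $a,q>0$ and $p<1$. Hence the edge $(b,b)\to(b,a)$ carries positive flow, so by complementary slackness the BIC constraint out of $(b,b)$ toward $(b,a)$ is tight and the mechanism operates in Variant~$I$; in particular \eqref{eq: BIC-variantI-eq} holds. By \Cref{lemma: region4-induce} the payments of \Cref{mech: Region4} are exactly those produced by \Cref{lemma: non-iid-items-payments} for this value of $x$, so \Cref{lemma: non-iid-items-BIC} applies: BIR is automatic, and BIC is equivalent to the four inequalities of~\eqref{eq: BIC-variantI}.

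I would then substitute the closed forms from \Cref{mech: Region4}: $\qaar{4}=0$, $\pabr{4}=p^{n-1}\qbbr{4}$, $\paar{4}=\tfrac1n(pq)^{n-1}$, and $\qabr{4}=\pabr{4}-\paar{4}$. With these, the premise \eqref{eq: BIC-variantI-eq} holds by construction, and the second inequality of \eqref{eq: BIC-variantI} ($\qab+\paa\ge\pab+\qaa$) becomes an equality because $\qabr{4}+\paar{4}=\pabr{4}$ while $\qaar{4}=0$; this is exactly the slack the reduced-coin choice of $\delta_4$ is designed to create. After substituting $\qaar{4}=0$ and $\qabr{4}+\paar{4}=\pabr{4}$, the remaining three inequalities of \eqref{eq: BIC-variantI} collapse to
\[
\qbbr{4}\ \ge\ \qabr{4}\ \ge\ 0,
\qquad
\pbbr{4}\ \ge\ \pabr{4},
\qquad
\pbbr{4}\ \ge\ \paar{4}.
\]

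Each of these is a short termwise comparison of geometric-type sums, using the standing assumption $p\ge q$ and the monotonicity established in \Cref{lemma: non-iid-items-probabilities-monotonicity}. Concretely: $\qabr{4}\ge0$ follows from $\pabr{4}=p^{n-1}\qbbr{4}\ge\tfrac1n(pq)^{n-1}=\paar{4}$, which in turn follows from $\qbbr{4}\ge\tfrac1n q^{n-1}$ (the $q^{n-1}$ term appears in the expansion of $\qbbr{4}$); $\qbbr{4}\ge\qabr{4}$ since $\qabr{4}=p^{n-1}\qbbr{4}-\paar{4}\le\qbbr{4}$ as $p^{n-1}\le1$; $\pbbr{4}\ge\pabr{4}$ since $\pabr{4}=p^{n-1}\qbbr{4}\le\qbbr{4}\le\pbbr{4}$, the last step being the $p\ge q$ monotonicity of \Cref{lemma: non-iid-items-probabilities-monotonicity}; and $\pbbr{4}\ge\paar{4}$ by the analogous comparison (type $(b,b)$ dominates type $(a,a)$ for item~$1$, whose virtual value at $(a,a)$ is strictly below $b$), together with $(pq)^{n-1}\le1$. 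Feeding these back into \eqref{eq: BIC-variantI} establishes BIC.

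The step I expect to require the most care is the bookkeeping in the reduction, not any individual inequality: in Region~4 the interim probabilities are \emph{not} the ``all virtual values positive'' probabilities of \eqref{eq: pbb}--\eqref{eq: paa}, since $H_2(b,a)=0$ forces the reduced-coin trick, which sets $\qaar{4}=0$ and adjusts the remaining winning probabilities. One must therefore verify --- via \Cref{lemma: region4-induce} --- that \Cref{mech: Region4}'s hierarchy allocation with the chosen $\delta_4$ genuinely produces the stated $\pabr{4},\qbbr{4},\paar{4},\ldots$, and take care that \Cref{lemma: non-iid-items-probabilities-monotonicity} is being invoked for the correct quantities. Since the sole deviation from the generic case is that type $(a,a)$'s allocation is weakly \emph{decreased}, all three reduced inequalities become only slacker, so once the substitutions are carried out no genuine obstacle remains.
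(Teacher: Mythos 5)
Your proof follows the same route as the paper: identify the flow value, note $x<(1-p)(1-q)$ so the mechanism is in Variant~$I$, reduce BIC to the four inequalities of \Cref{lemma: non-iid-items-BIC}, observe that $\qabr{4}=\pabr{4}-\paar{4}$ forces the second inequality to hold with equality, and then check the remaining three via \Cref{lemma: non-iid-items-probabilities-monotonicity}. Your write-up is if anything slightly more explicit than the paper's (notably in showing $\qabr{4}\ge0$ directly from $\pabr{4}\ge\paar{4}$), but the argument is the same.
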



    

\subsubsection{Region 5.}\label{subsec: region5}

Region 5 starts from the point where the virtual $H_2(a,a)$ switches from negative to positive after Regions 3 and 4. That is $pq\ge\frac{b-a}{b+a}$ from Region 3 and $q\ge\frac{b-a}{b}$ from Region 4. Following the pattern of the previous regions, the optimal flow induces $H_2(a,a)=0.$ By construction $x=\frac{a}{b-a}pq-p(1-q),$ for the flow to be feasible $\frac{a}{b-a}pq-p(1-q)> (1-p)(1-q).$ Beyond that point, all virtual values are positive. Equivalently $\frac{1-q}{pq}> \frac{a}{b-a}.$  In this region, the same BIC constraint, as in Regions 3 and 4, must hold $\qab+\paa\ge\pab$ since the mechanism is under Variant~$I$.  However, \Cref{eq: BIC-variantI-eq} enforces the constraint to be tight  $\qab+\paa=\pab +\qaa$. Notice that in all previous regions $\qaa=0$, here we choose $\qaar{5}$ such that the equality is met, by appropriately selecting the probability $\delta_5$.

\begin{algorithm}[ht]
\caption{Mechanism for Region 5}\label{mech: Region5}
\begin{algorithmic}
\State The hierarchy allocation function is the same for all agents. That is for all $i\in [n]$ and $v\in \Vcal$, we have $H_{i,j} (v) = H_j(v)$. If $v_j=b$ then $H_j(v)=b$, otherwise $H_1(a,b)>H_1(a,a)>0$, if $v=(b,a)$ then $H_2(v)>0$, else if $v=(a,a)$ then $H_2(v)=0$.  The payment function for reporting $v\in \Vcal$ is 
\begin{itemize}
    \item $p^{(5)}(b,b) = b(\pbbr{5} + \qbbr{5}) - (b-a)(\qabr{5}+\paar{5})$
    \item $p(b,a) = b\pbbr{5} + a\qabr{5} - (b-a)\paar{6} $
    \item $p(a,b) = b\qbbr{5} + a\pabr{5} - (b-a)\qaar{6} $
    \item $p(a,a) = a (\paar{5}+\qaar{5})$
\end{itemize}

\noindent where $\pbbr{5} = \frac{p(1-p^n)}{n(1-p)} $, $\qbbr{5} =  \frac{q(1-q^n)}{n(1-q)}$, $\pabr{5} = p^{n-1} \qbbr{5}$, $\qab = q^{n-1} \pbbr{5} $,\\$\paar{5} = \frac{1}{n}(pq)^{n-1} $ and $\qaar{5} = \qabr{5}-\pabr{5}+\paar{5}.$

\end{algorithmic}
\end{algorithm}

\begin{lemma} \label{lemma: region5-induce}
In Region 5 (i.e., $pq\ge\frac{b-a}{b+a}$, $ q\ge\frac{b-a}{b}$ and $\frac{1-q}{pq}>\frac{a}{b-a}$), the flow with $x=\frac{a}{b-a}pq-p(1-q)$ induces \Cref{mech: Region3}.  
\end{lemma}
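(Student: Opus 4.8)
The plan is to run the three-step verification dictated by the definition of a flow-induced mechanism in \Cref{section: general Methodology}: confirm that (i) the hierarchy allocation function $H_j(\cdot)$ equals the virtual values carried by the flow, (ii) the interim allocation probabilities under truthful bidding coincide with those listed in \Cref{mech: Region3}, and (iii) the induced payments, read off from \Cref{lemma: non-iid-items-payments}, coincide with the payment list in \Cref{mech: Region3}. Step (i) is immediate from the preceding lemma on the parametrized virtual values of \Cref{tab: virtual-values-non-iid-items}: substituting $x=\tfrac{a}{b-a}pq-p(1-q)$ makes $H_2(a,a)=0$ by design, the three Region~5 inequalities ($pq\ge\tfrac{b-a}{b+a}$, $q\ge\tfrac{b-a}{b}$, $\tfrac{1-q}{pq}>\tfrac{a}{b-a}$) together with $p\ge q$ make the remaining three virtual values strictly positive, and \Cref{lemma: non-iid-items-monotone} gives the ordering $H_1(a,b)\ge H_1(a,a)$. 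I will also record that $q\ge\tfrac{b-a}{b}$ is equivalent to $x\ge 0$ and $\tfrac{1-q}{pq}>\tfrac{a}{b-a}$ is equivalent to $x<(1-p)(1-q)$, so the flow is feasible and lies in Variant~$I$.

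For step (ii), I compute the interim allocations assuming truthful reports. A report of $b$ on an item carries the maximum virtual value, so the agent splits that item uniformly with the other agents reporting $b$ on it; a binomial count plus \Cref{lemma: technical sum binom} yields $\pi_1(b,\cdot)$ and $\pi_2(\cdot,b)$. For a report of $(a,b)$, the value $H_1(a,b)$ is the second-largest virtual value for item~1 and strictly dominates both $H_1(a,a)$ and $0$, so the agent wins item~1 exactly when no other agent reports $b$ on item~1, then splits with the other $(a,b)$-reporters; \Cref{lemma: technical sum binom} gives $\pabr{5}=p^{n-1}\qbb$, and symmetrically $\qabr{5}=q^{n-1}\pbb$. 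For a report of $(a,a)$: on item~1, $H_1(a,a)>0$ is beaten by every $b$-report and every $(a,b)$-report, so winning requires every other agent to report $(a,a)$, giving $\paar{5}=\tfrac1n(pq)^{n-1}$. The one delicate computation is item~2 under $(a,a)$, where the virtual value equals $0$: by the hierarchy rule the item is put in play only when every agent reports $(a,a)$, and then only if the $\delta_5$-biased coin comes up heads, so $\qaar{5}=\tfrac{\delta_5}{n}(pq)^{n-1}$; matching this against the value $\qabr{5}-\pabr{5}+\paar{5}$ in \Cref{mech: Region3} forces $\delta_5=1+\tfrac{\qabr{5}-\pabr{5}}{\paar{5}}$.

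The feasibility $\delta_5\in[0,1]$ is the crux. The bound $\delta_5\le 1$ is $\qabr{5}\le\pabr{5}$, i.e.\ $\pab\ge\qab$, which is exactly \Cref{lemma: non-iid-items-probabilities-monotonicity} under $p\ge q$. The bound $\delta_5\ge 0$ is $\pabr{5}-\qabr{5}\le\paar{5}$, i.e.\ $\pab-\qab\le\paa$; this is precisely the extra interim-probability inequality that, in the region decomposition, places the instance in Region~5 rather than Region~6, so it holds by hypothesis. I expect the bookkeeping that ties the algebraic condition on $\delta_5$ exactly to this regional boundary to be the step most prone to a subtle slip, and the point where the Region~5 hypotheses are genuinely used.

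Finally, step (iii): by our choice $\qaar{5}=\qabr{5}-\pabr{5}+\paar{5}$, the combination $\qab-\pab+\paa-\qaa$ vanishes, which is exactly the Variant~$I$ relation \Cref{eq: BIC-variantI-eq}; hence the $x$-dependent term in the $p(b,b)$ formula of \Cref{lemma: non-iid-items-payments} drops out, leaving $p(b,b)=b(\pbb+\qbb)-(b-a)(\qab+\paa)$, while $p(b,a)=a\qab+b\pbb-(b-a)\paa$, $p(a,b)=a\pab+b\qbb-(b-a)\qaa$, and $p(a,a)=a(\paa+\qaa)$ come straight from \Cref{lemma: non-iid-items-payments} after substituting $x=\tfrac{a}{b-a}pq-p(1-q)$. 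Comparing these term by term with the payment list of \Cref{mech: Region3}, using the interim values from step (ii), completes the proof that the flow with $x=\tfrac{a}{b-a}pq-p(1-q)$ induces \Cref{mech: Region3}.
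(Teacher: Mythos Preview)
Your proof is correct and follows the same approach as the paper's: verify feasibility of the flow and the signs of the virtual values via \Cref{tab:thresholds} and \Cref{lemma: non-iid-items-monotone}, then invoke \Cref{lemma: non-iid-items-payments} for the payments. You are in fact more thorough than the paper, which stops after establishing the virtual-value signs and appealing to the payment lemma; your explicit computation of the interim allocations and the verification that $\delta_5=1+(\qabr{5}-\pabr{5})/\paar{5}\in[0,1]$ (using \Cref{lemma: non-iid-items-probabilities-monotonicity} for the upper bound and the Region~5 condition $\pab-\qab\le\paa$ for the lower) fills a genuine gap that the paper's terse proof leaves implicit.
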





\begin{lemma}\label{lemma: region5-BIC}
\Cref{mech: Region5} is BIC.
\end{lemma}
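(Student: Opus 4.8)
The plan is to reduce everything to \Cref{lemma: non-iid-items-BIC}, which already establishes BIR and characterizes BIC through four explicit inequalities, so it suffices to verify those inequalities for the interim probabilities of \Cref{mech: Region5}. First I would check that we are in Variant~$I$: the Region~5 hypothesis $\frac{1-q}{pq}>\frac{a}{b-a}$ rearranges to $\frac{a}{b-a}pq<1-q$, equivalently $\frac{a}{b-a}pq-p(1-q)<(1-p)(1-q)$, which is exactly $x<(1-p)(1-q)$ for the chosen flow value $x=\frac{a}{b-a}pq-p(1-q)$; similarly $q\ge\frac{b-a}{b}$ gives $x\ge0$, so the flow is feasible. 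Since $x<(1-p)(1-q)$ the edge $(b,b)\to(b,a)$ carries positive flow, so by complementary slackness its BIC constraint is tight, i.e.\ the Variant~$I$ identity \eqref{eq: BIC-variantI-eq}, $\qabr{5}-\pabr{5}+\paar{5}-\qaar{5}=0$, must hold --- and this is precisely how $\qaar{5}$ is defined in \Cref{mech: Region5}. I would then verify this definition is implementable through the biased-coin ($\delta_5$) rule at type $(a,a)$, i.e.\ $0\le\qaar{5}\le\paar{5}=\tfrac{1}{n}(pq)^{n-1}$: the upper bound is $\qabr{5}\le\pabr{5}$, an instance of \Cref{lemma: non-iid-items-probabilities-monotonicity} (recall $p\ge q$), and the lower bound $\pabr{5}-\qabr{5}\le\paar{5}$ is exactly the inequality delineating Region~5 from Region~6. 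Given these, $\delta_5=\qaar{5}/\paar{5}\in[0,1]$ exists.

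Next I would substitute $\qaar{5}=\qabr{5}-\pabr{5}+\paar{5}$ into the four Variant~$I$ inequalities \eqref{eq: BIC-variantI}. The inequality $\qabr{5}+\paar{5}\ge\pabr{5}+\qaar{5}$ then holds with equality, by the very choice of $\qaar{5}$. The other three collapse to monotonicity facts supplied by \Cref{lemma: non-iid-items-probabilities-monotonicity}: the chain $\qbbr{5}\ge\qabr{5}\ge\qaar{5}$ becomes $\qbbr{5}\ge\qabr{5}$ together with $\pabr{5}\ge\paar{5}$; the inequality $\pbbr{5}+\qaar{5}\ge\qabr{5}+\paar{5}$ becomes $\pbbr{5}\ge\pabr{5}$; and $\pbbr{5}\ge\paar{5}$ follows from $\pbbr{5}\ge\pabr{5}\ge\paar{5}$. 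Every one of $\qbbr{5}\ge\qabr{5}$ and $\pbbr{5}\ge\pabr{5}\ge\paar{5}$ is a case of \Cref{lemma: non-iid-items-probabilities-monotonicity}, so all of \eqref{eq: BIC-variantI} holds, and \Cref{lemma: non-iid-items-BIC} yields that \Cref{mech: Region5} is BIC; combined with its BIR clause and \Cref{lemma: region5-induce}, \Cref{thm: main Thm} then also certifies optimality.

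The one step I expect to be genuinely delicate is the lower bound $\pabr{5}-\qabr{5}\le\paar{5}$ used above to guarantee $\qaar{5}\ge0$ --- that is, certifying the stated hypotheses really place the instance in Region~5 rather than in the combined Region~6 (where this fails and $\qaar{5}$ would be forced negative). Using \eqref{eq: pbb}, \eqref{eq: qbb} to write $\pbbr{5}=\tfrac{1}{n}\sum_{k=0}^{n-1}p^k$ and $\qbbr{5}=\tfrac{1}{n}\sum_{k=0}^{n-1}q^k$, together with $\pabr{5}=p^{n-1}\qbbr{5}$, $\qabr{5}=q^{n-1}\pbbr{5}$ and $\paar{5}=\tfrac{1}{n}(pq)^{n-1}$, the claim becomes $\sum_{j=1}^{n-1}\bigl(q^{-j}-p^{-j}\bigr)\le1$; I expect this estimate to require the hypothesis $q\ge\frac{b-a}{b}$ (controlling how small $q$ can be) together with $p\le1$ and $p\ge q$, perhaps with mild dependence on $n$. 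Aside from this, the proof is bookkeeping: expanding the payment rule of \Cref{mech: Region5}, plugging in the closed forms \eqref{eq: pbb}--\eqref{eq: paa}, and citing the two prior lemmas.
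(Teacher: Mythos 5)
Your proposal matches the paper's argument: both reduce to the four Variant~$I$ inequalities of \Cref{lemma: non-iid-items-BIC} and discharge them via \Cref{lemma: non-iid-items-probabilities-monotonicity} after substituting $\qaar{5}=\qabr{5}-\pabr{5}+\paar{5}$. Your extra observation that $0\le\qaar{5}\le\paar{5}$ is what makes the biased-coin value $\delta_5$ well defined is a worthwhile refinement that the paper leaves implicit.

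One correction to your final paragraph, though. The inequality $\pabr{5}-\qabr{5}\le\paar{5}$, which you rewrite as $\sum_{j=1}^{n-1}(q^{-j}-p^{-j})\le 1$, is \emph{not} derivable from the hypotheses $pq\ge\tfrac{b-a}{b+a}$, $q\ge\tfrac{b-a}{b}$, $\tfrac{1-q}{pq}>\tfrac{a}{b-a}$ together with $p\ge q$: each summand $q^{-j}-p^{-j}\ge 0$ and the sum grows without bound as $n\to\infty$ or $q\to 0$, so for fixed $(p,q)$ with $p>q$ the inequality eventually fails for large $n$. This is precisely why the paper introduces Region~6 --- the condition $\pab-\qab\le\paa$ is an additional, $n$-dependent hypothesis built into the definition of Region~5, not a consequence of the $(p,q)$ constraints alone. (Compare the intro's remark that the third axis partitions a three-dimensional $(p,q,n)$-space.) You correctly note this earlier in the proposal (``exactly the inequality delineating Region~5 from Region~6''), so the resolution is simply to treat it as a standing assumption rather than something to estimate; attempting to prove it would lead nowhere.
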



    

\subsubsection{Region 6.}\label{subsec: region6}

We define Region 6 as the union of Regions 3 and 4 when the condition $\pab-\qab \le \paa$ does not hold. Notice that this condition violates the BIC constraints of Variant~$I$. Hence, the optimal mechanism switches to Variant~$II,$ where $x=(1-p)(1-q)$. First, we note that the hierarchy allocation mechanism induced by the flow is not immediately BIC. Then, we transform it into a modified hierarchy allocation mechanism that is BIC with the same revenue. By duality, we know that any feasible dual solution serves as an upper bound of the expected revenue. Since we have matching objective values, the solutions are optimal.

Next, we define the hierarchy allocation mechanism induced by the flow for $x=(1-p)(1-q).$ We use the notation $\pi_j^{(*)}(v)$ to denote the interim probabilities. Observe that the only negative virtual value is $H_2(a,a)$, making $\qaar{*}=0$.  Using Table~\ref{tab: virtual-values-non-iid-items} we can see that $H_2(b,a)=a$, and that $H_1(a,b)=H_1(a,a)=a-\frac{1-p}{p}(b-a).$ By the definition of the hierarchy allocation mechanism, the item is allocated uniformly at random to the highest non-negative virtual value. Therefore, $\qabr{*} =\qab$ from \Cref{eq: qab}. On the other hand, the fact that $H_1(a,b)=H_1(a,a)$ implies that $\pabr{*}=\paar{*}.$ When she reports $a$ for the first item, it is allocated whenever all other agents have reported $a$ independently from their report for the second item. Therefore, $\pabr{*}=\paar{*}=\frac{1}{n}p^{n-1}.$ Finally, since $H_j(v)=b$ when $v_j=b$, we get that $\pbbr{*}=\pbb$ (\Cref{eq: pbb}) and $\qbbr{*}=\qbb$ (\Cref{eq: qbb}). Applying \Cref{lemma: non-iid-items-payments}, and the fact that we are under Variant~$II$, we get that  
\begin{equation}\label{eq: induced-mech-region6}
    \begin{aligned}
p^{(*)}(b,b)&=b\bigl(\pbbr{*}+\qbbr{*}\bigr)
-(b-a)\bigl(\pabr{*}+\qaar{*}\bigr)\\
p^{(*)}(b,a)&=b\,\pbbr{*}+a\,\qabr{*}-(b-a)\,\paar{*},\\
p^{(*)}(a,b)&=a\,\pabr{*}+b\,\qbbr{*}-(b-a)\,\qaar{*},\\
p^{(*)}(a,a)&=a\bigl(\paar{*}+\qaar{*}\bigr),\\
\end{aligned}
\end{equation}

The BIC constraint violated by the flow-induced mechanism according to~\Cref{lemma: non-iid-items-BIC} (\Cref{eq: BIC-variantII}) is $\pab + \qaa \ge \qab +\paa$. For $\pabr{*}=\paar{*}$ the constraint would require $\qaar{*}\ge\qabr{*}$ but $\qaar{*}=0$.

Observe that in the definition of~\Cref{mech: Region6}, the only modification we do to the hierarchy allocation function is that $H_1(a,b)>H_1(a,a)$.This allows us to separate the allocation probabilities and follow \Cref{eq: pbb,eq: qbb,eq: pab,eq: qab,eq: paa}, for all interim probabilities besides  $\qaar{6}=0,$ since $H_2(a,a)<0.$

\begin{algorithm}[ht]
\caption{Mechanism for Region 6}\label{mech: Region6}
\begin{algorithmic}
\State The hierarchy allocation function is the same for all agents. That is for all $i\in [n]$ and $v\in \Vcal$, we have $H_{i,j} (v) = H_j(v)$. If $v_j=b$ then $H_j(v)=b$, if $v = (a,b)$, then $H_1(v)=$, if $v=(b,a)$ then $H_2(v)=0$, else if $v=(a,a)$ then $H_1(v) = \frac{1}{n}(pq)^n$ and $H_2(v)<0$. The payment function for reporting $v\in \Vcal$ is 
\begin{itemize}
    \item $p^{(6)}(b,b) = b(\pbbr{6} + \qbbr{6}) - (b-a)\pabr{6}$
    \item $p^{(6)}(b,a) = b\pbbr{6} + a\qabr{6} - (b-a)\paar{6}$
    \item $p^{(6)}(a,b) = b\qbbr{6} + a\pabr{6}  $
    \item $p^{(6)}(a,a) = a \paar{6}$
\end{itemize}

\noindent where $\pbbr{6} = \frac{p(1-p^n)}{n(1-p)} $, $\qbbr{6} =  \frac{q(1-q^n)}{n(1-q)}$, $\pabr{6} = p^{n-1} \qbbr{6}$, $\qabr{6} = q^{n-1} \pbbr{6} $,
\\$\paar{6} = \frac{1}{n}(pq)^{n-1} $ and $\qaar{6} = 0$
\end{algorithmic}
\end{algorithm}

We start by showing that the mechanism is BIC.

\begin{lemma}\label{lemma: region6-BIC}
\Cref{mech: Region6} is BIC.
\end{lemma}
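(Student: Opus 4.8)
The plan is to verify the eight BIC inequalities of Variant~$II$ from \Cref{lemma: non-iid-items-BIC} (\Cref{eq: BIC-variantII}) directly, using the closed forms of the interim probabilities given in \Cref{mech: Region6}, namely $\pbbr{6}=\tfrac{p(1-p^n)}{n(1-p)}$, $\qbbr{6}=\tfrac{q(1-q^n)}{n(1-q)}$, $\pabr{6}=p^{n-1}\qbbr{6}$, $\qabr{6}=q^{n-1}\pbbr{6}$, $\paar{6}=\tfrac1n(pq)^{n-1}$, and $\qaar{6}=0$. The key point that makes Region~6 different from Regions~3--5 is that here we \emph{separate} $H_1(a,b)>H_1(a,a)$ in the hierarchy allocation function, so that $\pabr{6}$ and $\paar{6}$ follow the unmodified formulas of \Cref{eq: pab,eq: paa} rather than being forced equal; the whole content of the lemma is that, once this separation is made, all of \Cref{eq: BIC-variantII} holds. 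I would first recall why the unmodified flow-induced mechanism fails (it has $\pabr{*}=\paar{*}$ and $\qaar{*}=0$, so $\pabr{*}+\qaar{*}\ge\qabr{*}+\paar{*}$ becomes $0\ge\qabr{*}$, false), and emphasize that \Cref{mech: Region6} fixes exactly this by giving item~1 priority to the $(a,b)$-type over the $(a,a)$-type.

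Concretely, I would carry out the verification in three groups. \textbf{(i) The ``first-item'' chain $\pbbr{6}\ge\pabr{6}\ge\paar{6}$ and the ``second-item'' endpoint $\qbbr{6}\ge\qaar{6}$:} the last is immediate since $\qaar{6}=0$; the chain $\pbbr{6}\ge\pabr{6}\ge\paar{6}$ follows directly from \Cref{lemma: non-iid-items-probabilities-monotonicity}, whose hypotheses ($p\ge q$, and the definitions \Cref{eq: pbb,eq: qbb,eq: pab,eq: qab,eq: paa}) are exactly our setting. \textbf{(ii) The constraint $\pabr{6}+\qaar{6}\ge\qabr{6}+\paar{6}$:} since $\qaar{6}=0$ this reads $\pabr{6}\ge\qabr{6}+\paar{6}$, i.e. $p^{n-1}\qbbr{6}\ge q^{n-1}\pbbr{6}+\tfrac1n(pq)^{n-1}$. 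This is the crux of the lemma and I expect it to be the main obstacle. I would prove it by the same kind of counting/telescoping argument used throughout \Cref{sec: thirdAxis}: expand $\pabr{6}=p^{n-1}\qbbr{6}=\sum_{\ell=1}^n\tfrac1\ell\binom{n-1}{\ell-1}(p(1-q))^{\ell-1}(pq)^{n-\ell}$ and compare term-by-term, or argue combinatorially that the event ``agent~$i$ shares item~$1$ with a nonempty set of $(a,b)$-reporters while everyone else reports $(a,a)$'' strictly dominates the union of the events counted by $\qabr{6}$ (agent shares via $(b,a)$) and $\paar{6}$ (everyone reports $(a,a)$); the factor $p^{n-1}$ versus $q^{n-1}$ is what buys the slack since $p\ge q$. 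Here it is essential that this region lives where $pq\ge\frac{b-a}{b+a}$ is \emph{not} enough on its own — I would check whether the region boundary ($\pab-\qab>\paa$, the defining failure of Variant~$I$) is precisely what guarantees this, and if so cite it; this is the delicate step. \textbf{(iii) The constraint $\qbbr{6}+\paar{6}\ge\pabr{6}+\qaar{6}$:} with $\qaar{6}=0$ and $\pabr{6}=p^{n-1}\qbbr{6}$ this becomes $\qbbr{6}(1-p^{n-1})+\paar{6}\ge0$, which is trivially true. Collecting (i)--(iii) establishes all four inequalities of \Cref{eq: BIC-variantII}, and BIR follows from \Cref{lemma: non-iid-items-BIC} unconditionally; hence the mechanism is BIC-IR.

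The one genuinely nontrivial piece is (ii), and I would state it as a self-contained sub-claim: \emph{in Region~6 one has $p^{n-1}\qbbr{6}\ge q^{n-1}\pbbr{6}+\tfrac1n(pq)^{n-1}$.} The natural route is to divide through by $(pq)^{n-1}$ and use $\qbbr{6}/q^{n-1}$ and $\pbbr{6}/p^{n-1}$ reductions, reducing to an inequality between two partial binomial sums plus the defining inequality of the region; alternatively, one invokes \Cref{lemma: Binomial Inequality} to bound the ratio $\pbbr{6}/\qbbr{6}$ and closes the gap using $p\ge q$. Since the analogous BIC verifications for Regions 1--5 are deferred to \Cref{apx: regions}, I would likewise push the detailed computation of (ii) to the appendix and keep the body text to the structural argument above, flagging that the separation $H_1(a,b)>H_1(a,a)$ in \Cref{mech: Region6} is exactly the modification that turns the violated Variant~$II$ constraint into a satisfied one while leaving the revenue (hence optimality, by duality) unchanged.
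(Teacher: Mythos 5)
Your decomposition into (i)--(iii) matches the paper's proof exactly for the easy inequalities: $\qbbr{6}\ge\qaar{6}$ is immediate from $\qaar{6}=0$; $\pbbr{6}\ge\pabr{6}\ge\paar{6}$ is cited to \Cref{lemma: non-iid-items-probabilities-monotonicity}; and $\qbbr{6}+\paar{6}\ge\pabr{6}+\qaar{6}$ reduces via $\pabr{6}=p^{n-1}\qbbr{6}$ and $\qaar{6}=0$ to $\qbbr{6}\ge p^{n-1}\qbbr{6}-\paar{6}$, which is trivially true. So far this is the paper's argument verbatim.

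The issue is your handling of (ii). You correctly flag the possibility that the region boundary ``$\pab-\qab>\paa$, the defining failure of Variant~$I$'' is exactly what guarantees $\pabr{6}\ge\qabr{6}+\paar{6}$, but you hedge and propose a backup route (``counting/telescoping argument,'' ``divide through by $(pq)^{n-1}$ and use \Cref{lemma: Binomial Inequality}'') aiming to prove it as a self-contained sub-claim about binomial sums. That backup route cannot work: the inequality $p^{n-1}\qbbr{6}\ge q^{n-1}\pbbr{6}+\tfrac1n(pq)^{n-1}$ is \emph{not} a universal fact. For instance, at $n=1$ one has $\pab=\qab=\paa=1$, so $\pab<\qab+\paa$. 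There is no combinatorial identity waiting to be found; Region~6 is \emph{defined} (in \Cref{subsec: region6} and the bullet list of regions) precisely as the set of parameters $(p,q,n)$ where $\pab-\qab>\paa$, so the constraint holds in Region~6 by fiat, and the paper's proof simply cites this. You should commit to that route rather than leaving it as ``I would check whether\ldots'' and spending effort on a doomed sub-claim. With that correction, your proof is the paper's proof, and the surrounding exposition (why the unmodified flow-induced mechanism fails, how separating $H_1(a,b)>H_1(a,a)$ repairs it) is a useful, accurate supplement.
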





    

The next step in showing that~\Cref{mech: Region6} is optimal is to show that the revenue is equal to the revenue of the flow-induced mechanism.

\begin{lemma}\label{lemma: equal-revenue}
The revenue of~\Cref{mech: Region6} is equal to the revenue of the revenue of the flow-induced mechanism, with payment identity according to~\Cref{eq: induced-mech-region6}.  
\end{lemma}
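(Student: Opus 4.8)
The plan is to compare the two expected revenues head‑on. Because the agents are i.i.d.\ and each payment depends only on the reporter's own type, both revenues equal $n$ times a single agent's interim expected payment, so it suffices to show $\sum_{v\in\Vcal}\Pr[v]\bigl(p^{(*)}(v)-p^{(6)}(v)\bigr)=0$, where $p^{(*)}$ denotes the flow‑induced payments of~\Cref{eq: induced-mech-region6} and $p^{(6)}$ those of~\Cref{mech: Region6}. The key structural observation is that the two payment rules have the \emph{same} algebraic form: throughout Region~6 we have $H_2(a,a)<0$, so $\qaar{*}=\qaar{6}=0$ and the $\qaa$‑terms of~\Cref{eq: induced-mech-region6} drop out. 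Moreover the two mechanisms agree on $\pbbr{*}=\pbbr{6}=\pbb$, $\qbbr{*}=\qbbr{6}=\qbb$ (both forced by $H_j(v)=b$ whenever $v_j=b$) and on $\qabr{*}=\qabr{6}=\qab$ (both forced by $H_2(b,a)$ being the top non‑negative item‑$2$ virtual value among $a$‑reporters), i.e.\ the values in~\Cref{eq: pbb,eq: qbb,eq: qab}. They differ \emph{only} in $\pabr{}$ and $\paar{}$: the flow‑induced mechanism has $H_1(a,b)=H_1(a,a)$ and splits item~$1$ uniformly among all agents reporting $a$ for it, so $\pabr{*}=\paar{*}=\tfrac1n p^{n-1}$; whereas~\Cref{mech: Region6} separates these virtual values, $H_1(a,b)>H_1(a,a)$, giving $\pabr{6}=\pab$ and $\paar{6}=\paa$ as in~\Cref{eq: pab,eq: paa}.

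I would then substitute the payment formulas into the revenue difference. With $\Delta_{ab}:=\pabr{*}-\pabr{6}$ and $\Delta_{aa}:=\paar{*}-\paar{6}$, the per‑type differences read off immediately: $p^{(*)}(b,b)-p^{(6)}(b,b)=-(b-a)\Delta_{ab}$, $p^{(*)}(b,a)-p^{(6)}(b,a)=-(b-a)\Delta_{aa}$, $p^{(*)}(a,b)-p^{(6)}(a,b)=a\Delta_{ab}$, and $p^{(*)}(a,a)-p^{(6)}(a,a)=a\Delta_{aa}$. Weighting by $\Pr[(b,b)]=(1-p)(1-q)$, $\Pr[(b,a)]=(1-p)q$, $\Pr[(a,b)]=p(1-q)$, $\Pr[(a,a)]=pq$ and collecting the $\Delta_{ab}$‑ and $\Delta_{aa}$‑terms, everything factors as
\[
\sum_{v\in\Vcal}\Pr[v]\bigl(p^{(*)}(v)-p^{(6)}(v)\bigr)=\bigl(ap-(b-a)(1-p)\bigr)\bigl((1-q)\Delta_{ab}+q\Delta_{aa}\bigr).
\]
Whatever the prefactor is, it now suffices to prove that the second factor vanishes, i.e.\ $(1-q)\pabr{6}+q\paar{6}=(1-q)\pabr{*}+q\paar{*}=\tfrac1n p^{n-1}$.

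The right‑hand equality is immediate since $\pabr{*}=\paar{*}=\tfrac1n p^{n-1}$. For the left one, substituting $\pabr{6}=p^{n-1}\qbb$, $\paar{6}=\tfrac1n(pq)^{n-1}$ and $\qbb=\tfrac{1-q^n}{n(1-q)}$ (\Cref{lemma: technical sum binom}) gives $(1-q)\pabr{6}+q\paar{6}=p^{n-1}\bigl(\tfrac{1-q^n}{n}+\tfrac{q^n}{n}\bigr)=\tfrac1n p^{n-1}$. Conceptually both sides are the probability that a fixed agent wins item~$1$ conditioned on her value for item~$1$ being $a$: in Region~6 item~$1$ is always awarded (both $H_1(a,b),H_1(a,a)$ are positive), and once every agent has item‑$1$ value $a$ — the only scenario in which she can win — the $n$ agents are exchangeable, so she wins with probability $\tfrac1n$; hence redistributing item‑$1$ mass between the $(a,b)$‑ and $(a,a)$‑reporters is revenue‑neutral in this aggregate sense. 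Either way $(1-q)\Delta_{ab}+q\Delta_{aa}=0$, so $\mathrm{Rev}^{(6)}=\mathrm{Rev}^{(*)}$, which is the assertion of the lemma.

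The one genuinely delicate step is the factorization in the second paragraph. It is not a priori obvious that moving item~$1$'s interim mass from the uniform split (flow‑induced) to the priority split (\Cref{mech: Region6}) preserves revenue, since the payments at $(a,b)$ and $(a,a)$ enter the revenue sum with coefficients of different sign and magnitude. The algebra is arranged precisely so that this reduces to the single scalar identity $(1-q)\pabr{6}+q\paar{6}=\tfrac1n p^{n-1}$; keeping the signs and probability weights straight in the expansion is the crux, and once that identity is in place the remainder is routine.
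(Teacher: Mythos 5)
Your proposal is correct and follows essentially the same route as the paper: it isolates the per-type payment differences (which only involve $\pab$ and $\paa$, since the two mechanisms agree on $\pbb$, $\qbb$, $\qab$ and have $\qaa=0$), weights by the type probabilities, factors the revenue difference into a scalar prefactor times $(1-q)\Delta_{ab}+q\Delta_{aa}$, and shows the second factor vanishes by the identity $(1-q)\pabr{6}+q\paar{6}=\tfrac1n p^{n-1}=(1-q)\pabr{*}+q\paar{*}$. Your interim-probability interpretation of that identity — that $(1-q)\pab+q\paa$ is just the conditional probability of winning item~$1$ given an $a$-report, which must be $\tfrac1n p^{n-1}$ in Region~6 because item~$1$ is always awarded and agents are exchangeable — is a nice conceptual gloss the paper omits, but the underlying calculation is the same.
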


Therefore, the hierarchy allocation mechanism is optimal, since we showed that a feasible primal solution has the same objective value as a feasible dual solution.

\subsubsection{Region 7.}\label{subsec: region7}

In the last region, all the virtual values are positive. Similar to region 6, the optimal flow is $x=(1-p)(1-q)$ under Variant~$II$. For $H_2(a,a)$ to be positive $\frac{1-q}{pq}\le \frac{a}{b-a}$. We run into the same issue as in Region 6, where the flow-induced mechanism is not BIC. Following the same logic as in Region 6, the interim allocation probailities of the mechanism are: $\pbbr{*}=\pbb$ (\Cref{eq: pbb}) since $H_j(v)=b$ when $v_j=b$ and $\qbbr{*}=\qbb$ (\Cref{eq: qbb}), $\pabr{*}=\paar{*}=\frac{1}{n}p^{n-1}$ since  $H_1(a,b)=H_1(a,a)$. The only difference from Region 6 is that $0<H_2(a,a) < H_2(b,a)$, making $\qaar{*} =\frac{1}{n}(pq)^{n-1}$, as in \Cref{eq: paa}.

We define the modified hierarchy mechanism by using $H_1(a,b)>H_1(a,a)$, which induces the interim from \Cref{eq: pbb,eq: qbb,eq: pab,eq: qab,eq: paa}, also shown in~\Cref{mech: Region7}. We can apply directly~\Cref{lemma: equal-revenue} since the only difference between~\Cref{mech: Region7} and the flow-induced mechanism are the interim probabilities of $\pab$ and $\paa.$

\begin{algorithm}[ht]
\caption{Mechanism for Region 7}\label{mech: Region7}
\begin{algorithmic}
\State The hierarchy allocation function is the same for all agents. That is for all $i\in [n]$ and $v\in \Vcal$, we have $H_{i,j} (v) = H_j(v)$. If $v_j=b$ then $H_j(v)=b$, if $v = (a,b)$, then $H_1(v)=$, if $v=(b,a)$ then $H_2(v)=0$, else if $v=(a,a)$ then $H_1(v) = \frac{1}{n}(pq)^n$ and $H_2(v)<0$. The payment function for reporting $v\in \Vcal$ is 
\begin{itemize}
    \item $p^{(7)}(b,b) = b(\pbbr{7} + \qbbr{7}) - (b-a)(\pabr{7}+\paar{7})$
    \item $p^{(7)}(b,a) = b\pbbr{7} + a\qabr{7} - (b-a)\paar{6} $
    \item $p^{(7)}(a,b) = b\qbbr{7} + a\pabr{7} - (b-a)\paar{7} $
    \item $p^{(7)}(a,a) = 2a \paar{7}$
\end{itemize}

\noindent where $\pbbr{7} = \frac{p(1-p^n)}{n(1-p)} $, $\qbbr{7} =  \frac{q(1-q^n)}{n(1-q)}$, $\pabr{7}= p^{n-1} \qbbr{7}$, $\qab = q^{n-1} \pbbr{7} $, and \\$\paar{7} = \qaar{7} = \frac{1}{n}(pq)^{n-1} $

\end{algorithmic}
\end{algorithm}

\begin{lemma}\label{lemma: region7-BIC}
\Cref{mech: Region7} is BIC.
\end{lemma}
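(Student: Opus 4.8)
The plan is to show that \Cref{mech: Region7} satisfies the Variant~$II$ characterization of \Cref{lemma: non-iid-items-BIC}, and then verify the four inequalities in \Cref{eq: BIC-variantII}. The first step is to observe that, although \Cref{mech: Region7} is the \emph{modified} hierarchy mechanism (it uses $H_1(a,b)>H_1(a,a)$ in place of the flow-induced equality), its payment rule is exactly the Variant~$II$ payment identity obtained from \Cref{lemma: non-iid-items-payments} with $x=(1-p)(1-q)$, evaluated at the interim probabilities $\pbbr{7},\qbbr{7},\pabr{7},\qabr{7},\paar{7},\qaar{7}$ of \Cref{mech: Region7}. Concretely, since $\paar{7}=\qaar{7}=\tfrac1n(pq)^{n-1}$, the $(b,b)$-payment $b(\pbbr{7}+\qbbr{7})-(b-a)(\pabr{7}+\paar{7})$ matches \Cref{eq: non-iid-items-p2(b,b)}, and the three payments for $(b,a),(a,b),(a,a)$ coincide with the closed forms of \Cref{lemma: non-iid-items-payments} after substituting $\qaar{7}=\paar{7}$. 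Because \Cref{lemma: non-iid-items-BIC} characterizes BIC purely through the interim probabilities once the payments take this form, it applies to \Cref{mech: Region7}; BIR is automatic, so only \Cref{eq: BIC-variantII} remains.

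Next I would substitute the closed forms from \Cref{eq: pbb} through \Cref{eq: paa} (which, as explained in \Cref{subsec: region7}, are precisely the interim probabilities of \Cref{mech: Region7}) and invoke the monotonicity bounds of \Cref{lemma: non-iid-items-probabilities-monotonicity}: since $p\ge q$, one has $\pbb\ge\pab\ge\paa$, $\qbb\ge\qab\ge\qaa$, and $\pab\ge\qab$. The first condition $\pbb\ge\pab\ge\paa$ and the last condition $\qbb\ge\qaa$ are then immediate. For the second condition $\pab+\qaa\ge\qab+\paa$, I would use $\paar{7}=\qaar{7}$ to cancel terms and reduce it to $\pab\ge\qab$, which is again \Cref{lemma: non-iid-items-probabilities-monotonicity}. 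For the third condition $\qbb+\paa\ge\pab+\qaa$, the same cancellation reduces it to $\qbb\ge\pab$, which holds because $\pab=p^{n-1}\qbb\le\qbb$ since $p\le1$.

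The arithmetic is routine, so the one point that genuinely deserves care — and the natural place for the argument to go wrong — is the transfer of \Cref{lemma: non-iid-items-BIC} to the modified mechanism, together with the reason the modification is needed at all. The flow-induced mechanism itself is \emph{not} BIC: there $\pabr{*}=\paar{*}=\tfrac1n p^{n-1}$ while $\qaar{*}=\tfrac1n(pq)^{n-1}<q^{n-1}\pbb=\qabr{*}$, so $\pabr{*}+\qaar{*}<\qabr{*}+\paar{*}$ and the second Variant~$II$ inequality fails. Raising $H_1(a,b)$ strictly above $H_1(a,a)$ (keeping both positive and leaving item~$2$'s allocation unchanged) increases $\pi_1(a,b)$ from $\tfrac1n p^{n-1}$ to $p^{n-1}\qbb$ and decreases $\pi_1(a,a)$ from $\tfrac1n p^{n-1}$ to $\tfrac1n(pq)^{n-1}$, so the binding inequality becomes $\pabr{7}\ge\qabr{7}$, which holds by $p\ge q$. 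This modification does not alter any payment formula (the identities of \Cref{lemma: non-iid-items-payments} are expressed in terms of the interim probabilities themselves), so \Cref{lemma: non-iid-items-BIC} applies verbatim, and by \Cref{lemma: equal-revenue} it leaves the revenue unchanged, preserving optimality.
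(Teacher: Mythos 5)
Your proposal is correct and follows essentially the same route as the paper's: instantiate the Variant~$II$ characterization of \Cref{lemma: non-iid-items-BIC}, use $\paar{7}=\qaar{7}$ to collapse \Cref{eq: reg7-2} and \Cref{eq: reg7-3} to $\pab\ge\qab$ and $\qbb\ge\pab$ respectively, and dispatch everything with \Cref{lemma: non-iid-items-probabilities-monotonicity}. Your explicit one-liner $\pab=p^{n-1}\qbb\le\qbb$ is a small but genuine improvement in precision, since $\qbb\ge\pab$ is not literally one of the inequalities stated in \Cref{lemma: non-iid-items-probabilities-monotonicity} even though the paper cites that lemma for it; the rest of your third paragraph (why the flow-induced mechanism fails BIC and why the $H_1(a,b)>H_1(a,a)$ modification fixes it) recapitulates the discussion already in \Cref{subsec: region7} and \Cref{lemma: equal-revenue} rather than being part of this lemma's proof.
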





    
\subsection{\texorpdfstring{$m$}{m} non-i.i.d. items and \texorpdfstring{$n=1$}{n=1} agent} \label{sec: extensions}
In this section, we consider the setting with a single agent ($n = 1$) and $m$ items. For each item $j \in [m]$, there exists a set of possible values given by 
\[
\Vcal_{j} = \{v_{j,1} + c, v_{j,2} + c, \dots, v_{j,\rho_j} + c\},
\]
where $0 < v_{j,1} < v_{j,2} < \dots < v_{j,\rho_j}$ and $c > 0$. The agent's overall valuation space is then $\Vcal = \bigtimes_{j \in [m]} \Vcal_j$. For each item $j$, the agent's value is drawn independently from a discrete distribution $\Dcal_j$ supported on $\Vcal_j$, and we assume that $\Pr[v_{j,1} + c] \ge \delta$ for some $\delta > 0$ and all $j \in [m]$. Recall that finding the optimal mechanism for any choice of $c$ is \#P-hard.

Given the product distribution $\Dcal = \bigtimes_{j \in [m]} \Dcal_j$, we show that there exists a threshold $c^*$ such that for all $c > c^*$, the revenue-optimal mechanism is grand bundling. This result generalizes the discrete analog of one of the results of Daskalakis et al.~\cite{daskalakis2017strong}, which considers $m$ i.i.d.\ uniformly distributed items over $[c, c+1]$, to arbitrary (non-identical) discrete distributions.

We emphasize that our result does \emph{not} generalize the continuous version of~\cite{daskalakis2017strong}; instead, it generalizes the discrete analog. To further bridge the gap between discrete and continuous settings, we apply standard discretization techniques and show that for any continuous product distribution $\Dcal$ and any $\epsilon > 0$, there exists a $c^*$ such that for all $c > c^*$, grand bundling achieves revenue at least $\mathrm{OPT} - \epsilon$.

\begin{algorithm}[!ht]
\caption{Grand Bundling}\label{mech: grandBundling}
\begin{algorithmic}
\State For any $v \in \Vcal$, and $j \in [m]$, $H_{j}(v)=1$ (i.e. the agent always receives all items).

\State For any $v \in \Vcal$ the agent pays $p(v)= cm+\sum_{j \in [m]}v_{j,1}$ (i.e. the price of the grand bundle).
\end{algorithmic}
\end{algorithm}

\begin{definition}(\Cref{mech: grandBundling} flow) \label{def: grandBundling flow}
    Let $\tilde{v}= [c+v_{1,1}, c+v_{2,1}, \dots, c+v_{m,1}]$. Then we define the \Cref{mech: grandBundling} flow as follows:
    \begin{itemize}
        \item  for any $v = [c+v_{1,\sigma_1}, c+v_{2,\sigma_2}, \dots, c+v_{m,\sigma_m}]\in \Vcal$, $\mu(v)=\mathbf{1}[v = \tilde{v}]$, where $\mathbf{1}[\cdot]$ is the indicator function.
        \item For any $v \in \Vcal-\{\tilde{v}\}$, $v' \in \Vcal$, $\lambda(v,v')= \mathbf{1}[v' = \tilde{v} \;\&\; v \neq v' ] \prod_{j \in [m]} \Pr[c+v_{j,\sigma_j}]$.
    \end{itemize}
\end{definition}
The intuitive interpretation of the above flow is that all nodes send their entire flow to node $\tilde{v}$ and node $\tilde{v}$ sends the entire flow it receives to the sink. Now, we will once again use our methodology to prove the main result of this section. First, we will show that for sufficiently large $c$, \Cref{mech: grandBundling} is induced by the flow described in \Cref{def: grandBundling flow}. Show truthfulness and individual rationality for \Cref{mech: grandBundling} is trivial. Combining the above with \Cref{thm: main Thm} we can easily prove the main result of this section.

\begin{lemma}\label{lemma: lower bound on c}
    For $c\ge \frac{v_{max}-v_{min}}{\delta^m}$  \Cref{mech: grandBundling} is induced by the flow described in \Cref{def: grandBundling flow}, where $v_{min} = \min_{j \in [m]} v_{j,1}$ and  $v_{max} = \max_{j \in [m]} v_{j,\rho_j}$.
\end{lemma}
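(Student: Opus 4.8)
The plan is to verify directly that \Cref{mech: grandBundling} coincides, rule by rule, with the ``Flow Induced Mechanism'' construction applied to the flow of \Cref{def: grandBundling flow}; combined with \Cref{thm: main Thm} and the (trivial) fact that grand bundling is BIC and IR, this gives optimality. Since $n=1$, the interim allocation reduces to the ex-post allocation, i.e. $\pi_j(v)=x_j(v)$ for all $v\in\Vcal$ and $j\in[m]$, which keeps every computation elementary.

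First I would record the shape of the flow. By \Cref{def: grandBundling flow} the only edges carrying positive flow are $v\to\tilde v$ for $v\ne\tilde v$, each carrying $\Pr[v]=\prod_{j\in[m]}\Pr[c+v_{j,\sigma_j}]$, and $\tilde v$ is the only node feeding the sink, with $\mu(\tilde v)=\sum_{v\in\Vcal}\Pr[v]=1$; flow conservation at $\tilde v$ is then immediate and the graph is acyclic. Its path decomposition therefore consists of the paths $\ell_v=(s,v,\tilde v,\bot)$ with weight $\xi_{\ell_v}=\Pr[v]$ for each $v\ne\tilde v$, together with $\ell_{\tilde v}=(s,\tilde v,\bot)$ with weight $\xi_{\ell_{\tilde v}}=\Pr[\tilde v]$.

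Next I would compute the induced hierarchy scores $H_j(v)=v_j-\frac{1}{\Pr[v]}\sum_{v'\in\Vcal}\lambda(v',v)(v'_j-v_j)$. For $v\ne\tilde v$ there are no in-edges, so $H_j(v)=v_j=c+v_{j,\sigma_j}>0$. For $v=\tilde v$ the in-edges are exactly $\{v'\to\tilde v : v'\ne\tilde v\}$ with flow $\Pr[v']$, and since $v'_j-\tilde v_j=v_{j,\sigma'_j}-v_{j,1}\in[0,v_{max}-v_{min}]$ and $\Pr[\tilde v]=\prod_j\Pr[c+v_{j,1}]\ge\delta^m$,
\[
H_j(\tilde v)=(c+v_{j,1})-\frac{1}{\Pr[\tilde v]}\sum_{v'\ne\tilde v}\Pr[v'](v'_j-\tilde v_j)\ge (c+v_{j,1})-\frac{v_{max}-v_{min}}{\delta^m}\ge v_{min}>0,
\]
where the last inequality uses $c\ge\frac{v_{max}-v_{min}}{\delta^m}$ and $v_{j,1}\ge v_{min}>0$. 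Hence $H_j(v)>0$ for every $v$ and $j$, so the hierarchy allocation always hands every item to the single agent, matching $H_j(v)=1$ in \Cref{mech: grandBundling}. \emph{This positivity estimate is the one place the hypothesis on $c$ enters and is the crux of the lemma}: for small $c$, $H_j(\tilde v)$ may turn negative, the induced mechanism stops allocating, and it no longer agrees with grand bundling.

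Finally I would evaluate the flow-induced payment formula using $\pi_j(\cdot)\equiv 1$. For $v\ne\tilde v$ the unique path $\ell_v=(s,v,\tilde v,\bot)$ gives
\[
p(v)=\sum_{j\in[m]}v_j-\sum_{j\in[m]}(v_j-\tilde v_j)=\sum_{j\in[m]}\tilde v_j=cm+\sum_{j\in[m]}v_{j,1},
\]
and for $v=\tilde v$ the path $\ell_{\tilde v}=(s,\tilde v,\bot)$ has no intermediate node, so $p(\tilde v)=\sum_{j\in[m]}\tilde v_j=cm+\sum_{j\in[m]}v_{j,1}$. Both equal the grand-bundle price in \Cref{mech: grandBundling}, completing the identification. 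The main obstacle is the virtual-value positivity bound above; the rest is routine bookkeeping with the one-path-per-type decomposition.
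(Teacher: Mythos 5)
Your proof is correct and follows essentially the same route as the paper's: compute the hierarchy scores and check that for $v\neq\tilde v$ the score is $v_j>0$, bound $H_j(\tilde v)$ from below using $\Pr[\tilde v]\ge\delta^m$ and $v'_j-\tilde v_j\le v_{\max}-v_{\min}$, and verify the payment formula path by path. The only (harmless) difference is that you keep the $c+v_{j,1}$ term and deduce strict positivity $H_j(\tilde v)\ge v_{\min}>0$, whereas the paper drops $v_{j,1}\ge 0$ and settles for $H_j(\tilde v)\ge 0$, relying on the $\delta$-probability tie-breaking case of the hierarchy definition to still allocate the item.
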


\begin{proof}
    As before let $\tilde{v}= [c+v_{1,1}, c+v_{2,1}, \dots, c+v_{m,1}]$. For any $v \in \Vcal - \{\tilde{v}\}$, and $j \in [m]$, the mechanism induced by the flow of \Cref{def: grandBundling flow} has $H_{j}(v)= v_j - \frac{1}{\Pr[v]}\sum_{v' \in \mathcal{V}} \lambda_i(v', v)(v'_{j} - v_{j}) = v_j>0$. Since there is only one agent, whenever she has a positive $H_{j}(v)$ (virtual value) for the item, she takes it. Thus, for all $v \in \Vcal - \{\tilde{v}\}$ the agent receives all items, which is consistent with \Cref{mech: grandBundling}. Also, we have:
    \begin{align*}
        H_{j}(\tilde{v}) &= \tilde{v}_j - \frac{1}{\Pr[\tilde{v}]}\sum_{v' \in \mathcal{V}} \lambda_i(v', \tilde{v})(v'_{j} - \tilde{v}_{j}) \\
        &= c+v_{j,1} - \frac{1}{\Pr[\tilde{v}]}\sum_{v' \in \mathcal{V}} \lambda_i(v', \tilde{v})(v'_{j,\sigma_{j}}+c - (c+v_{j,1})) \\
        &= c+v_{j,1} - \frac{1}{\Pr[\tilde{v}]}\sum_{v' \in \mathcal{V}} \lambda_i(v', \tilde{v})(v'_{j,\sigma_{j}} -v_{j,1}) \\
        &\ge c - \frac{1}{\Pr[\tilde{v}]}\sum_{v' \in \mathcal{V}} \lambda_i(v', \tilde{v})(v_{max}-v_{min}) \tag{$v_{j,1}\ge0$ and $\lambda_i(v', \tilde{v}) \ge 0$}\\
        &= c-\frac{1}{\Pr[\tilde{v}]}(v_{max}-v_{min}) \tag{$\sum_{v' \in \mathcal{V}} \lambda_i(v', \tilde{v}) \le 1$}\\
        &\ge c-\frac{v_{max}-v_{min}}{\delta^m} \ge 0 \tag{$\Pr[\tilde{v}] \ge \delta^m$}
    \end{align*}
    Thus, given our assumption on $c$, allocating all items to the agent when their value is $\tilde{v}$ is also consistent with the flow-induced mechanism. Since we have shown that the allocation rule is consistent with the flow for all $v \in \Vcal$ we simply need to show that the payment is consistent as well to conclude the proof. Node $\tilde{v}$ sends all its flow directly to the sink. Thus, for the flow-induced payment, we have:
    \begin{align*}
        p_i(\tilde{v}) &= \frac{1}{\Pr[\tilde{v}]}\sum_{\ell \in \Pcal_{\tilde{v}}} \xi_{\ell} \left(\sum_{j \in [m]} \tilde{v}_{j} \pi_{j}(\tilde{v}) - \sum_{z \in [1,|\ell|-3]}  \sum_{j \in [m]} \left( \tilde{v}_{1,j}^{z}-v_{1,j}^{z+1} \right)\pi_{j}(v_{1}^{z+1})\right)\\
        &= \sum_{j \in [m]} \tilde{v}_{j} \pi_{j}(\tilde{v}) - \sum_{z \in [1,|\ell|-3]}  \sum_{j \in [m]} \left( \tilde{v}_{1,j}^{z}-v_{1,j}^{z+1} \right)\pi_{j}(v_{1}^{z+1}) \tag{$|\Pcal_{\tilde{v}}|=1$ and $\xi_{\ell} = \Pr[\tilde{v}]$}\\
        &= \sum_{j \in [m]} \tilde{v}_{j} \pi_{j}(\tilde{v}) \tag{$|\ell|=3$}\\
        &= \sum_{j \in [m]} \tilde{v}_{j} = cm+\sum_{j \in [m]}v_{j,1}
    \end{align*}
    For any other $v \in \Vcal-\{\tilde{v}\}$ we again have that there is only one path that includes this node and it is $\ell = (s,v, \tilde{v},\bot)$. Thus, for the flow-induced payment, we have:
    \begin{align*}
        p_i(v) &= \frac{1}{\Pr[v]}\sum_{\ell \in \Pcal_{v}} \xi_{\ell} \left(\sum_{j \in [m]} v_{j} \pi_{j}(v) - \sum_{z \in [1,|\ell|-3]}  \sum_{j \in [m]} \left( v_{1,j}^{z}-v_{1,j}^{z+1} \right)\pi_{j}(v_{1}^{z+1})\right)\\
        &= \sum_{j \in [m]} v_{j} \pi_{j}(v) - \sum_{z \in [1,|\ell|-3]}  \sum_{j \in [m]} \left( v_{1,j}^{z}-v_{1,j}^{z+1} \right)\pi_{j}(v_{1}^{z+1}) \tag{$|\Pcal_{v}|=1$ and $\xi_{\ell} = \Pr[v]$}\\
        &= \sum_{j \in [m]} v_{j} \pi_{j}(v) -\sum_{j \in [m]} \left( v_{j}-\tilde{v}_{j} \right)\pi_{j}(\tilde{v})  \tag{$\ell=(s,v, \tilde{v},\bot)$}\\
        &=\sum_{j \in [m]} v_{j}  -\sum_{j \in [m]} \left( v_{j}-\tilde{v}_{j} \right)  \tag{$\pi_{j}(v)=1$, for all $j \in [m]$, and $v \in \Vcal$}\\
        &= \sum_{j \in [m]} \tilde{v}_{j} = cm+\sum_{j \in [m]}v_{j,1}
    \end{align*}
    Thus, the payment is consistent with the flow as well. This concludes the proof.
\end{proof}

\begin{theorem} \label{thm: optimalBundling}
    For $c\ge \frac{v_{max}-v_{min}}{\delta^m}$ \Cref{mech: grandBundling} is optimal and always extracts $cm+\sum_{j \in [m]}v_{j,1}$ revenue, where $v_{min} = \min_{j \in [m]} v_{j,1}$ and  $v_{max} = \max_{j \in [m]} v_{j,\rho_j}$.
\end{theorem}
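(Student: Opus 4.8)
The plan is to obtain the theorem directly from \Cref{lemma: lower bound on c} and the master duality statement \Cref{thm: main Thm}, with only a couple of trivial verifications in between. Concretely, I would proceed in three short steps.

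First, I would record that the flow of \Cref{def: grandBundling flow} is a feasible dual solution. Flow conservation holds at every node: each $v\neq\tilde v$ forwards exactly its source inflow $\Pr[v]=\prod_{j\in[m]}\Pr[c+v_{j,\sigma_j}]$ along its single outgoing edge to $\tilde v$, while $\tilde v$ receives total inflow $\sum_{v\in\Vcal}\Pr[v]=1$ and discharges all of it to the sink via $\mu(\tilde v)=1$; the $\kappa$-variables can be set to $\kappa_j(v)=\Pr[v]\,[H_j(v)]^+\ge 0$, so the second family of dual constraints is satisfied and $\kappa,\lambda,\mu\ge 0$. The flow is acyclic, so its path decomposition exists. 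By \Cref{lemma: lower bound on c}, whenever $c\ge (v_{\max}-v_{\min})/\delta^m$ this flow induces precisely \Cref{mech: grandBundling}.

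Second, I would check that \Cref{mech: grandBundling} is BIC-IR, which is immediate because the mechanism ignores the report entirely: it always allocates every item to the agent and charges the constant price $cm+\sum_{j\in[m]}v_{j,1}$, independent of the reported type. Hence $\mathbb{E}[u(v\to v)]=\mathbb{E}[u(v\to v')]$ for every $v,v'\in\Vcal$, giving BIC; and for a true type $v=[c+v_{1,\sigma_1},\dots,c+v_{m,\sigma_m}]$ we get $\mathbb{E}[u(v\to v)]=\sum_{j\in[m]}(c+v_{j,\sigma_j})-\bigl(cm+\sum_{j\in[m]}v_{j,1}\bigr)=\sum_{j\in[m]}(v_{j,\sigma_j}-v_{j,1})\ge 0$ since $v_{j,\sigma_j}\ge v_{j,1}$, giving BIR. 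Now \Cref{thm: main Thm} applies and certifies that \Cref{mech: grandBundling} is revenue-optimal, with the dual flow serving as the optimality certificate. The revenue computation is then trivial: every type profile pays the same amount, so $\mathrm{Rev}=\mathbb{E}_{v\sim\Dcal}[p(v)]=cm+\sum_{j\in[m]}v_{j,1}$.

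There is essentially no obstacle in this theorem itself; all the content sits in \Cref{lemma: lower bound on c}, whose single delicate point is showing $H_j(\tilde v)\ge 0$ — i.e., that after all probability mass is rerouted into $\tilde v$, its virtual value remains nonnegative. That is exactly what forces the threshold $c\ge (v_{\max}-v_{\min})/\delta^m$, via the bounds $\Pr[\tilde v]\ge\delta^m$ and $\sum_{v'\in\Vcal}\lambda(v',\tilde v)\le 1$. The only thing worth double-checking when assembling the final argument is that the stated constant in the threshold is consistent with the bound derived in the lemma; nothing deeper is required.
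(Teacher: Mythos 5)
Your proof is correct and follows the same route as the paper: appeal to \Cref{lemma: lower bound on c} to get that the flow induces the mechanism, observe BIC-IR is immediate since the mechanism ignores the report and charges the minimum bundle value, and conclude optimality via \Cref{thm: main Thm}. Your extra explicit check of dual feasibility (flow conservation at each node and the choice $\kappa_j(v)=\Pr[v]\,[H_j(v)]^+$) is a detail the paper leaves implicit but is a welcome clarification rather than a different argument.
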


\begin{proof}
The proof follows directly from the structure of \Cref{mech: grandBundling}. The mechanism is truthful, as the allocation and payment do not depend on the agent’s reported values. It is individually rational because the payment is equal to the minimum possible value the agent could have for the grand bundle, namely $cm + \sum_{j \in [m]} v_{j,1}$. 

Optimality follows from \Cref{thm: main Thm} and the fact that the mechanism is induced by a feasible flow, as established in \Cref{lemma: lower bound on c}. Therefore, the mechanism is optimal, and we extract a revenue of exactly $cm + \sum_{j \in [m]} v_{j,1}$ from every agent type.
\end{proof}

\subsubsection{Extension to Continuous Distributions}

Although our results can be generalized for items with different supports, to keep simplicity and consistency with existing literature, assume that the value of each item $v_j$ is drawn interdependently from a continuous distribution $\Dcal_j$ that is supported in $[c,c+1]$.

\begin{theorem}
    For any $\epsilon>0$, if $c>\left(\frac{m}{\epsilon} \right)^m$, selling the grand bundle at a price of $mc$ extracts $\mathrm{OPT}-\epsilon$ revenue.
\end{theorem}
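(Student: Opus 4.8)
The plan is to deduce the bound from the discrete result \Cref{thm: optimalBundling} via a discretization argument. The first, easy step is to observe that selling the grand bundle at the fixed price $mc$ is a BIC, BIR mechanism whose revenue is deterministic: every type $v\in[c,c+1]^m$ has total value $\sum_{j\in[m]}v_j\ge mc$, so every type buys the bundle, and the mechanism extracts exactly $mc$. Hence the theorem is equivalent to the upper bound $\mathrm{OPT}(\mathcal{D})\le mc+\epsilon$, and it is this bound that the rest of the proof must establish; once it is in place, truthfulness, individual rationality and the revenue of the grand‑bundle mechanism are all trivial, and we are done.

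To bound $\mathrm{OPT}(\mathcal{D})$ I would discretize each continuous marginal $\mathcal{D}_j$ into a finitely supported $\hat{\mathcal{D}}_j$ on a grid contained in $[c,c+1]$, forming a discrete product distribution $\hat{\mathcal{D}}$ to which \Cref{thm: optimalBundling} applies, and then argue (i) that the discretization loses little, $\mathrm{OPT}(\mathcal{D})\le\mathrm{OPT}(\hat{\mathcal{D}})+\epsilon/2$, and (ii) that \Cref{thm: optimalBundling} gives $\mathrm{OPT}(\hat{\mathcal{D}})=cm+\sum_{j}\hat v_{j,1}\le cm+\epsilon/2$, where $c+\hat v_{j,1}$ denotes the least value in $\mathrm{supp}(\hat{\mathcal{D}}_j)$. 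For (i) I would invoke the standard coupling bound for revenue under rounding: if $\hat v$ is obtained from $v$ by moving each coordinate by at most $\eta$, then running the optimal mechanism of one instance on the rounded reports of the other loses at most $m\eta$ in expected revenue, so it is enough to take $\eta\le\epsilon/(2m)$. For (ii) note that the discretized instance has $\hat v_{\max}-\hat v_{\min}\le 1$ (all values lie in $[c,c+1]$), so \Cref{thm: optimalBundling} applies as soon as $c\ge 1/\delta^{\,m}$, where $\delta:=\min_j\Pr_{\hat{\mathcal{D}}_j}[\hat v_j=c+\hat v_{j,1}]$ is the smallest bottom‑atom probability; to make the hypothesized $c>(m/\epsilon)^m$ suffice one wants $\delta\ge\epsilon/m$. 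The natural way to guarantee this while also forcing $\hat v_{j,1}=0$ (so $\sum_j\hat v_{j,1}=0\le\epsilon/2$) is a \emph{non‑uniform} grid: in each coordinate pick a threshold $t_j$ with $\Pr_{\mathcal{D}_j}[v_j\le c+t_j]=\epsilon/m$ (it exists by continuity), collapse $[c,c+t_j]$ to the single atom $c$, and round down on a width‑$\eta$ grid above $c+t_j$; then the bottom atom of $\hat{\mathcal{D}}_j$ is $c$ with mass $\epsilon/m$, so $\delta\ge\epsilon/m$, $1/\delta^{\,m}\le(m/\epsilon)^m<c$, and \Cref{thm: optimalBundling} yields $\mathrm{OPT}(\hat{\mathcal{D}})=cm$. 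Combining with (i) gives $\mathrm{OPT}(\mathcal{D})\le cm+\epsilon/2$.

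The step I expect to be the real obstacle is reconciling these two requirements: the coarse ``bottom collapse'' in each coordinate is exactly what keeps $\delta$ bounded below so that the mild growth $c>(m/\epsilon)^m$ — rather than the far larger $c\ge\delta^{-m}$ forced by a uniform fine grid — is enough for \Cref{thm: optimalBundling}, but one must still verify that this collapse obeys $\mathrm{OPT}(\mathcal{D})\le\mathrm{OPT}(\hat{\mathcal{D}})+\epsilon/2$. The collapsed mass is only an $\epsilon/m$‑fraction of each coordinate, so by a union bound the ``corrupted'' event has probability at most $\epsilon$; the care lies in converting that probability bound into a revenue bound uniformly in $c$, since a naive bound costs $\Theta(\epsilon\cdot mc)$, which is too weak. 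I would try to close this gap by exploiting that no feasible mechanism can extract from these low‑quantile types more than a controlled amount (e.g. via a transport‑map / virtual‑value argument in the spirit of \Cref{thm: main Thm}, or by a direct monotonicity argument showing the optimal mechanism's payment on these types is dominated by their own value above $c$). With that revenue‑stability estimate established, the remaining pieces are immediate and the theorem follows.
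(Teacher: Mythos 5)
Your strategy mirrors the paper's (discretize the continuous product distribution, collapse the low‑probability bottom of each marginal so that \Cref{thm: optimalBundling} applies, transfer the bound back), and you have correctly located the decisive step, but you have not closed it. That step is showing that collapsing the bottom $\epsilon/m$‑quantile of each marginal costs only $O(\epsilon)$ in optimal revenue \emph{uniformly in $c$}. You rightly observe that the naive estimate — probability $\le\epsilon$ of hitting a collapsed coordinate, times revenue up to $\Theta(mc)$ per type — gives $\Theta(\epsilon mc)$, which blows up with $c$ and is useless, and you then only gesture at possible fixes (a "transport‑map/virtual‑value" argument, a "direct monotonicity" argument) without producing one. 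The paper closes precisely this gap by invoking two black‑box robustness results: Corollary~2 of~\cite{cai2021eBIC} (a Wasserstein‑distance bound giving an $O(\sqrt{m\epsilon'})$ loss for the fine rounding, which vanishes as $\epsilon'\to0$) and Lemma~2 of~\cite{makur2023robustness} (a total‑variation bound applied to the tail collapse, using that $\mathrm{supp}(\hat{\mathcal{D}}')\subseteq\mathrm{supp}(\hat{\mathcal{D}})$ so the optimal mechanism for $\hat{\mathcal{D}}$ remains BIC‑IR under $\hat{\mathcal{D}}'$). Neither appears in your sketch, so the argument as written is incomplete.

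A secondary concern: the $m\eta$ coupling bound you assert for the fine rounding ("running the optimal mechanism of one instance on the rounded reports of the other loses at most $m\eta$") does not follow merely because each coordinate moves by at most $\eta$. The quantity that is $\eta$‑Lipschitz in the type via the standard envelope argument is the agent's \emph{utility}, not the \emph{payment}: a BIC single‑agent mechanism's per‑type payment can jump by $\Theta(c)$ across an allocation threshold (e.g.\ a posted price near $c$), so the revenue difference is not controlled coordinate‑wise and a more careful argument — or a black‑box result such as the one the paper cites — is genuinely needed. To your credit, the non‑uniform grid you propose, mapping each coordinate's bottom quantile directly to $c$ so that $\sum_j\hat v_{j,1}=0$, is a clean reorganization that avoids a bookkeeping subtlety, but cleaner bookkeeping does not substitute for the missing revenue‑stability estimate, which is the heart of the proof.
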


\begin{proof}
    First, we need to define two distances between distributions, the Total Variation distance and the Wasserstein Distance.
    \begin{definition}[Total Variation Distance]
The \emph{total variation (TV) distance} between any two probability distributions $P$ and $Q$ on a sample space $\Omega$ is defined as
\[ 
d_{TV}(P,Q)\triangleq \sup_{E \subseteq \Omega} | P(E) - Q(E) | \, ,\]
where the supremum is over all Borel measurable subsets $E \subseteq \Omega$, and $P(E)$ (resp. $Q(E)$) denotes the probability of the event $E$ with respect to the distribution $P$ (resp. $Q$).
\end{definition}
We will define the Wasserstein Distance similar to the definition of \cite{cai2021eBIC}, casting it on our setting.

\begin{definition}[Wasserstein Distance]
The \emph{Wasserstein Distance} between any two probability distributions $P$ and $Q$ is defined as the smallest expected $\ell_1$ distance over all couplings. Formally,
\[
d_w(P, Q) = \min_{\gamma \in \Pi(P,Q)} \int  \norm{v-v'}_1 \, d \gamma(v, v').
\]
where $\Pi(P,Q)$ is the set of all couplings consistent with $P,Q$.
\end{definition}

Now consider the product distribution $\Dcal = \bigtimes_{j\in [m]}\Dcal_j$ from which the agent's values for the items are drawn. First we will create a new distribution $\hat{\Dcal}$ that is the discretized version of $\Dcal$. Partition $[c,c+1]$ into $1/\epsilon'$ non-overlapping equal parts. Let $I_z = [c+\epsilon'z, c+\epsilon'(z+1)]$ for $z \in [1/\epsilon'-1]$. For each $\Dcal_j$ let $\hat{\Dcal}_j$ be the discrete distribution, supported on $[c,c+\epsilon', c+2\epsilon'+,\dots,c+1-\epsilon']$ such that:
\[ \Pr_{X \sim \hat{\Dcal}_j}[X=c+\epsilon'z]= \Pr_{X\sim \Dcal_j}[X \in I_z] \]
and let $\hat{\Dcal} = \bigtimes_{j \in [m]}\hat{\Dcal}_j$. Now consider the coupling where we first sample $X_j \sim \Dcal_j$ and then set $\hat{X}_j = c+ \epsilon'z$ where $z \in [1/\epsilon'-1]$ is such that $X_j \in I_z$, for all $j \in [m]$. This is a valid coupling between $\Dcal$ and $\hat{\Dcal}$. Also $\norm{[X_1, X_2, \dots, X_m]- [\hat{X}_1, \hat{X}_2, \dots, \hat{X}_m]}_1 \le \epsilon'm$ by construction. Thus:
\[d_w(D, \hat{D}) = \min_{\gamma \in \Pi(D, \hat{D})} \int  \norm{v-v'}_1 \, d \gamma(v, v') \le \int  \epsilon'm \, d \gamma(v, v') = \epsilon'm\]
Now we can use the following result:
\begin{corollary}[Corollary 2 \cite{cai2021eBIC}]
If $d_w(\Dcal_i, \Dcal_i') \le \kappa$ for all $i \in [n]$, let $\mathrm{OPT}(\Dcal)$ and $\mathrm{OPT}(\Dcal')$ be the optimal revenue achievable by any BIC and IR mechanism with respect to $\Dcal$ and $\Dcal'$ respectively. Then
\[
|\mathrm{OPT}(\Dcal) - \mathrm{OPT}(\Dcal')| \le O(n \cdot \sqrt{\kappa}).
\]
\end{corollary}
Thus $\mathrm{OPT}(\Dcal) \le \mathrm{OPT}(\hat{\Dcal}) + O(\sqrt{\epsilon'm})$. However, we are not ready to use \Cref{thm: optimalBundling} just yet since we do not have any lower bound on $\Pr_{X \sim \hat{D}_j}[X=c]$. For each $j \in [m]$ we will create a new discrete distribution $\hat{D}_j'$ as follows. Let $z^* = \min_{z \in [1/\epsilon'-1]}\{\Pr_{X \sim \hat{D}_j}[X\le c+z\epsilon']\ge \delta\}$. $\hat{D}_j'$ will be supported on $[c+z^*\epsilon',  c+1-\epsilon']$, and $\Pr_{X \sim \hat{D}'_j}[X = c+z\epsilon'] =  \Pr_{X \sim \hat{D}_j}[X = c+z\epsilon']$ for all $z \in [ z^*+1, 1-\epsilon']$ and $\Pr_{X \sim \hat{D}'_j}[X = c+z^*\epsilon'] = \sum_{z \in [z^*]} \Pr_{X \sim \hat{D}_j}[X = c+z\epsilon']$. In other words, we construct $\hat{D}'_j$ by dropping the leftmost part of $\hat{D}_j$ and appending all the probability mass onto the smallest value in the new support, such that the smallest value has probability mass at least $\delta$. By construction $d_{TV}(\hat{D}'_j,\hat{D}_j) \le \delta$ and by the tensorization of TV distance $d_{TV}(\hat{D}',\hat{D}) \le m\delta$, where $\hat{D}' = \bigtimes_{j \in [m]} \hat{D}'_j$. 

Now consider the optimal incentive compatible and individual rational mechanism $\Mcal$ under $\hat{D}$. $\Mcal$ is incentive compatible and individually rational for $\hat{D}'$ as well since $supp(\hat{D}') \subseteq supp(\hat{D})$. Now we can use the following result from \cite{makur2023robustness}:
\begin{lemma}[Lemma 2 \cite{makur2023robustness}]
Let $P$ and $Q$ be two arbitrary probability distributions supported on $\Tcal$ and let $\Mcal$ be any mechanism. Assuming truthful bidding, for all objective functions $\Ocal(.,.) \in [a,b]$, letting $V = b-a$, it holds that $\mathbb{E}_{t \sim P}[\Ocal(t,\M(t))] - \mathbb{E}_{t' \sim Q}[\Ocal(t',\M(t'))] \le V\, d_{\mathsf{TV}}(P,Q)$.
\end{lemma}
Translated to our setting we get that $OPT(\hat{\Dcal}) \le \mathbb{E}_{v \sim \hat{\Dcal}'}[Rev(\Mcal(v))]+ d_{TV}(\hat{D}'_j,\hat{D}_j) \le OPT(\hat{\Dcal}')+ m\delta$, where $Rev(\Mcal(v))$ is the revenue extracted from mechanism $\Mcal$ when valuation $v$ is reported, and $OPT(\hat{\Dcal})$ and $OPT(\hat{\Dcal}')$ are the expected revenue extracted by the optimal mechanism with respect to $\hat{\Dcal}$ and $\hat{\Dcal}'$. 

Combining the above we get that $OPT(\Dcal) \le OPT(\hat{\Dcal}')+ O(\sqrt{m\epsilon'})+m\delta$. We also have that from \Cref{thm: optimalBundling} that if $c>\frac{1}{\delta^{m}}$, $OPT(\hat{\Dcal}') = cm$. Thus for $c>\frac{1}{\delta^{m}}$, $OPT(\Dcal) \le cm +  O(\sqrt{m\epsilon'})+m\delta$. By taking $\epsilon' \rightarrow 0$ and setting $\delta = \epsilon/m$ we get that if $c> \left(\frac{m}{\epsilon} \right)^m$ then $OPT(\Dcal) \le cm+ \epsilon$ where $cm$ is equal to the revenue of selling the grand bundle at a price of $cm$ under $\Dcal$.
\end{proof}

\section*{Acknowledgments}

Marios Mertzanidis and Athina Terzoglou are supported in part by an NSF CAREER award CCF-2144208, and a research award from the Herbert Simon Family Foundation. The authors would like to thank Alexandros Psomas for the valuable discussions and suggestions in the early stages of this project.

\bibliographystyle{alpha}
\bibliography{refs}

\clearpage
\appendix
\section{Missing proofs from \Cref{sec: firstAxis}} \label{apx: proofs-Axis}

\begin{proof}[Proof of \Cref{lemma: technical sum binom}]
\begin{align*}
\sum_{j=1}^n \frac{1}{j} \binom{n-1}{j-1} q^{j-1} p^{n-j}&= \sum_{j=1}^n \frac{1}{j} \frac{(n-1)!}{(j-1)! (n-j)!} q^{j-1} p^{n-j} \\
&= \sum_{j=1}^n \frac{1}{n} \frac{n!}{j! (n-j)!} q^{j-1} p^{n-j} \\
&= \sum_{j=1}^n \frac{1}{n} \binom{n}{j} q^{j-1} p^{n-j} \\
&= \frac{1}{nq} \sum_{j=0}^n  \binom{n}{j} q^j p^{n-j} - \frac{p^n}{nq} \\
&= \frac{1}{nq} \left(p + q\right)^{n} - \frac{p^n}{nq} \\
&= \frac{(p+q)^n- p^n}{nq}.
\end{align*}
\end{proof}

\begin{proof}[Proof of \Cref{lemma: Binomial Inequality}]
    For $k=m$, this is trivially true. By opening $\pr{B(m, p) = k | B(m, p) \ge k} = \frac{\binom{m}{k}(1-p)^{m-k}p^k}{\sum_{i=k}^m \binom{m}{i}(1-p)^{m-i}p^i}$. Proving our statement is equivalent to showing that $\left(\frac{p(m-k)}{(1-p)}-k\right)\sum_{i=k}^m \binom{m}{i}(1-p)^{m-i}p^i + k\binom{m}{k}(1-p)^{m-k}p^k = \frac{p(m-k)}{(1-p)}\sum_{i=k}^m \binom{m}{i}(1-p)^{m-i}p^i - k\sum_{i=k+1}^m \binom{m}{i}(1-p)^{m-i}p^i  \ge 0$. So let 
    \[g(k) = \frac{p(m-k)}{(1-p)}\sum_{i=k}^m {m \choose i}(1-p)^{m-i}p^i - k\sum_{i=k+1}^m {m \choose i}(1-p)^{m-i}p^i \]
    Thus we simply need to show that $g(k)\ge 0$ for all $k \in [m-1]$. We will show that $g(k) \ge g(k+1)$ for all $k \in [m-2]$.
    \begin{align*}
        &g(k)-g(k+1) = \\
        &\frac{p(m-k)}{(1-p)}\sum_{i=k}^m {m \choose i}(1-p)^{m-i}p^i - k\sum_{i=k+1}^m {m \choose i}(1-p)^{m-i}p^i \\
        &- \left( \frac{p(m-(k+1))}{(1-p)}\sum_{i=k+1}^m {m \choose i}(1-p)^{m-i}p^i - (k+1)\sum_{i=k+2}^m {m \choose i}(1-p)^{m-i}p^i \right)\\
        &= \frac{p(m-k)}{(1-p)}\sum_{i=k+1}^m {m \choose i}(1-p)^{m-i}p^i + \frac{p(m-k)}{(1-p)}{m \choose k}(1-p)^{m-k}p^k\\
        &- \frac{p(m-k)}{(1-p)}\sum_{i=k+1}^m {m \choose i}(1-p)^{m-i}p^i + \frac{p}{1-p}\sum_{i=k+1}^m {m \choose i}(1-p)^{m-i}p^i \\
        &+ k \sum_{i=k+2}^m {m \choose i}(1-p)^{m-i}p^i + \sum_{i=k+2}^m {m \choose i}(1-p)^{m-i}p^i \\
        &-k\sum_{i=k+2}^m {m \choose i}(1-p)^{m-i}p^i - k {m \choose k+1}(1-p)^{m-(k+1)}p^{k+1}\\
        &= \frac{p(m-k)}{(1-p)}{m \choose k}(1-p)^{m-k}p^k -k {m \choose k+1}(1-p)^{m-(k+1)}p^{k+1}  \\
        &+ \frac{p}{1-p}\sum_{i=k+1}^m {m \choose i}(1-p)^{m-i}p^i+ \sum_{i=k+2}^m {m \choose i}(1-p)^{m-i}p^i\\
        &\ge \frac{p(m-k)}{(1-p)}{m \choose k}(1-p)^{m-k}p^k -k {m \choose k+1}(1-p)^{m-(k+1)}p^{k+1}\\
        &= (1-p)^{m-(k+1)}p^{k+1} \left( (m-k){m \choose k} - k {m \choose k+1} \right)\\
        &= (1-p)^{m-(k+1)}p^{k+1} \left((m-k) \frac{m!}{(m-k)!k!} - k \frac{m!}{(m-(k+1))!(k+1)!} \right)\\
        &= (1-p)^{m-(k+1)}p^{k+1} \frac{m!}{(m-(k+1))!k!} \left( 1 - \frac{k}{k+1}\right)\ge 0
    \end{align*}
Thus we simply need to prove that $g(m-1)\ge0$:
\begin{align*}
    g(m-1) &= \frac{p}{(1-p)}\sum_{i=m-1}^m {m \choose i}(1-p)^{m-i}p^i - (m-1)\sum_{i=m}^m {m \choose i}(1-p)^{m-i}p^i\\
    &= \frac{p}{(1-p)} \left({m \choose m-1}(1-p)p^{m-1} + {m \choose m}p^m \right) -(m-1){m \choose m}p^m \\
    &= mp^m + \frac{p}{1-p}p^m - m p^m + p^m\\
    &= \frac{p^{m}}{1-p} \ge 0
\end{align*}
\end{proof}

\begin{proof}[Proof of~\Cref{lemma: tech-f(k)}]
\begin{align*}
    f(k)&= a-\frac{1}{(1-p)^k p^{m-k} } \cdot\frac{1}{(m-k)\binom{m}{k}}\cdot \sum_{z=k+1}^m\binom{m}{z}(1-p)^zp^{m-z} \cdot (b-a) \\
    &= a-\frac{1}{(1-p)^k p^{m-k} } \cdot\frac{1}{(m-k)\binom{m}{k}}\cdot (b-a)\cdot \left(\sum_{z=k}^m\binom{m}{z}(1-p)^zp^{m-z} -\binom{m}{k}(1-p)^kp^{m-k}\right)\\
    &= a-\frac{1}{(1-p)^k p^{m-k} } \cdot\frac{1}{(m-k)\binom{m}{k}}\cdot (b-a)\sum_{z=k}^m\binom{m}{z}(1-p)^zp^{m-z} \left(1 -\frac{\binom{m}{k}(1-p)^kp^{m-k}}{\sum_{z=k}^m\binom{m}{z}(1-p)^zp^{m-z}}\right)\\
    &\ge a-\frac{1}{(1-p)^k p^{m-k} } \cdot\frac{1}{(m-k)\binom{m}{k}}\cdot (b-a)\sum_{z=k}^m\binom{m}{z}(1-p)^zp^{m-z} \left(1 -\left(1-\frac{(1-p)(m-k)}{pk}\right)\right) \tag{\Cref{lemma: Binomial Inequality}}\\
    &= a-\frac{1-p}{(1-p)^k p^{m-k}p } \cdot\frac{m-k}{(m-k)k\binom{m}{k}}\cdot \sum_{z=k}^m\binom{m}{z}(1-p)^zp^{m-z} \cdot (b-a)\\
    &=  a-\frac{1}{(1-p)^{k-1} p^{m-k+1} } \cdot\frac{(k-1)!(m-k)!}{m!}\cdot \sum_{z=k}^m\binom{m}{z}(1-p)^zp^{m-z} \cdot (b-a)\\
    &=  a-\frac{1}{(1-p)^{k-1} p^{m-k+1} } \cdot\frac{(k-1)!(m-k+1)!}{(m-k+1) \cdot m!}\cdot \sum_{z=k}^m\binom{m}{z}(1-p)^zp^{m-z} \cdot (b-a)\\
    &=a-\frac{1-p}{(1-p)^k p^{m-k}p } \cdot\frac{1}{(m-k+1)\binom{m}{k-1}}\cdot \sum_{z=k}^m\binom{m}{z}(1-p)^zp^{m-z} \cdot (b-a) = f(k-1)
\end{align*}
\end{proof}

\begin{proof}[Proof of~\Cref{thm: revenue-first-axis}]

 The fact that the mechanism is optimal follows directly from \Cref{thm: main Thm}, in conjunction with \Cref{thm: mech1 BIC} and \Cref{lemma: first axis mechanism induced by flow}. Since the mechanism is optimal, it follows that the flow defined in \Cref{def: firstAxis flow} fully characterizes an optimal solution to the dual program. Consequently, we can compute the expected revenue extracted by the optimal mechanism via the objective value attained by this dual solution. 
 
 For compactness of notation, define
\[
f(k) = a - \frac{1}{(1-p)^k p^{m-k}} \cdot \frac{1}{(m-k)\binom{m}{k}} \sum_{z=k+1}^m \binom{m}{z} (1-p)^z p^{m-z} \cdot (b-a).
\]
Using this, we have:

 \begin{align*}
    \mathbb{E}[\mathrm{Rev}] &= \sum_{v \in \mathcal{V}} \Pr[v] \sum_{j \in [m]}  \left[\max_{i \in [n]}\left\{
v_{i,j} - \frac{1}{\Pr[v_i]} \sum_{v'_i \in \mathcal{V}_i} \lambda_i(v'_i, v_i)(v'_{i,j} - v_{i,j}) \right\}\right]^+\\
&= \sum_{j \in [m]} \left(\Pr\left[b= \max_{i \in [n]}\left\{
v_{i,j} - \frac{1}{\Pr[v_i]} \sum_{v'_i \in \mathcal{V}_i} \lambda_i(v'_i, v_i)(v'_{i,j} - v_{i,j}) \right\}\right]\cdot b\right. \\
&\phantom{=} \left. + \sum_{k=k^*}^{m-1} \Pr\left[f(k)= \max_{i \in [n]}\left\{
v_{i,j} - \frac{1}{\Pr[v_i]} \sum_{v'_i \in \mathcal{V}_i} \lambda_i(v'_i, v_i)(v'_{i,j} - v_{i,j}) \right\}\right] \cdot f(k)\right)
\end{align*}
But we know that for an item $j \in [m]$ if there exists an agent $i \in [n]$ with $v_{i,j} = b$, then, 
\[\max_{i \in [n]}\left\{
v_{i,j} - \frac{1}{\Pr[v_i]} \sum_{v'_i \in \mathcal{V}_i} \lambda_i(v'_i, v_i)(v'_{i,j} - v_{i,j}) \right\} = b.\] 
Thus, 
\[\Pr\left[b= \max_{i \in [n]}\left\{
v_{i,j} - \frac{1}{\Pr[v_i]} \sum_{v'_i \in \mathcal{V}_i} \lambda_i(v'_i, v_i)(v'_{i,j} - v_{i,j}) \right\}\right]= 1 - \Pr[\forall i \in [n],\, v_{i,j}=a ]=1-p^n\]
If for all $i \in [n]$, $v_{i,j}=a$ then:
\[\arg\max_{i \in [n]}\left\{
v_{i,j} - \frac{1}{\Pr[v_i]} \sum_{v'_i \in \mathcal{V}_i} \lambda_i(v'_i, v_i)(v'_{i,j} - v_{i,j}) \right\} = \arg\max_{i \in [n]}\left\{ k_{v_i} \right\}\]
the player with the most $b$'s in his valuation will have the highest virtual value. Thus:
\begin{align*}
    &\Pr\left[f(k)= \max_{i \in [n]}\left\{
v_{i,j} - \frac{1}{\Pr[v_i]} \sum_{v'_i \in \mathcal{V}_i} \lambda_i(v'_i, v_i)(v'_{i,j} - v_{i,j}) \right\}\right]\\
&= \Pr\left[\forall i\in [n], v_{i,j}=a \text{ and }\exists i\in [n], k_{v_i}=k \text{ and }  \forall i\in [n], k_{v_i} \le k\right]\\
&= \Pr[\forall i\in [n], v_{i,j}=a] \cdot Pr\left[\exists i\in [n], k_{v_i}=k \text{ and }  \forall i\in [n], k_{v_i} \le k \; | \; \forall i\in [n], v_{i,j}=a\right] \\
&= p^n \sum_{z=1}^{n}  Pr\left[ |i\in [n]: k_{v_i}=k|= z \text{ and }  |i\in [n]: k_{v_i}<k|= n-z \; | \; \forall i\in [n], v_{i,j}=a\right] \\
&= p^n \sum_{z=1}^{n} \binom{n}{z} 
    \left( \Pr[B(m-1,1-p) = k] \right)^z 
    \left( \Pr[B(m-1,1-p) < k] \right)^{n-z} \\
&= p^n \left(\sum_{z=0}^{n} \binom{n}{z} 
    \left( \Pr[B(m-1,1-p) = k] \right)^z 
    \left( \Pr[B(m-1,1-p) < k] \right)^{n-z}  - \left( \Pr[B(m-1,1-p) < k] \right)^{n}\right) \\
&= p^n \bigg( 
    \left( \Pr[B(m-1,1-p) = k] + \Pr[B(m-1,1-p) < k]\right)^n   - \left( \Pr[B(m-1,1-p) < k] \right)^{n}\bigg) \\
&= p^n \bigg( 
    \left( \Pr[B(m-1,1-p) \le k] \right)^n   - \left( \Pr[B(m-1,1-p) < k] \right)^{n}\bigg)
\end{align*}
Combining all of the above we get the desired result.
\end{proof}

\section{Missing proofs from \Cref{sec: secondAxis}}\label{apx: proofs-sec-axis}

\begin{proof}[Proof of~\Cref{lemma: non-iid-agents-bic-conditions}]

We must show for each agent $i$, all types $v_i\in\{a,b\}^2$ and all misreports $v_i'\in\{a,b\}^2$,
\[
\Ex{\divutil{v_i}{v_i'}} \;\le\; \Ex{\divutil{v_i}{v_i}}.
\]

By definition, we see that the expected value of truth-telling and deviating can be found by
\[
\Ex{\divutil{v_i}{v_i}}
= \sum_{j\in\{1,2\}} v_{ij}\,\pi_{ij}(v_i) \;-\; p_i(v_i),
\quad
\Ex{\divutil{v_i}{v_i'}}
= \sum_{j\in\{1,2\}} v_{ij}\,\pi_{ij}(v_i') \;-\; p_i(v_i').
\]

Recall the payment for each reported profile,
\begin{align*}
    p_i(b,b) &= 2b\pi_i(b) - (b-a)(\pi_i(a,b) +  \pi_i(a,a))\\
    p_i(b,a) &= p_i(a,b) = b\pi_i(b) + a \pi_i(a,b) - (b-a)\pi_i(a,a)\\
    p_i(a,a) &= 2a \cdot \pi_i(a,a)
\end{align*}

We need to check for all four true types and all their possible misreports. By symmetry, the profiles $(b,a)$ and $(a,b)$ are equivalent; therefore, we are going to show the inequalities only for $(a,b)$.

\medskip\noindent
\textbf{Case 1: }  Agent's $i$ true type $v_i = (b,b)$. Her truthful expected utility is:
\begin{align*}
\Ex{\divutil{(b,b)}{(b,b)}}& =  2b\,\pi_i(b)- (2b\pi_i(b) - (b-a)(\pi_i(a,b) +  \pi_i(a,a)))=(b-a)\bigl(\pi_i(a,b)+\pi_i(a,a)\bigr).
\end{align*}

The expected utility for deviating to either $(a,b)$ or $(a,a)$ is 
\begin{align}
\Ex{\divutil{(b,b)}{(a,b)}}
&=b(\pi_i(b)+\pi_i(a,b))-p_i(a,b)\notag\\
&=b(\pi_i(b)+\pi_i(a,b)) - (b\pi_i(b) + a \pi_i(a,b) - (b-a)\pi_i(a,a))  \notag\\
&=(b-a)(\pi_i(a,b)+\pi_i(a,a)),  \label{eq: bb-ab}\\
\Ex{\divutil{(b,b)}{(a,a)}}
&=2b\,\pi_i(a,a)-p_i(a,a) \notag\\
&= 2b\,\pi_i(a,a) - 2a\,\pi_i(a,a)\notag\\
&= 2(b-a)\pi_i(a,a). \label{eq: bb-aa}
\end{align}

It is easy to see that the deviation $(b,b)\to(b,a)$, \Cref{eq: bb-ab}, has the same utility as truth-telling. Hence, for the deviation $(b,b)\to(a,a)$ to be not profitable, from \Cref{eq: bb-aa} it must be that $\pi_i(a,a) \le \pi_i(a,b).$

\medskip\noindent
\textbf{Case 2:}   Agent's $i$ true type $v_i = (a,b)$. Her truthful expected utility is:
\begin{align*}
\Ex{\divutil{(a,b)}{(a,b)}} & = a\,\pi_i(a,b) + b\pi_i(b) - (b\pi_i(b) + a \pi_i(a,b) - (b-a)\pi_i(a,a))=(b-a)\,\pi_i(a,a).
\end{align*}

In this case, we need to check all possible deviations to $(b,b),(b,a)$, and $(a,a)$. The expected utilities are: 
\begin{align}
\Ex{\divutil{(a,b)}{(b,b)}} &=(a+b)\,\pi_i(b)-p_i(b,b)\notag\\
&=(a+b)\,\pi_i(b)- \left(2b\,\pi_i(b) - (b-a)\bigl(\pi_i(a,b) + \pi_i(a,a)\bigr)\right)\notag\\
& = (b-a)\left(\pi_i(a,b) + \pi_i(a,a) - \pi_i(b)\right)\label{eq: ab-bb}\\
\Ex{\divutil{(a,b)}{(b,a)}} &=a\,\pi_i(b)+b\,\pi_i(a,b)-p_i(b,a)\notag\\
&=a\,\pi_i(b) + b\,\pi_i(a,b) - \left(b\,\pi_i(b) + a\,\pi_i(a,b) - (b-a)\,\pi_i(a,a)\right)\notag\\
&=(b-a)\left(\pi_i(a,b) + \pi_i(a,a) - \pi_i(b)\right)\label{eq: ab-ba}\\
\Ex{\divutil{(a,b)}{(a,a)}} &=(a + b)\,\pi_i(a,a) - p_i(a,a)\notag\\
&= (a + b)\,\pi_i(a,a) - 2a\,\pi_i(a,a) \notag\\
&= (b - a)\,\pi_i(a,a) \label{eq: ab-aa}
\end{align}

All three expected utilities must be at most $(b-a)\pi_i(a,a)$. Notice again that the deviation $(a,b)\to(a,a)$, \Cref{eq: ab-aa}, produces the same expected utility as truthtelling. From \Cref{eq: ab-bb},\Cref{eq: ab-ba} for incentive compatibility we must have $\pi_i(a,b)\le\pi_i(b)$.

\medskip\noindent
\textbf{Case 3:}   Agent's $i$ true type $v_i = (a,a)$. Her truthful expected utility is:
\begin{align*}
 \Ex{\divutil{(a,a)}{(a,a)}} &= 2a\,\pi_i(a,a)-2a\,\pi_i(a,a)=0 
\end{align*}

The expected utility for deviating to either $(b,b)$ or $(a,b)$ is 
\begin{align}
\Ex{\divutil{(a,a)}{(b,b)}}
&=a(\pi_i(b)+\pi_i(a,b))-p_i(a,b)\notag\\
&= 2a\,\pi_i(b) - \left(2b\,\pi_i(b) - (b-a)(\pi_i(a,b) + \pi_i(a,a))\right) \notag\\
&= (b - a)\left(\pi_i(a,b) + \pi_i(a,a) - 2\,\pi_i(b)\right),  \label{eq: aa-bb}\\
\Ex{\divutil{(a,a)}{(a,b)}}
&=a(\pi_i(b)-p_i(a,b) \notag\\
&= a\,(\pi_i(a,b) + \pi_i(b)) - \left(b\,\pi_i(b) + a\,\pi_i(a,b) - (b-a)\,\pi_i(a,a)\right)\notag\\
&=  (b - a)\left(\pi_i(a,a) - \pi_i(b)\right). \label{eq: aa-ab}
\end{align}

In this case, for the mechanism to be incentive compatible, the expected utility of the deviation must be negative. That is, form \Cref{eq: aa-bb} we get $\pi_i(a,b) + \pi_i(a,a) \le 2\,\pi_i(b)$ and from  \Cref{eq: aa-ab} we get $\pi_i(a,a) \le \pi_i(b)$.

Notice that combining the conditions from Case 1 and Case 2 we need $\pi_i(a,a)\;\le\;\pi_i(a,b)\;\le\;\pi_i(b)$. Thus, implies $\pi_i(a,b) + \pi_i(a,a) \le 2\,\pi_i(b)$. This concluded the proof.
\end{proof}

 \begin{proof}[Proof of \Cref{lemma: non-iid-items-payments} ]

We want to show that for all $v\in\Vcal$, and the
flow decomposition $\xi$, the payment rule is given by 
\[
p(v) = \frac{1}{\Pr[v]}\sum_{\ell \in \Pcal_{v}} \xi_{\ell} \left(\sum_{j \in \{1,2\}} v_{j} \pi_{j}(v)\quad - \sum_{z \in [1,|\ell|-3]}  \sum_{j \in \{1,2\}} \left( v_{j}^{z}-v_{j}^{z+1} \right)\pi_{j}(v^{z+1})\right)
\]

Recall the flow shown in~\Cref{fig: flow-non-iid-items-split}. Notice that all nodes, except $(b,b)$, have a unique path to $\bot$, since they have one outgoing edge. Hence, the probability $\Pr[v]$ is equal to the corresponding path flow of the decomposition, simplifying the payment rule to  $p(v) = \sum_{j \in \{1,2\}} v_{j} \pi_{j}(v) - \sum_{z \in [1,|\ell|-3]}  \sum_{j \in \{1,2\}} \left( v_{j}^{z}-v_{j}^{z+1} \right)\pi_{j}(v^{z+1}).$

We can quickly verify that the payment is given by 
\begin{align*}
    p(a,a) &= a \cdot \left( \paa + \qaa \right)\\
    p(a,b) &= a \pab + b \qbb - (b-a) \qaa\\
    p(b,a) &= a \qab + b \pbb - (b-a) \paa
\end{align*}



    

Finally, we are interested in the payment of node $(b,b)$. Recall the flow shown in~\Cref{fig: flow-non-iid-items-split} parametrized by $x$. There are two simple paths from $(b,b)$ to $\bot$. That is $\{(b,b), (b,a), (a,a)\}$ with flow $(1 - p)(1 - q) - x$, and $\{(b,b), (a,b), (a,a)\}$ with flow $x$. Therefore, the payment is:

\begin{align*}
p(b,b) &= \frac{1}{\Pr[(b,b)]} \sum_{\ell \in P_{(b,b)}} \xi_\ell\left( \sum_{j \in \{1, 2\}} b \pi_j(b,b) - \sum_{z \in [2]} \sum_{j \in \{1,2\}} (v_j^z - v_j^{z+1}) \pi_j(v^{z+1}) \right) \\
&= \frac{1}{(1 - p)(1 - q)} \left( (1 - p)(1 - q) - x \right) \left( b(\pbb + \qbb) - (b - a)(\qab + \paa) \right) \\
&\quad\quad + \frac{1}{(1 - p)(1 - q)} \cdot x \left( b(\pbb + \qbb) - (b - a)(\pab + \qaa) \right) \\
&= b(\pbb + \qbb) - (b - a)(\qab + \paa) \\
&\quad\quad + \frac{x}{(1 - p)(1 - q)} (b - a) \left( \qab - \pab + \paa - \qaa \right)
\end{align*}
     
 \end{proof}

\section{Missing proofs of~\Cref{sec: thirdAxis}}\label{apx: proofs-thr-axis}

\begin{proof}[Proof of \Cref{lemma: non-iid-items-monotone}]
A straightforward calculation shows
\[
H_1(a,b)-H_1(a,a)
=\Bigl[\tfrac{1-p-x}{p\,q}-\tfrac{x}{p\,(1-q)}\Bigr](b-a)
=\frac{(1-p)(1-q)-x}{p\,q\,(1-q)}\,(b-a)\;\ge0,
\]
since $x\le(1-p)(1-q)$ and $b>a$.  Likewise,
\[
H_2(b,a)-H_2(a,a)
=\Bigl[\tfrac{p(1-q)+x}{p\,q}-\tfrac{(1-p)(1-q)-x}{(1-p)\,q}\Bigr](b-a)
=\frac{x}{p\,q\,(1-p)}\,(b-a)\;\ge0,
\]
since $x\ge0$ and $b>a$.
\end{proof}

\begin{proof}[Proof of~\Cref{lemma: non-iid-items-probabilities-monotonicity}]
Recall that 
\[
  \pbb=\frac{1-p^n}{n(1-p)},\quad
  \qbb=\frac{1-q^n}{n(1-q)},
  \]
  \[\pab=\frac{1}{n}p^{n-1}\frac{1-q^n}{1-q},
  \quad
  \qab=\frac{1}{n}q^{n-1}\frac{1-p^n}{1-p}.
\]
\[\paa = \qaa = \frac{1}{n}(pq)^{n-1}\]
First, we observe that $\frac{1-p^n}{1-p}=\sum_{\ell=0}^{n-1}p^\ell$. We start by computing the difference of $\pbb,\qbb$
\[
\pbb-\qbb
=\frac{1}{n}\sum_{\ell=0}^{n-1}\bigl(p^\ell - q^\ell\bigr).
\]
Since $p\ge q$ gives $p^\ell\ge q^\ell$ for all $\ell$, so each term in the sum is nonnegative.  Therefore
$\pbb\ge\qbb$.

Next, we compute the difference of $\pab,\qab$:
\[
\pab-\qab=p^{\,n-1}\frac{1}{n}\sum_{\ell=0}^{n-1}q^\ell - q^{\,n-1}\frac{1}{n}\sum_{\ell=0}^{n-1}p^\ell
=\frac{1}{n}\sum_{\ell=0}^{n-1}\bigl(p^{\,n-1}q^\ell - q^{\,n-1}p^\ell\bigr)
=\frac{1}{n}\sum_{\ell=0}^{n-1}(pq)^\ell\bigl(p^{\,n-1-\ell}-q^{\,n-1-\ell}\bigr).
\]
Since $p\ge q$ implies $p^{\,n-1-\ell}\ge q^{\,n-1-\ell}$ for each $0\le \ell\le n-1$, every term in the sum is nonnegative.  Hence $\pab-\qab\ge0.$

We move on to calculating the difference between $\pbb$ and $\pab$
\[\pbb-\pab=\frac{1-p^n}{n(1-p)} -\frac{1}{n}p^{n-1}\frac{1-q^n}{1-q} = \frac{1}{n}\sum_{\ell=0}^{n-1}p^\ell - \frac{1}{n}p^{n-1}\sum_{\ell=0}^{n-1}q^\ell \]
Since $q\le p \le 1$, for all $\ell$ we have $p^\ell\ge q^\ell$ and $p^{n-1}\le 1$ we get $\sum_{i=0}^{n-1}p^i \ge \sum_{i=0}^{n-1}q^i \ge p^{n-1}\sum_{i=0}^{n-1}q^i.$ Thus,  $\pbb-\pab\ge0.$

We move on to calculating the difference between $\qbb$ and $\qab$
\[\qbb-\qab=\frac{1-q^n}{n(1-q)} -\frac{1}{n}q^{n-1}\frac{1-p^n}{1-p} = \frac{1}{n}\sum_{\ell=0}^{n-1}q^\ell - \frac{1}{n}q^{n-1}\sum_{\ell=0}^{n-1}p^\ell = \sum_{\ell=0}^{n-1}q^\ell (1-q^{n-\ell-1}p^\ell) \]
Since $0 \ge q\le p \le 1$, for all $\ell$ we have $q^\ell\ge 0$ and $ 1-q^{n-\ell-1}p^\ell \ge 0$. Thus,  $\qbb-\qab\ge0.$

Finally, we can easily see that 
\[\pab =\frac{1}{n}p^{n-1}\frac{1-q^n}{1-q} = \frac{1}{n}p^{n-1}\sum_{\ell=0}^{n-1}q^\ell \ge \frac{1}{n}p^{n-1}q^{n-1} =\paa \]

Therefore, $\pab \ge \paa.$ By symmetry $\qab\ge \qaa.$ 
\end{proof}

\begin{proof}[Proof of~\Cref{lemma: non-iid-items-BIC}]

In this case, the agents are iid, and therefore, we drop the subscript $i$. To show that the mechanism is BIR, it suffices to show that reporting truthfully has non-negative utility. 

To show that the mechanism is BIC we must show that the expected utility of misreporting is at most the utility of truth-telling. Consider a true profile $v=(v_1,v_2)\in\{a,b\}^2$ and a misreport $v'=(v_1',v_2')\in\{a,b\}^2$. Recall, her utility when reporting truthfully is $\Ex{\divutil{v}{v}}
=\sum_{j\in\{1,2\}} v_j\,\pi_j(v) - p(v),$ and her utility when misreporting is $\Ex{\divutil{v}{v'}}
=\sum_{j\in\{1,2\}} v_j\,\pi_j(v') - p(v')$. BIC demands $\Ex{\divutil{v}{v'}}\le\Ex{\divutil{v}{v}}$.

The payments for each type are given by the following formulas 
\[
\begin{aligned}
p(a,a)&=a\bigl(\paa+\qaa\bigr),\\
p(a,b)&=a\,\pab+b\,\qbb-(b-a)\,\qaa,\\
p(b,a)&=b\,\pbb+a\,\qab-(b-a)\,\paa,
\end{aligned}
\]
while for $p(b,b)$ we have two possible variants:
\[
\begin{aligned}
\text{Variant I:}\quad
p^1(b,b)&=b\bigl(\pbb+\qbb\bigr)
-(b-a)\bigl(\qab+\paa\bigr),\\
\text{Variant II:}\quad
p^2(b,b)&=b\bigl(\pbb+\qbb\bigr)
-(b-a)\bigl(\pab+\qaa\bigr).
\end{aligned}
\]

We check all four true types.  In each case, we display the truthful utility and the three deviations, then read off the required inequalities.

\medskip\noindent
\textbf{Case 1:} $v=(a,a)$.  Truth‐telling:
\[
\Ex{\divutil{(a,a)}{(a,a)}} 
=2a(\paa+\qaa)-p(a,a)
=0.
\]
Deviations:
\begin{align*}
\Ex{\divutil{(a,a)}{(a,b)}}
&=a\,\pab+a\,\qbb-p(a,b)
=(b-a)(\qaa-\qbb),\\
\Ex{\divutil{(a,a)}{(b,a)}}
&=a\,\pbb+a\,\qab-p(b,a)
=(b-a)(\paa-\pbb),\\
\Ex{\divutil{(a,a)}{(b,b)}}_{I}
&=a\,\pbb+a\,\qbb-p^1(b,b)=(b-a)(\paa+ \qab -\pbb-\qbb)\\
\Ex{\divutil{(a,a)}{(b,b)}}_{II}
&=a\,\pbb+a\,\qbb-p^2(b,b)=(b-a)(\pab+ \qaa -\pbb-\qbb)
\end{align*}
For the deviations not to be profitable, the above expected utilities must be negative. Hence, we have
\begin{align}
\qbb\ge\qaa,&\quad\pbb\ge\paa,\label{eq: case1}\\
(I)\;\pbb+\qbb&\ge\paa+\qab,\label{eq: case1-variantI}\\
(II)\;\pbb+\qbb &\ge \pab+\qaa,\label{eq: case1-variantII}
\end{align}

\medskip\noindent
\textbf{Case 2:} $v=(a,b)$.  Truth‐telling:
\[
\Ex{\divutil{(a,b)}{(a,b)}}
=a\,\pab+b\,\qbb-p(a,b)
=(b-a)\,\qaa\ge0\tag{$b>a,\qaa\ge0$}\] 
Deviations:
\begin{align*}
\Ex{\divutil{(a,b)}{(a,a)}}
&=a\,\paa+b\,\qaa-p(a,a)=(b-a)\qaa,\\
\Ex{\divutil{(a,b)}{(b,a)}}
&=a\,\pbb+b\,\qab-p(b,a)
=(b-a)(\qab + \paa - \pbb),\\
\Ex{\divutil{(a,b)}{(b,b)}}_{ I}
&=a\,\pbb+b\,\qbb-p^1(b,b) = (b-a)(\qab + \paa - \pbb)\\
\Ex{\divutil{(a,b)}{(b,b)}}_{ II}
&=a\,\pbb+b\,\qbb-p^2(b,b)=(b-a)(\pab+\qaa - \pbb)
\end{align*}
Ensuring these utilities are $\le(b-a)\,\qaa$ gives us the following constraints
\begin{align}
    \pbb+\qaa&\ge\qab+\paa,\label{eq: case2}\\
    (II)\;\pbb&\ge\pab,\label{eq: case2-variantII}
\end{align}

It is easy to see that the deviation $(a,b)\to(a,a)$ has the same expected utility. Notice that the deviation $(a,b)\to(b,a)$ and $((a,b)\to(b,b))_{ I}$ have the same BIC restrictions, and hence we only include the first that must hold in all cases, not just in variation $I$.

\medskip\noindent
\textbf{Case 3:} $v=(b,a)$.  Truth‐telling:
\[
\Ex{\divutil{(b,a)}{(b,a)}}
=b\,\pbb+a\,\qab-p(b,a)
=(b-a)\,\paa\ge0 \tag{$b>a,\paa\ge0$}\]
Deviations:
\begin{align*}
\Ex{\divutil{(b,a)}{(a,a)}}&=b\paa+a\qaa - p(a,a)=(b-a)\paa,\\
\Ex{\divutil{(b,a)}{(a,b)}} 
&=b\pab +a\qbb - p(a,b) = (b-a)(\pab+\qaa-\qbb),\\
\Ex{\divutil{(b,a)}{(b,b)}}_{ I}
&=b\pbb+a\qbb - p^1(b,b) = (b-a)(\qab+\paa -\qbb) \\
\Ex{\divutil{(b,a)}{(b,b)}}_{ II}
&=b\pbb+a\qbb - p^1(b,b) = (b-a)(\pab+\qaa -\qbb).
\end{align*}
To satisfy the BIC constraints, the utilities must be at most $(b-a)\paa$, which implies
\begin{align}
    \qbb+\paa&\ge\pab+\qaa,\label{eq: case3}\\
    (I)\;\qbb&\ge\qab,\label{eq: case3-variantI}
\end{align}
Notice again that the deviation along an edge with a positive flow ($(b,a)\to(a,a)$) has the same expected utility, hence we do not write the constraint. Furthermore, 
the deviations $(b,a)\to(a,b)$ and $((b,a)\to(b,b))_{ II}$ have the same BIC restrictions, and thus we only include the general one. 

\medskip\noindent
\textbf{Case 4:} $v=(b,b)$.  Truth‐telling has two forms, depending on the payment for reporting $(b,b)$,
\[
\Ex{\divutil{(b,b)}{(b,b)}}_{ I}
=b\,\pbb+b\,\qbb-p^1(b,b)
=(b-a)(\qab+\paa)\ge0,
\tag{$b>a,\qab,\paa\ge0$}\]
\[
\Ex{\divutil{(b,b)}{(b,b)}}_{ II}
=b\,\pbb+b\,\qbb-p^2(b,b)
=(b-a)(\pab+\qaa)\ge0,\tag{$b>a,\pab,\qaa\ge0$}
\]

Deviations:
\begin{align*}
\Ex{\divutil{(b,b)}{(a,a)}}&=b(\paa+\qaa) - p(a,a) = (b-a)(\paa+\qaa),\\
\Ex{\divutil{(b,b)}{(a,b)}} 
&=b(\pab+\qbb)-p(a,b) = (b-a)(\pab+\qaa),\\
\Ex{\divutil{(b,b)}{(b,a)}} 
&=b(\pbb+\qab) - p(b,a) = (b-a)(\qab+\paa).
\end{align*}
These give us two sets of inequalities. When the payment is according to Variant~$I$ and the expected utility when reporting truthfully is $(b-a)(\qab+\paa)$, we get the BIC conditions are
\begin{equation}
   \qab\ge \qaa, \quad  \qab + \paa \ge \pab+\qaa.\label{eq: case4-variantI} 
\end{equation}
When the payment is according to Variant~$II,$ and the truthful expected utility is $(b-a)(\pab+\qaa)$ we get the following BIC conditions
\begin{equation}
\pab\ge \paa, \quad  \pab+\qaa \ge \qab + \paa. \label{eq: case4-variantII} 
\end{equation}

\medskip
Collecting all four cases, we can derive all the necessary equations for each variant. First for Variant~$I$, from \Cref{eq: case1,eq: case3-variantI,eq: case4-variantI}, we get that $\qbb\ge\qab\ge\qaa,$ and from \Cref{eq: case1} we get $\pbb\ge\paa$. \Cref{eq: case1-variantI} is implied by \Cref{eq: case1,eq: case3-variantI}. Furthermore, \Cref{eq: case3} is implied by \Cref{eq: case4-variantI,eq: case3-variantI}. Hence, the necessary conditions for Variant~$I$ are $\pbb\ge\paa,\;\qbb\ge\qab\ge\qaa,\;\pbb+\qaa\ge\qab+\paa,$ and $  \qab + \paa \ge \pab+\qaa.$

Similarly, collecting all the necessary equations for Variant~$II,$ from \Cref{eq: case1,eq: case2-variantII,eq: case4-variantII} we get $\pbb\ge\pab\ge\paa,$ and from \Cref{eq: case1} we get $\qbb\ge\qaa$. \Cref{eq: case2-variantII} is implied by \Cref{eq: case4-variantII,eq: case2-variantII}. Also, \Cref{eq: case1-variantII} is implied by \Cref{eq: case4-variantII,eq: case2-variantII}. Thus, all the conditions for Variant~$II$ are $\qbb\ge\qaa,\pbb\ge\pab\ge\paa,\pab + \qaa \ge \qab +\paa,$ and $\qbb+\paa \ge \pab + \qaa$. Which completes the proof.
\end{proof}

\subsection{Missing Proofs from \Cref{subsec: region1}-\Cref{subsec: region7}}\label{apx: regions}

\begin{proof}[Proof of\Cref{lemma: region1-induce}]

Consider any $x\in(\frac{a}{b-a}p(1-q),(1-p)(1-q)-\frac{a}{b-a}(1-p)q)$. By Table~\ref{tab: virtual-values-non-iid-items} and the flow in~\cref{fig: flow-non-iid-items-split}, one checks  $H_1(a,b) = a - 
\frac{x}{p(1-q)}(b-a) \le 0$ and $H_2(b,a) =  a - \frac{(1-p)(1-q)-x}{(1-p)q}(b-a)\le 0$. By~\Cref{lemma: non-iid-items-monotone}, we then have $H_1(a,b)\le H_1(a,b) \le 0$ and $H_2(a,a)\le H_2(b,a)\le 0$. Hence, whenever $v_j=a$, $H_j(v)<0$ and therefore $\pi_j(v)=0$. On the other hand, following \Cref{eq: pbb} (\Cref{eq: qbb} respectively) we have that if $v_1=b$, then $\pbbr{1} = \frac{1-p^n}{n(1-p)}$ and if $v_2=b$, then $\qbbr{1} = \frac{1-q^n}{n(1-q)}.$

Applying \Cref{lemma: non-iid-items-payments}, we can quickly verify that the payment of~\Cref{mech: Region1} is induced by the flow.
\end{proof}

\begin{proof}[Proof of~\Cref{lemma: region1-BIC}]
 First, we can easily confirm that the mechanism is according to the Variant~$I$ defined in \Cref{lemma: non-iid-items-BIC}. Hence, it suffices to show that the interim allocation probabilities satisfy the set of inequalities~(\ref{eq: BIC-variantI}). Since $\pabr{1},\qabr{1},\paar{1},\qaar{1} = 0$, all the inequalities are trivially true. 
\end{proof}

\begin{proof}[Proof of~\Cref{lemma: region2-induce}]
By construction $x=\frac{a}{b-a}p(1-q)$ is exactly the threshold that makes $H_1(a,b)$ equal zero, and monotonicity (Lemma~\ref{lemma: non-iid-items-monotone}) gives $H_1(a,a)\le0$.  Next, using \Cref{tab:thresholds}, one checks
\[
x\;\ge\;(1-p)(1-q)-\frac{a}{b-a}(1-p)q
\quad\Longrightarrow\quad
H_2(b,a)>0.
\]
\[
x<\tfrac{a}{b-a}\,p\;q-p(1-q)
\quad\Longrightarrow\quad
q<\tfrac{b}{b+a}
\quad\Longrightarrow\quad
H_2(a,a)<0.
\]

where the last part holds since $q<p<(b-a)/b<b/(b+a).$ The interim allocation probabilities of $\pbbr{2},\qbbr{2},\qabr{2}$ are given by \Cref{eq: pbb,eq: qbb,eq: qab}, since the corresponding virtual values are positive. Also, we have that $\paar{2}=\qaar{2}=0$, due to the fact that $H_1(a,a),H_2(a,a)<0.$ Finally, \Cref{eq: BIC-variantI-eq} dictates that $\pabr{2}=\qabr{2}.$  Applying \Cref{lemma: non-iid-items-payments} gives us the matching payments.
 
\end{proof}

\begin{proof}[Proof \Cref{lemma: region2-BIC}]
    In Region 2, the flow $x=\frac{a}{b-a}p(1-q)$ is less than $(1-p)(1-q)$.Therefore, the mechanism falls under Variant~$I$. The BIC constraints according to \Cref{lemma: non-iid-items-BIC}
    \begin{align}    
    \qbb\ge\qab&\ge\qaa \label{eq:reg2-1}\\
    \qab + \paa &\ge \pab +\qaa \label{eq:reg2-2}\\
    \pbb+\qaa &\ge \qab + \paa\label{eq:reg2-3}\\
    \pbb &\ge \paa\label{eq:reg2-4}
    \end{align}

    In this region, $\paar{2} = \qaar{2} = 0$. Hence, \Cref{eq:reg2-4} trivially holds. By~\Cref{lemma: non-iid-items-probabilities-monotonicity},  \Cref{eq:reg2-1} is immediate, and further more we have that $\pbbr{2}\ge \qbbr{2} \ge \qabr{2}$ implying \Cref{eq:reg2-3}. Finally, by construction $\pabr{2} = \qabr{2}$ implying \Cref{eq:reg2-2}.  
\end{proof}

\begin{proof}[Proof of \Cref{lemma: region3-induce}]
    First, we focus on the virtual values. Using \Cref{tab:thresholds} we can confirm that  
    \[
H_1(a,b)>0,\quad
H_2(b,a)>0,\quad
H_1(a,a)=0,\quad
H_2(a,a)<0,
\]

Starting with $H_1(a,b)>0,$ the flow must satisfy $x=1-p-\frac{a}{b-a}pq \le \frac{a}{b-a}p(1-q)$  which holds when $p\ge\frac{b-a}{b},$ matching the boundary of the region. Next, for the virtual value of  $H_2(b,a)$ to be positive we must have $x=1-p-\frac{a}{b-a}pq \ge (1-p)(1-q)-\frac{a}{b-a}(1-p)q$, equivalently $p\le \frac{b}{b+a}.$ Finally, for $H_2(a,a)<0,$ the flow must be $x=1-p-\frac{a}{b-a}pq\ge \frac{a}{b-a}pq - p(1-q),$ which simplifies to $pq\ge\frac{b-a}{b+a}.$

We calculate the interim allocation probabilities using \Cref{eq: pbb,eq: qbb,eq: pab,eq: qab} for the virtual values that are strictly positive. Since $H_2(a,a)<0$, we get $\qaar{3}=0.$ Finally, \Cref{eq: BIC-variantI-eq} dictates that $\paar{3} = \pabr{3}-\qabr{3}.$  Applying \Cref{lemma: non-iid-items-payments} gives us the matching payments.

\end{proof}

\begin{proof}[Proof of \Cref{lemma: region3-BIC}]
In Region 3, the flow $x=1-p-\frac{a}{b-a}pq$ is less than $(1-p)(1-q),$ for $p\ge\frac{b-a}{b}.$  Therefore, the mechanism falls under Variant~$I$. The BIC constraints according to \Cref{lemma: non-iid-items-BIC}
    \begin{align}    
    \qbb\ge\qab&\ge\qaa \label{eq:reg3-1}\\
    \qab + \paa &\ge \pab +\qaa \label{eq:reg3-2}\\
    \pbb+\qaa &\ge \qab + \paa\label{eq:reg3-3}\\
    \pbb &\ge \paa\label{eq:reg3-4}
    \end{align}

In this region, $\qaar{3} = 0$. Note that $\paar{3}\le \paa$ (\Cref{eq: paa}), since we flip a biased coin before allocating uniformly at random.  By~\Cref{lemma: non-iid-items-probabilities-monotonicity}, we immediately have~\Cref{eq:reg3-1,eq:reg3-4}.By construction $\paar{3}=\pabr{3}-\qabr{3}$, satisfying~\Cref{eq:reg3-2} with equality. Plugging in the previous equality to~\Cref{eq:reg3-3} we get $\pbbr{3} \ge \qabr{3} + \paar{3} \implies \pbbr{3} \ge\pabr{3}$, which follows again by~\Cref{lemma: non-iid-items-probabilities-monotonicity}.    
    
\end{proof}

\begin{proof}[Proof of \Cref{lemma: region4-induce}]
First we observe that for $p\ge\frac{b}{b+a}$ $ q\le\frac{b-a}{b}$ the flow $x=(1-p)(1-q)-\frac{a}{b-a}(1-p)q$ is feasible since $0\le x\le (1-p)(1-q).$ Next, we show that according to \Cref{tab:thresholds} we can confirm that  
    \[
H_1(a,b)\ge0,\quad
H_2(b,a)=0,\quad
H_1(a,a)\ge0,\quad
H_2(a,a)\le 0,
\]

We start by showing $H_1(a,a)\ge0$. The flow $x=(1-p)(1-q)-\frac{a}{b-a}(1-p)q$ must be at least $1-p-\frac{a}{b-a}pq$. That implies $p\ge \frac{b}{b+a}$ matching the boundary of the region. Monotonicity of the virtual values (\Cref{lemma: non-iid-items-monotone}) implies that $H_1(a,b)\ge H_1(a,a)\ge0.$. Finally, for $H_2(a,a)\le 0$ holds when $x=(1-p)(1-q)-\frac{a}{b-a}(1-p)q \ge \frac{a}{b-a}pq-p(1-q)$ equivalently $q\le\frac{b-a}{b}$, matching again the boundary of the region. Applying \Cref{lemma: non-iid-items-payments} gives us the matching payments.

\end{proof}

\begin{proof}[Proof of \Cref{lemma: region4-BIC}]
In Region 4, the flow $x=(1-p)(1-q)-\frac{a}{b-a}(1-p)q$ is at least zero for $q\le\frac{b-a}{b}.$  Therefore, the mechanism falls under Variant~$I$. The BIC constraints according to \Cref{lemma: non-iid-items-BIC}
    \begin{align}    
    \qbb\ge\qab&\ge\qaa \label{eq:reg4-1}\\
    \qab + \paa &\ge \pab +\qaa \label{eq:reg4-2}\\
    \pbb+\qaa &\ge \qab + \paa\label{eq:reg4-3}\\
    \pbb &\ge \paa\label{eq:reg4-4}
    \end{align}

In this region, $\qaar{4} = 0$. Note that $\qabr{4}\le \qab$ (\Cref{eq: qab}), since we flip a biased coin before allocating uniformly at random.  By~\Cref{lemma: non-iid-items-probabilities-monotonicity}, we immediately have~\Cref{eq:reg4-1,eq:reg4-4}. By construction $\qabr{4} = \pabr{4}-\paar{4}$, satisfying~\Cref{eq:reg4-2} with equality. Plugging in the previous equality to~\Cref{eq:reg4-3} we get $\pbbr{3} \ge \qabr{3} + \paar{3} \implies \pbbr{3} \ge\pabr{3}$, which follows again by~\Cref{lemma: non-iid-items-probabilities-monotonicity}.    
    
\end{proof}

\begin{proof}[Proof of~\Cref{lemma: region5-induce}]
First we observe that for $p\ge\frac{b}{b+a}$ $ q\le\frac{b-a}{b}$ the flow $x=\frac{a}{b-a}pq-p(1-q)$ is feasible since $0\le x\le (1-p)(1-q).$ Next, we show that according to \Cref{tab:thresholds} we can confirm that  
    \[
H_1(a,b)\ge0,\quad
H_2(b,a)\ge0,\quad
H_1(a,a)=0,\quad
H_2(a,a)\le 0,
\]

We start by showing $H_1(a,a)\ge0$. The flow $x=\frac{a}{b-a}pq-p(1-q)$ must be at least $1-p-\frac{a}{b-a}pq$. That implies $pq\ge \frac{b-a}{b+a}$ matching the boundary of the region. Monotonicity of the virtual values (\Cref{lemma: non-iid-items-monotone}) implies that $H_1(a,b)\ge H_1(a,a)\ge0$ and  $H_2(b,a)\ge H_2(a,a)\ge0$ where the final inequality holds by construction. Applying \Cref{lemma: non-iid-items-payments} gives us the matching payments.

\end{proof}

\begin{proof}[Proof of \Cref{lemma: region5-BIC}]
In Region 5, the flow $x=\frac{a}{b-a}pq-p(1-q)$ is less than $(1-p)(1-q),$ for $\frac{1-q}{pq}>\frac{a}{b-a}$. Therefore, the mechanism falls under Variant~$I$. The BIC constraints according to \Cref{lemma: non-iid-items-BIC}
    \begin{align}    
    \qbb\ge\qab&\ge\qaa \label{eq:reg5-1}\\
    \qab + \paa &\ge \pab +\qaa \label{eq:reg5-2}\\
    \pbb+\qaa &\ge \qab + \paa\label{eq:reg5-3}\\
    \pbb &\ge \paa\label{eq:reg5-4}
    \end{align}

By construction, \Cref {eq:reg5-2} is satisfied with equality. ~\Cref{eq:reg5-1,eq:reg5-4} are direct applications of ~\Cref{lemma: non-iid-items-probabilities-monotonicity}. Finally, using the fact that $\qaar{5} = \qabr{5}-\pabr{5}+\paar{5},$ \Cref{eq:reg3-3} becomes $\pbbr{5} \ge\pabr{5}$ which follows again by~\Cref{lemma: non-iid-items-probabilities-monotonicity}.     
    
\end{proof}

\begin{proof}[Proof of \Cref{lemma: region6-BIC}]

In Region 6, the flow $x=(1-p)(1-q)$. Therefore, the mechanism falls under Variant~$II$. The BIC constraints according to \Cref{lemma: non-iid-items-BIC}(\Cref{eq: BIC-variantII})

\begin{align}
    \pbb\ge\pab&\ge\paa, \label{eq: reg6-1}\\
    \pab + \qaa &\ge \qab +\paa,\label{eq: reg6-2} \\
    \qbb+\paa &\ge \pab + \qaa,\label{eq: reg6-3}\\
    \qbb &\ge \qaa\label{eq: reg6-4}
\end{align}

In this region, $\qaar{6} = 0$. Hence~\Cref{eq: reg6-4} is trivially true. By definition, $\pab \ge \qab +\paa$ must hold to be in Region 6, since $\qaar{6} = 0$ implies that~\Cref{eq: reg6-2} must also hold. Due to~\Cref{lemma: non-iid-items-probabilities-monotonicity}, ~\Cref{eq: reg6-4}. Finally, we must show that $\qbb+\paa \ge \pab + \qaa$. By definition of~\Cref{mech: Region6}, $\qaar{6}=0$ $\pabr{6} = p^{n-1}\qbbr{6},$ Therefore, $\qbbr{6}\ge\pabr{5 }\ge \pabr{6}-\paar{6},$ which concludes the proof.   
    
\end{proof}

\begin{proof}[Proof of \Cref{lemma: equal-revenue}]
Recall that the expected revenue of a mechanism is $\mathrm{Rev} = \sum_{i\in [n]}\sum_{v_i\in\Vcal_i} \Pr[v_i]\cdot p_i(v).$ The profiles for each agent are $\Vcal_i=\{(b,b),(b,a),(a,b),(a,a)\}.$ Note that the mechanism is symmetric (i.e., treats all agents identically), we focus on the revenue generated by one agent 
\[\mathrm{Rev}_i = (1-p)(1-q)\cdot p(b,b) + (1-p)q\cdot p(b,a) + p(1-q)\cdot p(a,b) + pq\cdot p(a,a) \]

We are interested in computing the difference between the revenue of the flow-induced mechanism $\mathrm{Rev}_i^{(*)}$ and the modified hierarchy mechanism $\mathrm{Rev}_i^{(6)}.$ Between the two mechanisms $\pab$ and $\paa$ are different. Therefore, we get that 
\begin{align*}
    p^{(*)}(b,b) -p^{(6)}(b,b) = (b-a)(\pabr{6}-\pabr{*}),\\
    p^{(*)}(b,a) -p^{(6)}(b,a) = (b-a)(\paar{6}-\paar{*}),\\
    p^{(*)}(a,b) -p^{(6)}(a,b) = a(\pabr{*}-\pabr{6}),\\
    p^{(*)}(a,a) -p^{(6)}(a,a) = a(\paar{*}-\paar{6}).
\end{align*}

Combining the above, we get that

\begin{align*}
    \mathrm{Rev}_i^{(*)} - \mathrm{Rev}_i^{(6)} &= (1-p)(1-q)\cdot(p^{(*)}(b,b) -p^{(6)}(b,b)) + (1-p)q\cdot (p^{(*)}(b,a) -p^{(6)}(b,a) ) \\
    &\;\quad\quad+ p(1-q)\cdot ( p^{(*)}(a,b) -p^{(6)}(a,b)) + pq\cdot ( p^{(*)}(a,b) -p^{(6)}(a,b))\\
     &= (1-p)(1-q)(b-a)\bigl(\pabr{6}-\pabr{*}\bigr) + (1-p)q (b-a)\bigl(\paar{6}-\paar{*}\bigr) \\
    &\;\quad\quad+ p(1-q)a\bigl(\pabr{*}-\pabr{6}\bigr) + pq a\bigl(\paar{*}-\paar{6}\bigr)\\
    &= \bigl(\pabr{6}-\pabr{*}\bigr)  ((1-p)(1-q)(b-a)-ap(1-q)) \\
    &\;\quad\quad+ \bigl (\paar{6}-\paar{*}) \bigr)((b-a)(1-p)q - apq)\\
    &= \bigl(( a\pabr{6}-\pabr{*}\bigr)  (1-q)(b-a-pb) \\
    &\;\quad\quad+  \bigl(\paar{6}-\paar{*}\bigr)q(b-a-pb)\\
    &= (b-a-pb)\Bigl((1-q)\bigl(\pabr{6}-\pabr{*})  + q(\paar{6}-\paar{*}))\Bigr)\\
    &=(b-a-pb)\left( (1-q)\bigl(\pabr{6}-\pabr{*}\bigr) +q \bigl(\paar{6}-\paar{*}\bigr)\right)\\
    &= (b-a-pb)\left( (1-q)\left( \frac{1}{n}p^{n-1}\frac{1-q^n}{1-q} - \frac{1}{n}p^{n-1} \right) +q \left(\frac{1}{n}p^{n-1}q^{n-1}-\frac{1}{n}p^{n-1}\right)\right)\\
    &= \frac{1}{n}p^{n-1}(b-a-pb)(1-q^n -(1-q)+q^n-q)\\
    &=0
\end{align*}

\end{proof}

\begin{proof}[Proof of \Cref{lemma: region7-BIC}]

In Region 7, the flow $x=(1-p)(1-q)$. Therefore, the mechanism falls under Variant~$II$. The BIC constraints according to \Cref{lemma: non-iid-items-BIC}(\Cref{eq: BIC-variantII})

\begin{align}
    \pbb\ge\pab&\ge\paa, \label{eq: reg7-1}\\
    \pab + \qaa &\ge \qab +\paa,\label{eq: reg7-2} \\
    \qbb+\paa &\ge \pab + \qaa,\label{eq: reg7-3}\\
    \qbb &\ge \qaa\label{eq: reg7-4}
\end{align}

In this region, all the virtual values are positive. Due to~\Cref{lemma: non-iid-items-probabilities-monotonicity}, \Cref{eq: reg7-1,eq: reg7-4} directly follow. Recall that $\paar{7}=\qaar{7}$, \Cref{eq: reg7-2} and \Cref{eq: reg7-3} reduce to $\pab \ge \qab$ and $ \qbb\ge \pab $ respectively, which again is a direct application of~\Cref{lemma: non-iid-items-probabilities-monotonicity}.
    
\end{proof}

\end{document}